\definecolor{mygray}{gray}{0.5}
\newtheorem{theorem}{Theorem}
\newtheorem{case}{Case}
\newtheorem{case2}{Case}
\newtheorem{case3}{Case}
\definecolor{darkgreen}{rgb}{0,0.5,0}
\definecolor{brown}{rgb}{0.7,0.3,0}
\definecolor{darkblue}{rgb}{0,0,0.5}
\definecolor{darkred}{rgb}{0.5,0,0}
\definecolor{mygray}{gray}{0.5}
\newcommand{\aviv}[1]{\textit{\textbf{\textcolor{brown}{Aviv: #1}}}}
\newcommand{\laurent}[1]{\textit{\textbf{\textcolor{red}{Laurent: #1}}}}
\newcommand{\maria}[1]{\textit{\textbf{\textcolor{darkgreen}{Maria: #1}}}}
\newcommand{\myitem}[1]{\vspace*{0.07in}\noindent\textbf{#1}}
\newcommand{\remove}[1]{}
\newcounter{reviewcounter}
\newif\ifHideComments
	\renewcommand{\aviv}[1]{}
	\renewcommand{\laurent}[1]{}
	\renewcommand{\maria}[1]{}
\begin{document}

\title{\huge Hijacking Bitcoin: Routing Attacks on Cryptocurrencies \\[.35em] \small \texttt{\url{https://btc-hijack.ethz.ch}}\vspace{-.35em}}
\author{\IEEEauthorblockN{Maria Apostolaki}
\IEEEauthorblockA{ETH Z\"urich\\
apmaria@ethz.ch}
\and
\IEEEauthorblockN{Aviv Zohar}
\IEEEauthorblockA{The Hebrew University\\
avivz@cs.huji.ac.il}
\and
\IEEEauthorblockN{Laurent Vanbever}
\IEEEauthorblockA{ETH Z\"urich\\
lvanbever@ethz.ch}}


\maketitle

\begin{abstract}
As the most successful cryptocurrency to date, Bitcoin constitutes a target of choice for attackers. While many attack vectors have already been uncovered, one important vector has been left out though: attacking the currency via the
Internet routing infrastructure itself. Indeed, by manipulating routing
advertisements (BGP hijacks) or by naturally intercepting traffic,
Autonomous Systems (ASes) can intercept and manipulate a large fraction of Bitcoin
traffic. 

This paper presents the first taxonomy of routing attacks and their impact on
Bitcoin, considering both small-scale attacks, targeting individual nodes, and
large-scale attacks, targeting the network as a whole. While challenging, we
show that two key properties make routing attacks practical: \emph{(i)} the
efficiency of routing manipulation; and \emph{(ii)} the significant
centralization of Bitcoin in terms of mining and routing. Specifically, we
find that any network attacker can hijack few (<100) BGP prefixes to isolate
{\raise.17ex\hbox{$\scriptstyle\sim$}}50\% of the mining power---even when
considering that mining pools are heavily multi-homed. We also show that
on-path network attackers can considerably slow down block propagation by interfering with few key Bitcoin messages.

We demonstrate the feasibility of each attack against the deployed Bitcoin
software. We also quantify their effectiveness on the current Bitcoin topology
using data collected from a Bitcoin supernode combined with BGP routing data.

The potential damage to Bitcoin is worrying. By isolating parts of the network
or delaying block propagation, attackers can cause a significant amount of
mining power to be wasted, leading to revenue losses and enabling a wide range
of exploits such as double spending. To prevent such effects in practice, we provide both short and long-term countermeasures, some of which can be
deployed immediately.

\end{abstract}

\remove{We demonstrate the feasibility of each attack against the deployed Bitcoin software and quantify both the node-level and network-wide impact using data collected from a bitcoin supernode combined with BGP routing data. The potential damage to Bitcoin is worrying. }
\section{Introduction}
\label{sec:introduction}

With more than 16 million bitcoins valued at {\raise.17ex\hbox{$\scriptstyle\sim$}}17 billion USD and up to
300{,}000 daily transactions (March 2017), Bitcoin is the most successful
cryptocurrency to date. Remarkably, Bitcoin has achieved this as an open and fully
decentralized system. Instead of relying on a central entity, Bitcoin nodes build a large overlay network between them and use consensus to
agree on a set of transactions recorded within Bitcoin's core data structure: the
blockchain. Anyone is free to participate in the network which boasts
more than 6{,}000 nodes~\cite{bitnodes21} and can usually connect to
any other node. 

Given the amount of money at stake, Bitcoin is an obvious target
for attackers. Indeed, numerous attacks have been described targeting different aspects of
the system including: double spending~\cite{rosenfeld2014analysis}, eclipsing~\cite{heilman2015eclipse}, transaction malleability~\cite{decker2014bitcoin}, or attacks targeting mining~\cite{eyal2014majority,DBLP:journals/corr/SapirshteinSZ15,DBLP:journals/iacr/NayakKMS15} and mining pools~\cite{eyal2015miner}. 

One important attack vector has been overlooked though: attacking
Bitcoin via the Internet infrastructure using \emph{routing attacks}. As
Bitcoin connections are routed over the Internet---in clear text and without
integrity checks---any third-party on the forwarding path can eavesdrop, drop, modify, inject, or delay
Bitcoin messages such as blocks or transactions. Detecting such attackers is
challenging as it requires inferring the exact forwarding paths taken by the Bitcoin traffic using measurements (e.g., traceroute) or routing data (BGP
announcements), both of which can be forged~\cite{defcon_mitm}. Even ignoring
detectability, mitigating network attacks is also hard as it is essentially a
human-driven process consisting of filtering, routing around or disconnecting
the attacker. As an illustration, it took Youtube close to 3 hours to locate
and resolve rogue BGP announcements targeting its infrastructure in
2008~\cite{dyn_pakistani_bgp}. More recent examples of routing attacks such as
\cite{hijack_november_15} (resp. \cite{hijack_june_15}) took 9 (resp. 2) hours
to resolve in November (resp. June) 2015.


One of the reasons why routing attacks have been overlooked in Bitcoin is that they
are often considered too challenging to be practical. Indeed, perturbing a
vast peer-to-peer network which uses random flooding is hard as an attacker
would have to intercept many connections to have any impact. 
Yet, two key characteristics of the Internet's infrastructure make routing
attacks against Bitcoin possible: \emph{(i)} the
efficiency of routing manipulation (BGP hijacks); and \emph{(ii)} the centralization 
of Bitcoin from the routing perspective.
First, individuals, located anywhere on the Internet, can manipulate
routing to intercept all the connections to not only
one, but many Bitcoin nodes. As we show in this paper, these routing manipulations are
prevalent today and do divert Bitcoin traffic. Second, few ASes host
most of the nodes and mining power, while others intercept a considerable fraction of the connections.

\remove{ 
 Actually, even if Bitcoin
traffic was encrypted~\cite{bitcoin_bip_151}, a network-level adversary could
still drop packets to bias or disrupt connectivity. 
}

\remove{

This argument could explain
why routing attacks have been overlooked. However, given the centralization of
the Bitcoin network together with the ability of individuals to attract the
traffic of any node via BGP manipulation (BGP hijack), routing attacks have the
potential to be very disruptive.

attacks practical: \emph{(i)} routing manipulation (BGP hijacks); and
\emph{(ii)} traffic concentration in the Internet core. \maria{that does not sound btc specific} \emph{First},
individuals, located anywhere in the Internet, can manipulate routing to
systematically intercept all the Bitcoin connections to not only one but many
Bitcoin nodes. As we show in this paper through measurements, these routing
manipulations are already prevalent today and do end up diverting Bitcoin
traffic. \emph{Second}, large Autonomous Systems (ASes) such as Internet
Service Providers (ISPs) or Internet Exchange Points (IXPs) are
\emph{naturally} crossed by many connections, potentially including a lot of
Bitcoin traffic.
}
\remove{
As an illustration, Fig.~\ref{fig:hijack_number} illustrates the number of BGP
hijacks we accounted by processing more than \emph{4 billions} BGP updates over
6 months\footnote{We consider an update for a prefix $p$ as a hijack if the origin AS differs from the origin AS seen during the previous month. To avoid false positives, we do not consider a prefixes which have seen multiple origin ASes during the previous month.
We account for one hijack per prefix per origin: if AS $X$
hijacks the prefix $p$ twice over a day, it only accounts for one hijack.}.
We see that there are \emph{hundreds of thousands} of hijack events \emph{each
month}. While most of these hijacks involve a single IP prefix, large hijacks
involving between 300 and 30,000 prefixes are seen every month (right axis).
Fig~\ref{fig:hijack_size} depicts the number of Bitcoin nodes impacted by these
hijacks and for which traffic was diverted. Each month, at least 100 Bitcoin
nodes are victim of hijacks. In November 2015, 447 distinct
({\raise.17ex\hbox{$\scriptstyle\sim$}}7.8\% of the entire Bitcoin network)
ended up hijacked at least once. 

While frequent, BGP hijacks are hard to diagnose and slow to resolve. As BGP
securities extensions (BGPSec~\cite{bgpsec}, RPKI~\cite{rfc6480}) are seldom
deployed~\cite{lychev_bgpsec_deployment}, operators rely on simple monitoring
systems that dynamically analyze and report rogue BGP announcements
(e.g.,~\cite{yan2009bgpmon,chi2008cyclops}). When detected, solving an hijack
takes hours as it is a human-driven process consisting in filtering or
disconnecting the attacker. As an illustration, it took Youtube close to 3
hours to locate and resolve a hijack of its prefixes in
2008~\cite{dyn_pakistani_bgp}. More recent examples include a
16{,}123~\cite{hijack_november_15} (resp. 179{,}000~\cite{hijack_june_15})
prefixes hijack which took 9 (resp. 2) hours to be solved in November (resp.
June) 2015.
}

\myitem{This work} In this paper, we present the first taxonomy of routing
attacks on Bitcoin, a comprehensive study of their impact, and a list of deployable countermeasures. We consider two
general attacks that AS-level attackers can perform.
First, we evaluate the ability of attackers to isolate a set of nodes from the Bitcoin network, effectively partitioning it. Second, we evaluate the impact of delaying block
propagation by manipulating a small number of key Bitcoin messages. For both
exploits, we consider \emph{node-level} attacks along with more challenging,
but also more disruptive, \emph{network-wide} attacks.

\noindent\textbf{Partitioning attacks} The goal of a partition attack is to \emph{completely} disconnect a set of nodes from the network. This requires the attacker to divert and cut all the connections between the set of nodes and the rest of the network. 

We describe a complete attack procedure in which an attacker can
\emph{verifiably} isolate a selected set of nodes using BGP hijacks. Our
procedure is practical and only requires basic knowledge of the Bitcoin
topology, namely the IP addresses of the nodes the attacker wants to isolate.
Due to the complexity of the Bitcoin network (e.g. multi-homed pools, and
secret peering agreements between pools), the initial isolated set might contain
nodes that leak information from and to the rest of the network. We explain how
the attacker can identify and remove these leakage points until the partition
is complete.

\myitem{Delay attacks} The goal of a delay attack is to slow down the
propagation of blocks towards or from a given set of nodes. Unlike partition
attacks, which require a perfect cut, delay attacks
are effective even when a subset of the connections are intercepted. As
such, attackers can perform delay attacks on connections they are naturally intercepting, making them even harder to detect.

We again describe a complete attack procedure an attacker can run on
intercepted Bitcoin traffic so that the delivery of blocks
is delayed by up to 20 minutes. The procedure consists of modifying
 few key Bitcoin messages while making sure that the connections are not
disrupted. 

\remove{
\myitem{Challenges} Large mining pools play a
critical role in the Bitcoin network. We therefore investigate the influence of
their connectivity (i.e., how much they are multihomed) on the effectiveness of
routing attacks. We show that isolation attacks do little dammage if the
partition is imperfect (e.g., due to non-intercepted gateways connections).
However, we describe ways in which strategic AS-level adversaries can actually
discover hidden gateways preventing the partition to be formed and hijack them.
Regarding delay attacks, we show that network-wide attacks do indeed become too
difficult for attackers (if powerful ones like country-wide coalitions) if
pools are multi-homed.
}

\myitem{Practicality} We showcase the practicality of each attack and evaluate
their network-wide impact using a comprehensive set of measurements,
simulations and experiments.

Regarding partitioning attacks, we show that hijacks are effective in diverting
Bitcoin traffic by performing a hijack in the wild against our own nodes. We
find that it takes less than 90 seconds to re-route all traffic flows through
the attacker once a hijack is initiated. We also show that \emph{any AS} in the
Internet hijacking \emph{less than 100} prefixes can isolate up to 47\% of the mining power, and this,
even when considering that mining pools are multi-homed. Hijacks involving
that many prefixes are frequent and already divert Bitcoin traffic.

Regarding delay attacks, we show that an attacker intercepting 50\% of a
node connections can leave it uninformed of the most recent Bitcoin blocks
{\raise.17ex\hbox{$\scriptstyle\sim$}}60\% of the time. 
We also show that intercepting a considerable percentage of Bitcoin
traffic is practical due to the centralization of Bitcoin at the routing level: one AS, namely Hurricane Electric, can \emph{naturally} 
intercept more than 30\% of \emph{all} Bitcoin connections.

\remove{
We also show that intercepting large amounts of Bitcoin
traffic is practical due to the important centralization of Bitcoin at the
the routing and mining level: 13 ASes host over 30\%
of the Bitcoin network. 
Yet, we also show that network-wide delay attacks, aiming at slowing down the
entire Bitcoin network, are not practical even for powerful
attackers.
}


\myitem{Impact on Bitcoin} The damages caused to Bitcoin in case of a
successful routing attack can be substantial. By isolating a part of the
network or delaying the propagation of blocks, attackers can force nodes to
waste part of their mining power as some of the blocks they create are
discarded. Partitioning also enables the attacker to filter transactions that
clients try to include in the blockchain. In both cases, miners lose potential
revenue from mining and render the network more susceptible to double spending
attacks as well as to selfish mining attacks~\cite{eyal2014majority}. Nodes representing merchants, exchanges and other large entities are thus unable to secure their transactions, or may not be able to broadcast
them to the network to begin with. The resulting longer-term loss of trust in
Bitcoin security may trigger a loss of value for Bitcoin. Attackers may even
short Bitcoin and gain from the resulting devaluation~\cite{kroll2013economics}.

Our work underscores the importance of proposed modifications which
argue for encrypting Bitcoin traffic~\cite{bitcoin_bip_151} or traffic exchanged among miners~\cite{lef}. Yet, we
stress that not all routing attacks will be solved by such
measures since attackers can still disrupt connectivity and isolate nodes by dropping Bitcoin packets instead of modifying them.

\myitem{Contributions} Our main contributions are:\footnote{Our software, measurements and scripts can be found online at \\ \texttt{\url{https://btc-hijack.ethz.ch}}}

\begin{itemize}[leftmargin=*]
\setlength{\itemsep}{1pt}
\item The first comprehensive study of network attacks on Bitcoin
(Section~\ref{sec:overview}) ranging from attacks targeting a single node to attacks affecting the network as a whole.
\item A measurement study of the routing properties of Bitcoin (Section~\ref{sec:topology}). We show that Bitcoin is highly centralized: few ASes host most of the nodes while others intercept a considerable fraction of the connections. \remove{({\raise.17ex\hbox{$\scriptstyle\sim$}}30\%)}
\item A thorough evaluation of the practicality of routing attacks
(partitioning and delay attacks). Our evaluation is based on an extensive set of
measurements, large-scale simulations and experiments on the actual Bitcoin
software and network.
\remove{
\item An in-depth evaluation on persistence  of partition attacks
using a large testbed of 1{,}050 Bitcoin nodes (Section~\ref{sec:eval_partition}).
We show that partitions are not persistent after the attack due to the natural churn of the network.
}
\remove{
\item A fully-working implementation of a Bitcoin interception software
(Section~\ref{sec:delay_results}) that AS-level adversaries can use to delay block propagation; along with the implementation of a realistic
event-driven Bitcoin simulator to measure the network-wide effect of delay
attacks. We show that delay attacks are very effective for small subset of nodes, but not realistic at the network-wide level if pools are multi-homed enough.
}
\item A comprehensive set of countermeasures (Section~\ref{sec:countermeasures}), which can benefit even early adopters.

\end{itemize}

While our measurements are Bitcoin-specific, they carry important
lessons for other cryptocurrencies which rely on a
randomly structured peer-to-peer network atop of the Internet, such as Ethereum~\cite{ethereum}, Litecoin~\cite{litcoin}, and ZCash~\cite{zcash,sasson2014zerocash}.

\section{Background}
\label{sec:background}

\subsection{BGP}

\myitem{Protocol} BGP~\cite{rfc1771} is the de-facto routing protocol that
regulates how IP packets are forwarded in the Internet. Routes associated with
different IP prefixes are exchanged between neighboring networks or Autonomous
Systems (AS). For any given IP prefix, one AS (the origin) is responsible for
the original route advertisement, which is then propagated AS-by-AS until all
ASes learn about it. Routers then set their next hop and pick one of the
available routes offered by their neighbors (this is done independently for
each destination).

In BGP, the validity of route announcements is not checked. In effect, this
means that any AS can inject forged information on how to reach one or
more IP prefixes, leading other ASes to send traffic to the wrong location.
These rogue advertisements, known as BGP ``hijacks'', are a very effective
way for an attacker to intercept traffic en route to a legitimate
destination.

\myitem{BGP hijack} An attacker, who wishes to attract all the traffic for a legitimate prefix $p$ (say,
100.0.0.0/16) by hijacking could either: \emph{(i)} announce $p$; or \emph{(ii)} announce a more-specific (longer) prefix of $p$. In the first
case, the attacker's route will be in direct competition with the legitimate
route. As BGP routers prefer shorter paths, the attacker will, on average,
attract 50\% of the traffic~\cite{goldberg_how_secure_are_interdomain_routing_protocols}. In the
second case, the attacker will attract all the traffic (originated anywhere on the Internet) addressed to the destination as Internet routers forward traffic according to the longest-match entry. 
Note that traffic internal to an AS cannot be diverted via hijacking as it does not get routed by BGP but by internal routing protocols (e.g., OSPF).

For instance, in order to attract all traffic destined to
$p$, the attacker could advertise 100.0.0.0/17 \emph{and} 100.0.128.0/17.
Routers in the entire Internet would then start forwarding any traffic destined to the original /16 prefix according to the two covering /17s originated
by the adversary. Advertising more-specific prefixes has its limits though as BGP operators will often
filter prefixes longer than /24~\cite{Karlin:2006:PGB:1317535.1318378}. Yet, we show
that the vast majority of Bitcoin nodes is hosted in shorter prefixes
(Section~\ref{sec:topology}) and is thus susceptible to hijacking.

By default, hijacking a prefix creates a black hole at the attacker's location. However, the attacker can turn a hijack into an
\emph{interception} attack simply by making sure she leaves at least one path untouched to the
destination~\cite{defcon_mitm, goldberg_how_secure_are_interdomain_routing_protocols}.

\subsection{Bitcoin}


\myitem{Transactions} Transaction validation requires nodes to be aware of the
ownership of funds and the balance of each Bitcoin address. All this
information can be learned from the Bitcoin blockchain: an authenticated data
structure that effectively forms a ledger of all accepted transactions.
Bitcoin main innovation lies in its ability to synchronize the blockchain in
an asynchronous way, with attackers possibly attempting to disrupt the
process. Synchronization is crucial: conflicting transactions attempting to
transfer the exact same bitcoins to different destinations may otherwise be
approved by miners that are unaware of each other.


\myitem{Block creation} Bitcoin's blockchain is comprised of blocks, batches of transactions, that are
appended to the ledger serially. Each block contains a cryptographic hash of
its predecessor, which identifies its place in the chain, and a proof-of-work.
The proof-of-work serves to make block creation difficult and reduces the
conflicts in the system. Conflicts, which take the form of blocks that extend
the same parent, represent alternative sets of accepted transactions. Nodes
converge to a single agreed version by selecting the chain containing the
highest amount of computational work as the valid version (usually the longest
chain). The proof-of-work also serves to limit the ability of
attackers to subvert the system: they cannot easily create many blocks, which
would potentially allow them to create a longer alternative chain that will
be adopted by nodes and thus reverse the transfer of funds (double spend).

The difficulty of block creation is set so that one block is created in the
network every 10 minutes on average which is designed to allow
sufficient time for blocks to propagate through the network. However, if delays
are high compared to the block creation rate, many forks occur in the chain as
blocks are created in parallel. In this case, the rate of discarded blocks (known as the
\emph{orphan rate} or the \emph{fork rate}) increases and the security of the
protocol deteriorates~\cite{decker2013information,garay2015bitcoin,sompolinsky2015secure}. Newly
created blocks are propagated through the network using a gossip protocol. In
addition to the propagation of blocks, nodes also propagate transactions
between them that await inclusion in the chain by whichever node creates
the next block.

\myitem{Network formation} Bitcoin acts as a
peer-to-peer network with each node maintaining a list of IP addresses of
potential peers. The list is bootstrapped via a DNS server, and additional
addresses are exchanged between peers. By default, each node randomly initiates 8 \emph{unencrypted} TCP connections to peers in
different /16 prefixes. Nodes additionally accept connections initiated by
others (by default on port 8333). The total number of connections nodes
can make is 125 by default.

Nodes continually listen to block announcements which are sent via \textsf{INV} messages containing the hash of the
announced block. If a node determines that it does not hold a newly announced block, it sends a \textsf{GETDATA} message to a \emph{single}
neighbor. The peer then responds by sending the requested information in a
\textsf{BLOCK} message. Blocks that are requested and do not arrive within 20
minutes trigger the disconnection of the peer and are requested from another.
Transaction propagation occurs with a similar sequence of \textsf{INV},
\textsf{GETDATA}, and \textsf{TX} messages in which nodes announce, request,
and share transactions that have not yet been included in the blockchain.

\myitem{Mining pools}
Mining pools represent groups of miners that divide block creation rewards between them in order to lower the high
economic risk associated with infrequent (but high) payments. They usually operate using the Stratum protocol \cite{bitcoin:book}. The
pool server is connected to a bitcoind node that acts as a gateway to the Bitcoin network. The node collects
recent information regarding newly transmitted transactions and newly built blocks which are then used to construct a
new block template. The template header is then sent via the Stratum server to the miners who attempt to complete
it to a valid block. This is done by trying different values of the nonce field in the header. 
If the block is completed, the result is sent back to the Stratum server,
which then uses the gateway node to publish the newly formed block to the network.

\myitem{Multi-homing} Mining pools often use multiple gateways hosted by diffrent Internet
Service Providers. We refer to the number of different ISPs a
pool has as its multi-homing degree. 

\section{Routing attacks on Bitcoin}
\label{sec:overview}

\begin{figure}[t]
 \centering
 \includegraphics[width=\columnwidth]{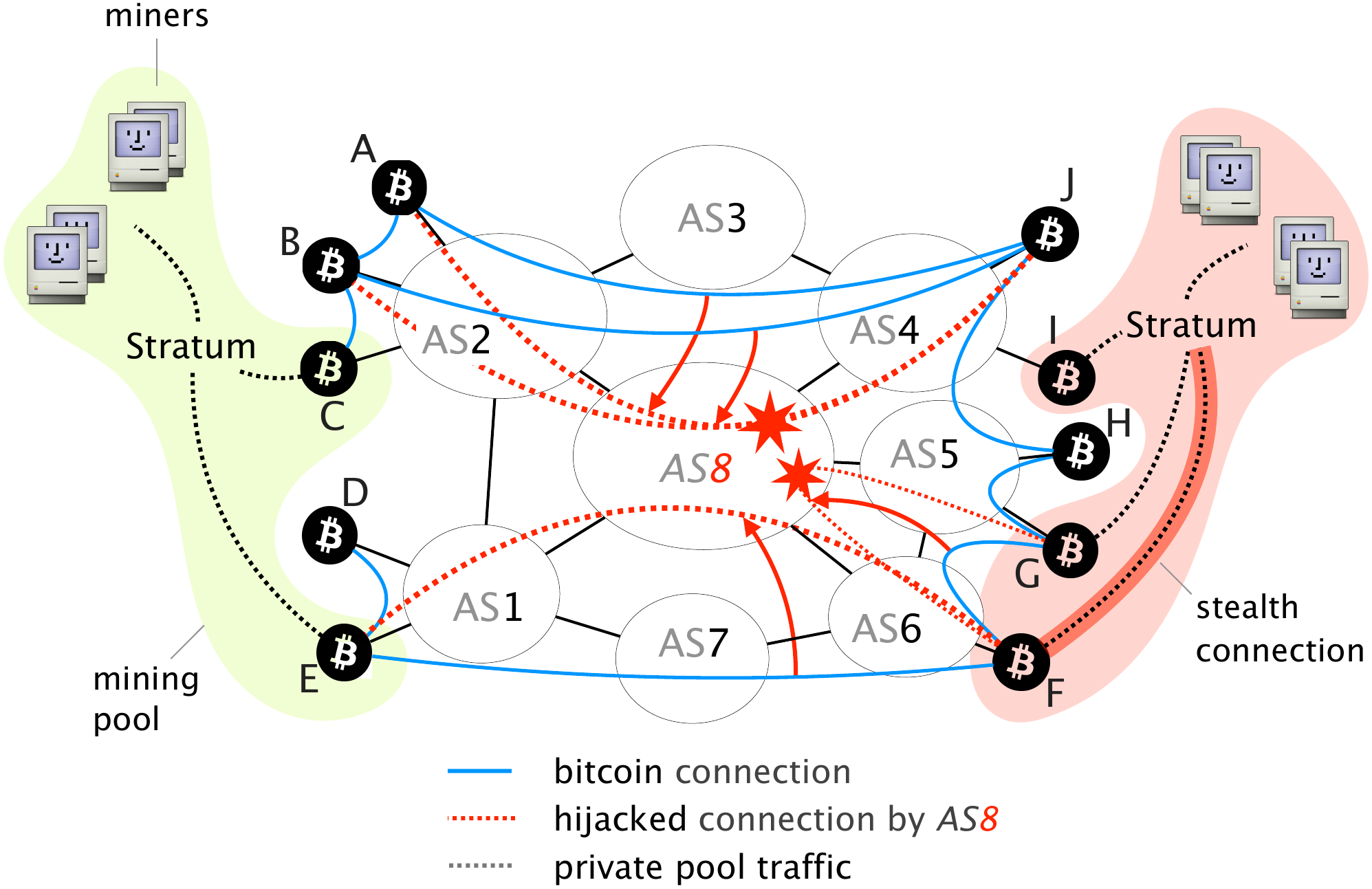}
 \caption{Illustration of how an AS-level adversary (AS8) can intercept Bitcoin traffic by hijacking prefixes to isolate the set of nodes $P = (A,
B, C, D, E, F)$.}
 \label{fig:overview_partition}
\end{figure}

In this section, we give an overview of the two routing attacks we
describe in this paper: \emph{(i)} partitioning the Bitcoin network (Section~\ref{ssec:partition_overview}); and
\emph{(ii)} delaying the propagation of blocks. For each attack, we briefly
describe its effectiveness and challenges as well as its impact on the Bitcoin
ecosystem (Section~\ref{ssec:delay_overview}).

\remove{
\begin{figure}[t]
 \centering
 \includegraphics[width=\columnwidth]{figures/partition}
 \caption{Illustration of how an AS-level adversary (AS8) can intercept Bitcoin traffic by hijacking prefixes or leveraging natural connectivity and then attacking either specific nodes of the network as a whole. \laurent{Update caption}}
 \label{fig:partition_sumary}
\end{figure}
}

\subsection{Partitioning the Bitcoin Network}
\label{ssec:partition_overview}

In this attack, an AS-level adversary seeks to isolate a set of nodes $P$ from
the rest of the network, effectively partitioning the Bitcoin network into two
disjoint components. The actual content of $P$ depends on the attacker's
objectives and can range from one or few merchant nodes, to a set of nodes holding a considerable percentage of the total mining power.

\myitem{Attack} The attacker first diverts the traffic destined to nodes in
$P$ by hijacking the most-specific prefixes hosting each of the IP address. Once on-path, the attacker intercepts the Bitcoin traffic
(e.g., based on the TCP ports) and identifies whether the
corresponding connections cross the partition she tries to create. If so, the attacker drops the packets. If not, meaning the connections are contained within $P$, she monitors the exchanged Bitcoin messages so as to detect ``leakage points''. Leakage points are nodes currently within $P$, which maintain connections with nodes outside of $P$, that the attacker cannot intercept, namely ``stealth'' connections. The attacker can detect these nodes automatically and isolate them from others in $P$ (Section~\ref{sec:partition}). Eventually, the attacker isolates the maximal set of nodes in $P$ that can be isolated.

\myitem{Example} We illustrate the partition attack on the simple network in Fig.~\ref{fig:overview_partition} that is composed of 8 ASes, some of which host Bitcoin nodes. Two mining
pools are depicted as a green (left) and a red (right) region. Both pools
are multi-homed and have gateways in different ASes. For instance, the
red (right) pool has gateways hosted in AS4, AS5, and AS6. We denote the
initial Bitcoin connections with blue lines, and those that have been diverted via hijacking with red lines. Dashed black lines
represent private connections within the pools. Any AS on the path of a connection can intercept it.

Consider an attack by AS8 that is meant to isolate the set of nodes $P = (A,
B, C, D, E, F)$. First, it hijacks the prefixes
advertised by AS1, AS2 and AS6, as they host nodes within $P$,
effectively attracting the traffic destined to them. Next, AS8 drops all
connections crossing the partition: i.e., $(A,J)$, $(B,J)$ and $(F,G)$. 

Observe that node $F$ is within the isolated set $P$, but is also a gateway of the red pool with which $F$ most likely communicates. This connection may not be based on the Bitcoin protocol
and thus it cannot be intercepted (at least, not easily). As such, even if the attacker
drops all the Bitcoin connections she intercepts, node $F$ may still learn about transactions
and blocks produced on the other side and might leak this
information within $P$. Isolating $P$ as such is infeasible. However, AS8 can identify that node $F$ is the leakage point during the attack and exclude it
from $P$, essentially isolating $I'= (A, B, C, D, E)$ instead. This
$I'$ is actually the maximum subset of $P$ that can be isolated from the
Bitcoin network.


\myitem{Practicality} We extensively evaluate the practicality of isolating sets of nodes of various sizes
(Section~\ref{sec:eval_partition}). We briefly summarize our findings.
\emph{First}, we performed a real BGP hijack against our own Bitcoin nodes
and show that it takes less than 2 minutes for an attacker to divert Bitcoin
traffic. \emph{Second}, we estimated the number of prefixes to hijack so as to
isolate nodes with a given amount of mining power. We found that hijacking only 39
prefixes is enough to isolate a specific set of nodes which accounts for almost 50\% of the overall mining
power. Through a longitudinal analysis spanning over 6 months, we
found that much larger hijacks happen regularly and that some of them have
already impacted Bitcoin traffic. \emph{Third}, we show that, while effective, partitions do not last long after the attack stops: the two components of the partition quickly reconnect, owing to natural churn. Yet, it takes hours for the two components to be densely connected again.

\myitem{Impact} The impact of a partitioning attack depends on the number of
isolated nodes and how much mining power they have. Isolating a few nodes essentially constitutes a denial of service attack and renders them
vulnerable to 0-confirmation double spends. Disconnecting a considerable amount of mining power can lead to the creation of two
different versions of the blockchain. All blocks mined on the side with the
least mining power will be discarded and all included transactions are likely
to be reversed. Such an attack would cause revenue loss for the miners on the
side with least mining power and a prominent risk of double spends. The side
with the most mining power would also suffer from an increased risk of selfish
mining attacks by adversaries with mining power.

\begin{figure}[t]
 \centering
 \includegraphics[width=.75\columnwidth]{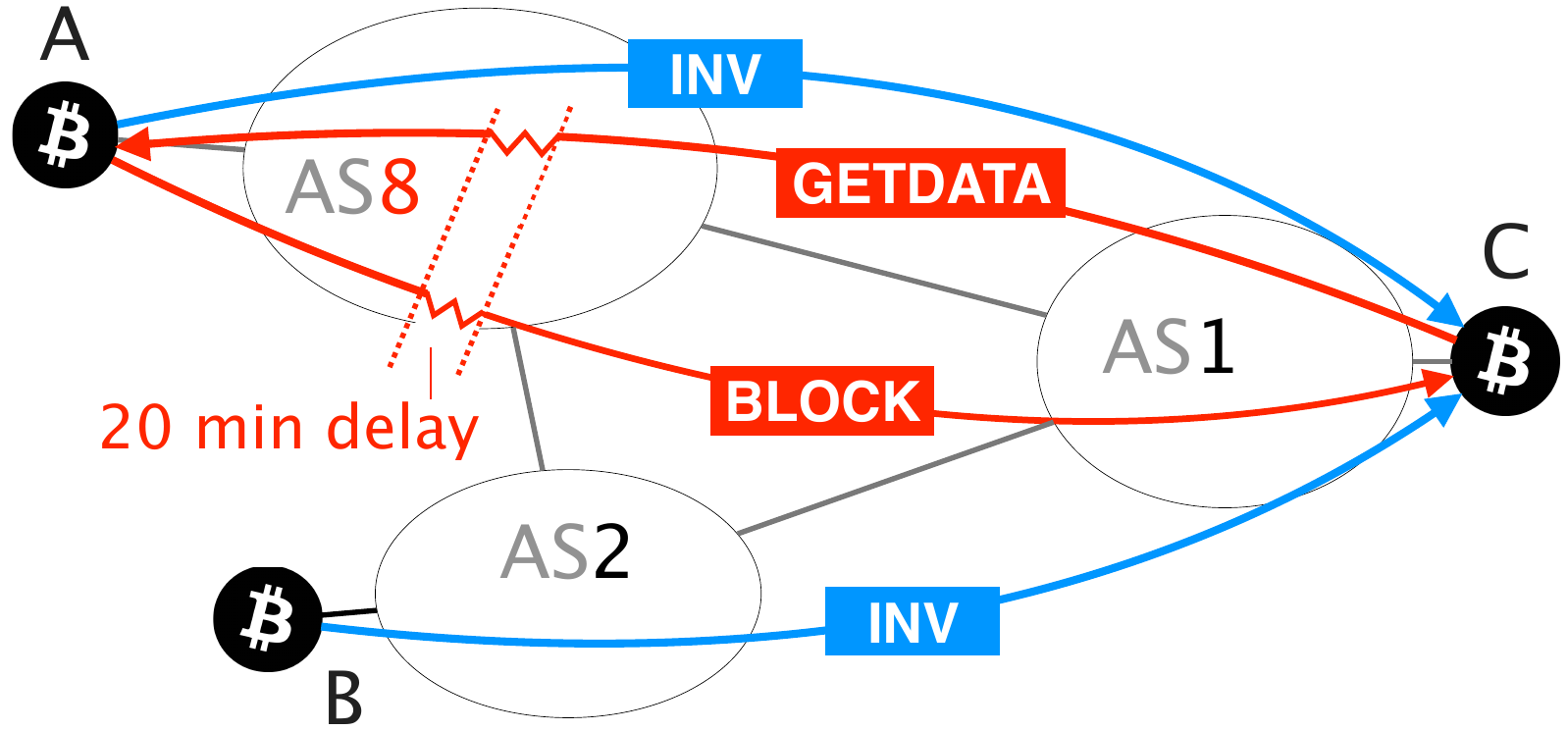}
 \caption{Illustration of how an AS-level adversary (AS8) which naturally intercepts a part of the traffic can delay the delivery of a block for 20 minutes to a victim node ($C$).}
 \label{fig:delay_overview}
\end{figure}

\subsection{Slowing down the Bitcoin network}
\label{ssec:delay_overview}

In a delay attack, the attacker's goal is to slow down the propagation of new
blocks sent to a set of Bitcoin nodes without disrupting their connections.
As with partitioning, the attack can be targeted, aimed at selected nodes,
or network-wide, aimed at disrupting the ability of the entire network to reach
consensus~\cite{decker2013information}. Unlike partitioning attacks though, an
attacker can delay the overall propagation of blocks towards a node even if
she intercepts a subset of its connections.

\myitem{Attack} Delay attacks leverage three key aspects of the Bitcoin
protocol: \emph{(i)} the asymmetry in the way Bitcoin nodes exchange blocks
using \textsf{INV}, \textsf{GETDATA}, and \textsf{BLOCK} messages
(Section~\ref{sec:background}); \emph{(ii)} the fact that these messages are
not protected against tampering (unencrypted, no secure integrity checks); and \emph{(iii)} the fact that a Bitcoin node
waits for 20 minutes after having requested a block from a peer before requesting
it again from another peer.
These protocol features enable an attacker intercepting even one direction of the victim's connection
to delay the propagation of a block, as long as this connection is traversed by either the actual
  \textsf{BLOCK} message or the corresponding \textsf{GETDATA}.

Specifically, if the attacker intercepts the traffic \emph{from} the victim,
she can modify the content of the \textsf{GETDATA} message the victim
uses to ask for blocks. By preserving the message length and structure and by
updating the TCP and Bitcoin checksums, the modified message is accepted by the
receiver and the connection stays alive. \remove{Next, the attacker can yet
again tamper with another \textsf{GETDATA} message to trigger a delayed (up to
20 minutes) delivery of the block to prevent the victim from disconnecting.} If
the attacker intercepts the traffic \emph{towards} a node, she can instead
corrupt the content of the \textsf{BLOCK} message such that the victim considers
it invalid. In both cases, the recipient of the blocks remains uninformed for 20
minutes. \remove{Modifying these messages is easy and could even be done in the
data plane (Section~\ref{sec:delay_results}).}

\remove{
Unlike partitioning attacks, an attacker can delay the overall propagation of
the blocks towards a node even if she intercepts a subset of the node
connections. Intercepting even one direction one of one connection is enough as
long as this connection sees some \textsf{BLOCK} or \textsf{GETDATA} of a block.
}

\myitem{Example} As an illustration, consider Fig.~\ref{fig:delay_overview}, and assume that AS8 is the attacker and $C$, the
victim. Suppose that $A$ and $B$ both advertise a block (say, block $X$)
to $C$ via an \textsf{INV} message and that, without loss of generality, the
message from $A$ arrives at $C$ first. $C$ will then send a \textsf{GETDATA}
message back to $A$ requesting block $X$ and start a 20 minute timeout count. By modifying the content of the \textsf{GETDATA} node $A$ receives, AS8 indirectly controls what node $A$ will send to node $C$.
This way the attacker can delay the delivery of the block by up to 20 minutes while avoiding detection and disconnection.
Alternatively, AS8 could modify the \textsf{BLOCK} message.

\myitem{Practicality} We verified the practicality of delay attacks by
implementing an interception software which we used
against our own Bitcoin nodes.
We show that intercepting 50\% of a node connections is enough to keep the node uninformed for 63\% of its uptime (Section~\ref{sec:delay_results}). 

We also evaluated the impact that ASes, which are naturally
traversed by a lot of Bitcoin traffic, could have on the network using a
scalable event-driven simulator. We found that due to the relatively high
degree of multi-homing that pools employ, only very powerful coalitions of network
attackers (e.g., all ASes based in the US) could perform a network-wide delay attack. Such an attack is thus unlikely to occur in practice.
\remove{Particularly, all US-based ASes can increase the fork rate to 30\% from the
normal 1\%.} 

\myitem{Impact} Similarly to partitioning attacks, the impact of a delay attack
depends on the number and type (e.g., pool gateway)
of impacted nodes. At the node-level, delay attacks can keep the
victim eclipsed, essentially performing a denial of service attack or rendering
it vulnerable to 0-confirmation double spends. If the node is a gateway of a
pool, such attacks can be used to engineer block races, and waste the mining power of the pool. Network-wide attacks
increase the fork rate and render the network vulnerable to other exploits. If a sufficient number of blocks are
discarded, miners revenue is decreased and the network is more vulnerable
to double spending. A slowdown of block transmission
can be used to launch selfish mining attacks by adversaries with mining power.



\remove{

In this attack the adversary aims at isolating a set of Bitcoin nodes from the rest of the network. The attacker might choose to isolate specific nodes, pools or a percentage of the overall mining power.

Intuitively the difficulty as well as the effectiveness of the attack grow with the number of ASes the isolated nodes are hosted as well as the complexity of their internal topology (degree of multi-homing, number of gateways).

Assume that the attacker has access to a BGP enabled network, a list of the IPs of the Bitcoin nodes as well as a mapping of gateways per pool. The attack is performed in two steps. Firstly, the attacker attracts all traffic destined to the Bitcoin nodes he wants to isolate. This is done by hijacking their prefixes. Secondly, he blocks the flow of information between the isolated part and the rest of the network, effectively forming a partition. This can be achieved either by destroying messages that contain specific information or by completely dropping connection crossing the partition. 

Three main factors make the attack challenging in practice: \emph{(i)} the attacker might have an incomplete or obsolete topology view; \emph{(ii)} mining pools are mostly multi-homed and their internal structure and/or protocols are private and complex; \emph{(iii)} hijacks cannot divert inra-AS traffic;
In Section ~\ref{sec:}, we show that the attacker can address these challenges by cumulatively constructing the partition. Particularly, the attacker needs to slowly increase the size of the isolated part, by increasing the number of hijacked prefixes, while ensuring that there is no leakage until the desired size is reached. Optionally, the attacker might use hijacking to augment his view of the topology prior to the actual attack ~\ref{sec:}. In section  ~\ref{sec:} we also discus how the attack could address additional challenges such as limiting the amount of prefixes, dealing with the diverted load.

As an illustration, consider that AS8 is an attacker in Fig.~\ref{fig:overview}. he should first find a feasible partition. For instance, a partition that leaves node G on one side and the rest of the network on the other is not feasible, because G belongs to a multi-homed pool and AS8 cannot intercept connections within the pool. Same holds for a partition that leaves node A alone on one side, as the attacker cannot intercept connection between A and B or A and C because they belong to the same AS. However, a partition with (A, B, C, D, E) in one side is feasible.  In this case the attacker needs to hijack all the prefixes advertised by AS1 and AS2. The attacker has to continually monitor the partition to detect leakage. Leakage can be caused if a new node hosted in AS2 but in a different prefix than that of nodes A, B, C, connects to the Bitcoin network and chooses to peer with any the isolated nodes (A, B, C, D, E) and any of the rest of the network.  In this case the attacker might need to hijack its prefix as well in order to enforce the partition. 

 We verified the practicality of the attack by hijacking and isolating our own Bitcoin nodes connected to the live network. We found that it only takes few seconds for a prefix hijack to propagate in the Internet and for the diverted traffic to reach the attacker. We describe this experiment in Section~\ref{sec:}We further evaluated the long term influence of a 50\%-50\% partition
after the hijack has been mitigated in a testbed of 1050 nodes running the
default bitcoind (Section~\ref{fig:partition}). We show that the Bitcoin network almost heals immediately from a partition attack in the sense that consensus is quickly reached again once the partition starts to leak, i.e. whenever (even few) connections start bridging the once-disconnected components. 
 
\subsection{Attack \#2: Delay Node traffic}
 In this attack, the attacker aims at delaying the propagation of blocks to a set of nodes of the network as a whole.
To do that, the attacker manipulates the exchanged Bitcoin messages. A 20 minutes delay of a block delivery is possible because of the way Bitcoin nodes request blocks from their peers to delay block delivery for 20 minutes by .

The attacker is assumed to be naturally traversed by either some of the Bitcoin traffic of the victim set of nodes or a considerable percentage of Bitcoin traffic in general.
This assumption implies that the attacker is most likely either the direct provider of the victim nodes or a large and well-established AS.
As such, the main challenge of the attacker is to avoid detection as that would severely harm its reputation. To address this challenge, the attacker needs to keep the packet manipulation opaque to both TCP and Bitcoin protocols.

 As an illustration, AS3 can perform a delay attack against node J even though it does not intercept all of its connections. If J receives an \textsf{INV} message for a new block from nodes A
B and H and either A or B advertised the block before H then AS3 can delay its
delivery for 20 minutes, effectively eclipsing J. In practice, modifying the packets on-the-fly is challenging for the attacker as it must
ensure it goes unnoticed by the recipient. We explain how this can be done in Section \ref{}.

We verified the
practicality of the attack by implementing a complete prototype which we use to
attack our own Bitcoin nodes and found that an attacker can eclipse a victim
for 63\% of its uptime even when intercepting half of its connections (see
Section~\ref{sec:delay}).
We further evaluated the effectiveness of delay attacks when performed
network-wide using a realistic simulator. We found that the effectiveness of
delay attacks decreases with the degree of multi-homing adopted by the mining
pools.



\subsection{Impact:}
\label{ssec:summary_impact}

The impact of both attacks varies with respect to the number and type (regular nodes or gateway of pool) of the impacted Bitcoin nodes. At the node-level such attacks can keep the victim eclipsed, essentially performing a denial of service attack or rendering it vulnerable to 0-confirmation double spends. \rremove{ If the node is a gateway of a pool such attacks can be used to engineer block races.}
Network-wide attacks such as a isolating a considerable amount of mining power or sufficiently delaying the propagation of blocks to increase the fork rate renders the network extremely vulnerable to all sorts of exploits. As a sufficient number of blocks are discarded, the  miner's revenue is deducted and the network is more vulnerable to double spends. Especially in the case of the partition, all transactions included in the chain of the side with the least mining power are likely to be reversed. In the case of delaying, significant slowdown of block transmission can be used to launch selfish mining attacks by adversaries with mining power.
}

\section{Partitioning Bitcoin}
\label{sec:partition}


\begin{figure*}[t]
	\centering
	\begin{subfigure}[t]{.22\textwidth}
  		\includegraphics[width=\textwidth]{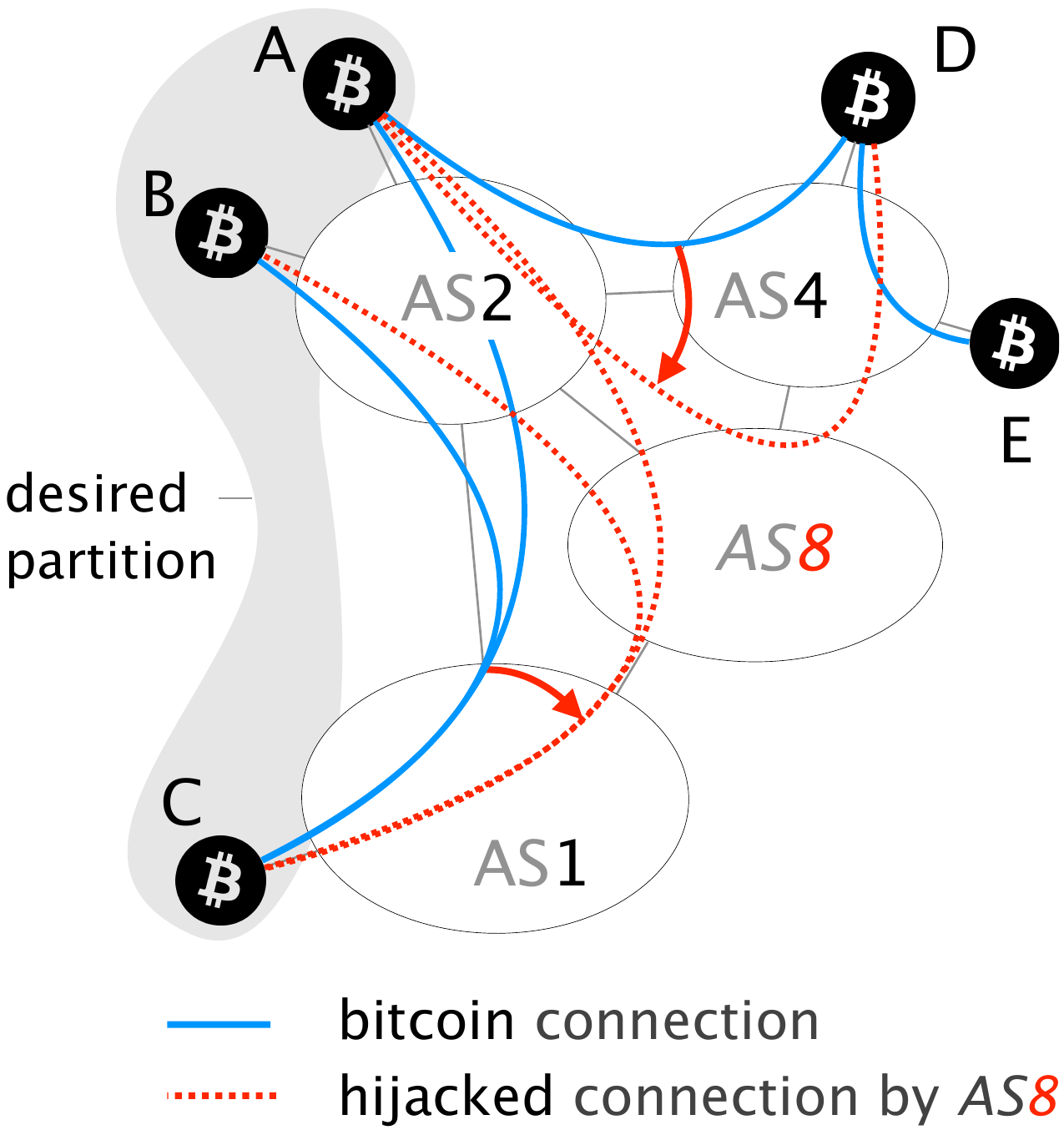}
		\caption{Feasible partition}
		\label{fig:visible}
	\end{subfigure}
	\qquad \qquad 
	\begin{subfigure}[t]{.22\textwidth}
		\includegraphics[width=\textwidth]{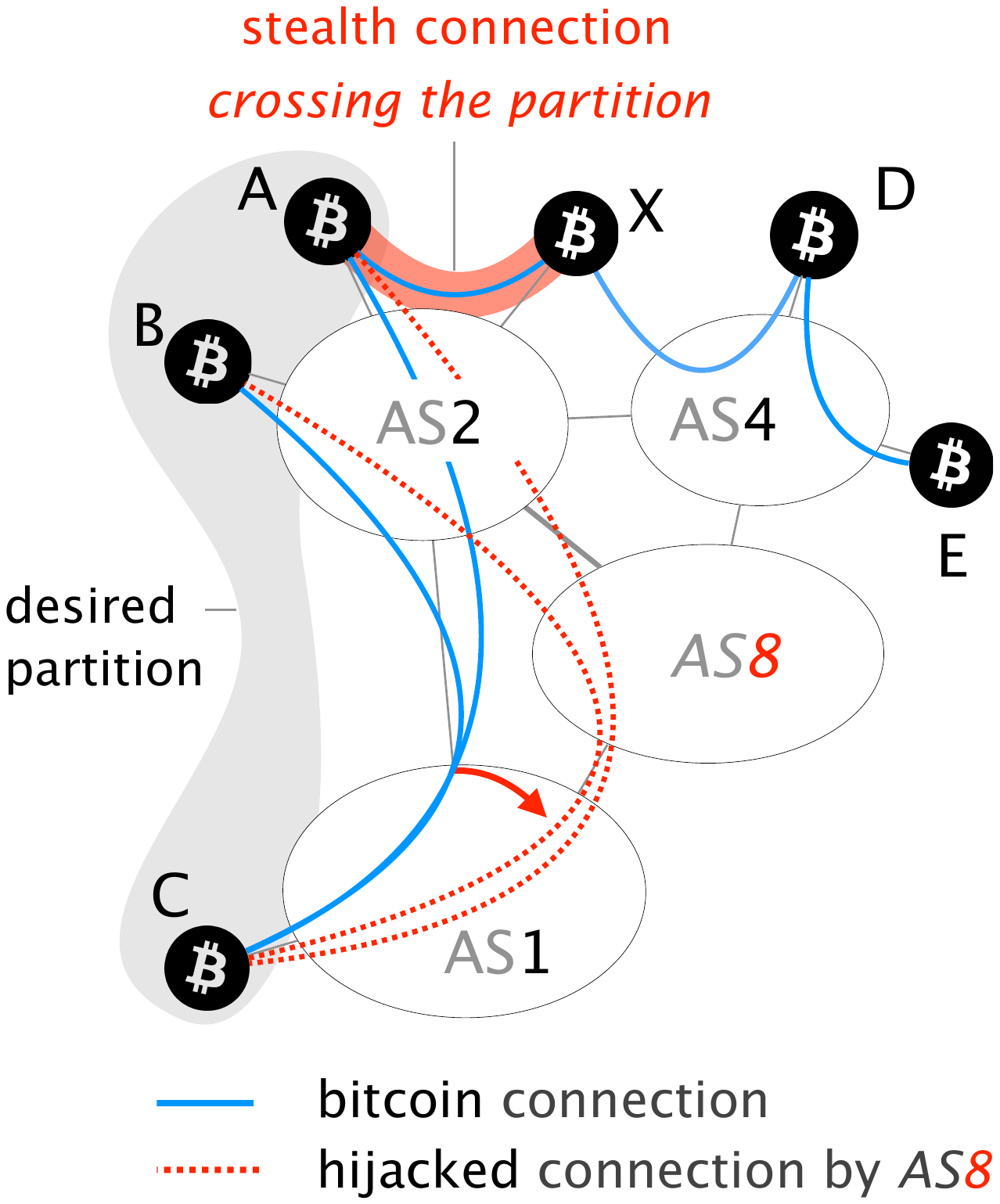}
		\caption{Infeasible partition because of intra-AS connections }
		\label{fig:intraAS}
	\end{subfigure}
	\qquad \qquad
	\begin{subfigure}[t]{.30\textwidth}
		\includegraphics[width=\textwidth]{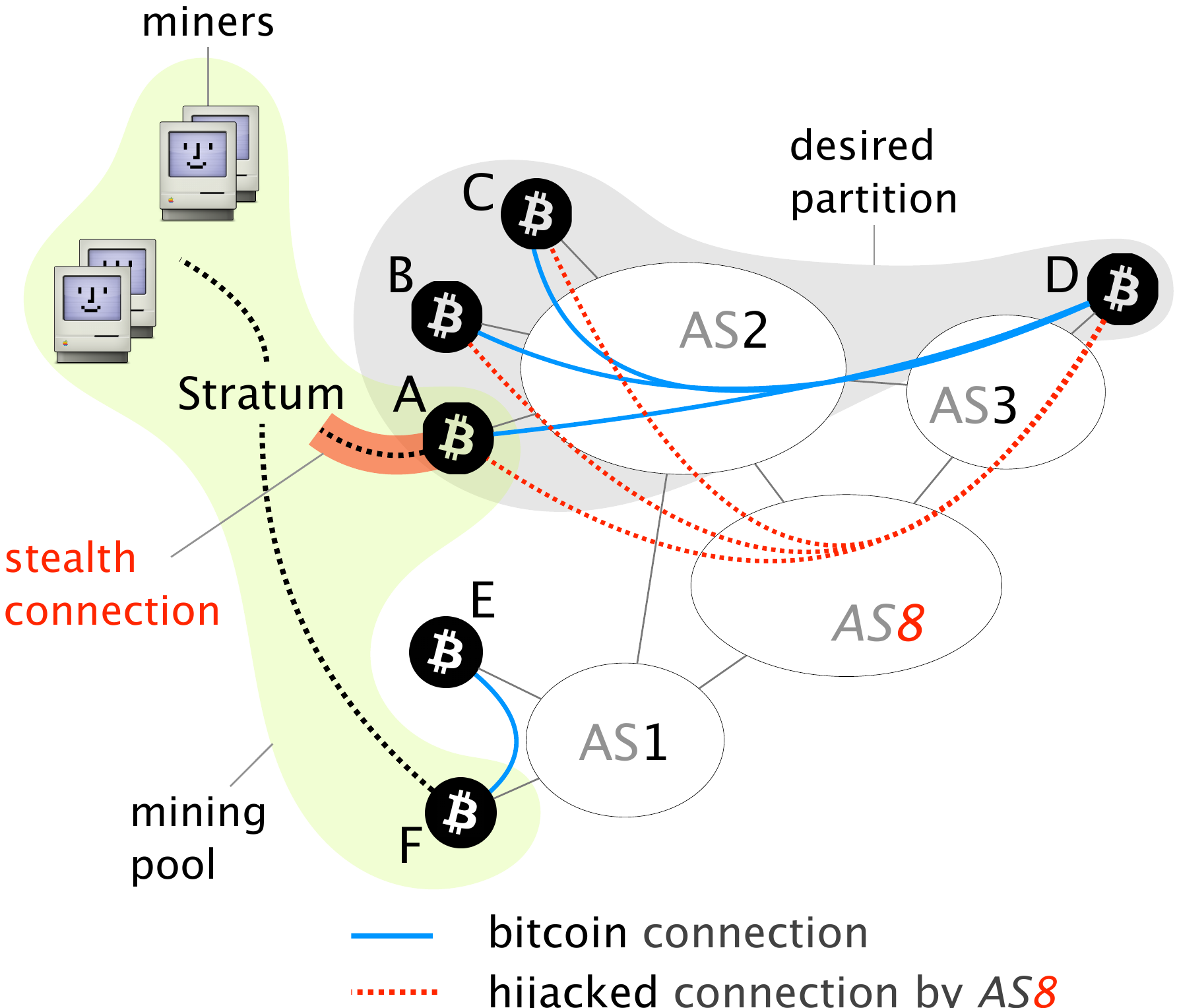}
		\caption{Infeasible partition because of intra-pool connections}
		\label{fig:intrapool}
	\end{subfigure}
  \caption{Not all Bitcoin connections can be diverted by an attacker implying that some partitions cannot be formed.}
 \label{fig:partition_possible}
\end{figure*}

In this section, we elaborate on partition attacks in which an AS-level adversary seeks to isolate a set of nodes $P$. We first describe which partitions are feasible by defining which connections may cause information leakage (Section~\ref{ssec:hidden}) to the isolated set.  We then discuss how an attacker may better select a $P$ that is feasible,
if she has some view of the Bitcoin topology (Section~\ref{sssec:plan}).
Next, we walk through the entire attack process, starting with the interception of Bitcoin traffic, the detection of leakage points and the adaptation of $P$ until the partition is successfully created (Section~\ref{ssec:disconnection}). In particular, we present an algorithm which, given a set of nodes $P$, leads the attacker to isolate the maximal feasible subset. Finally, we prove that our algorithm is correct (Section~\ref{ssec:proof}).

\remove{As well as why the latter are 
The main factor that makes the attack challenging is the existence of connections that the AS-level adversary does not see and that he cannot cut. If such connections cross the partition, information cannot be prevented from flowing from one component to the other in which case the partition is infeasible. Thus, the attacker should detect these connections and modify $P$ into a $P'$ (subset of P) such that they are no such connections bridging the partition. 
}

\remove{
\begin{algorithm}[t!]
	\small
	\SetKwInOut{Input}{Input}
	\KwIn{\textcolor{mygray}{-} $(G_1,G_2)$, a partition of the known Bitcoin IP addresses; and \newline 
	\textcolor{mygray}{-} $P = (P_1, \cdots, P_n)$, the sets of known Bitcoin IP addresses acting as gateways for each mining pool.}
	\KwOut{\emph{true} or \emph{false}}
	\Begin{
			\If{$AS(G_1) \cap AS(G_2) \neq \varnothing$}{
				\Return{false}\,\;
			}
			\For{$P_i \in P$}{
				\If{$\neg (AS(P_i) \cap AS(G_1) \oplus AS(P_i) \cap AS(G_2))$}{
					\Return{\text{false}}\,\;
				}
			}
			\Return{true}\,\;
	}
\caption{Algorithm testing whether a partition is feasible given a view of the Bitcoin topology.}
\label{alg:is_partition_possible}
\end{algorithm}
}

\subsection{Characterizing feasible partitions}
\label{ssec:hidden}

An attacker can isolate a set of nodes $P$ from the network if and
only if \emph{all} connections $(a,b)$ where $a \in P$ and $b \not\in P$
can be intercepted. We refer to such connections as \emph{vulnerable} and to
connections that the attacker cannot intercept as \emph{stealth}.

\myitem{Vulnerable connections:} A connection is \emph{vulnerable} if: \emph{(i)} an
attacker can divert it via a BGP hijack; and \emph{(ii)} it uses the Bitcoin protocol. The first requirement enables the
attacker to intercept the corresponding packets, while the second enables her to
identify and then drop or monitor these packets. 

As an illustration, consider Fig.~\ref{fig:visible}, and assume that the
attacker, AS8, wants to isolate $P=\{A,B,C\}$. By hijacking the prefixes
pertaining to these nodes the attacker receives all traffic from nodes $A$ and
$B$ to node $C$, as well as the traffic from node $D$ to node $A$. The path the
hijacked traffic follows is depicted with red dashed lines and the original path
with blue lines. As all nodes communicate using the Bitcoin protocol, their
connections can easily be distinguished, as we explain in
Section~\ref{ssec:disconnection}. Here, AS8 can partition $P$ from the rest of the network by dropping the connection from node $D$ to node $A$.

\myitem{Stealth connections:} A connection is \emph{stealth} if the attacker cannot intercept it. 
We distinguish three types of stealth connections:
\emph{(i)} intra-AS; \emph{(ii)} intra-pool; and \emph{(iii)} pool-to-pool. 

\noindent\textit{\textbf{intra-AS:}} An attacker cannot intercept
connections within the same AS using BGP hijack. Indeed, internal traffic does
not get routed by BGP, but by internal routing protocols (e.g., OSPF, EIGRP).
Thus, any intra-AS connection crossing the partition border renders the partition infeasible. Such connections represent only 1.14\% of all the possible connection nodes can create (this percentage is calculated
based on the topology we inferred in Section~\ref{sec:topology}).

As an illustration, consider Fig.~\ref{fig:intraAS} and assume that
the attacker, AS8, wants to isolate $P=\{A,B,C\}$. By hijacking the
corresponding BGP prefixes, AS8 can intercept the connections
running between nodes $A$ and $B$ to node $C$. However, she does not intercept the intra-AS
connection between $A$ and $X$. This means that node $X$ will inform node $A$
of the blocks mined in the rest of the network, and node A
will then relay this information further within $P$. Thus, $P=\{A,B,C\}$ is not
feasible. Yet, observe that isolating $I=\{B,C\}$ is possible. In the
following, we explain how the attacker can detect that $A$ maintains a stealth
connection leading outside of the partition and dynamically adapt to
isolate $I$ instead.


\noindent\textit{\textbf{intra-pool:}} Similarly to intra-AS connections, an attacker
 might not be able to cut connections between gateways belonging to the same mining pool.
 This is because mining pools might rely on proprietary or even encrypted protocols for internal communication. 

As an illustration, consider Fig.~\ref{fig:intrapool} and assume that the attacker, AS8, wants to isolate $P=\{A,B,C,D\}$. By hijacking the corresponding prefixes, she would intercept and cut all Bitcoin connections between nodes A, B, C, D and nodes E, F. However, nodes A and F would still be connected internally as they belong to the same (green) pool.
Again, observe that while isolating $P=\{A,B,C, D\}$ is not feasible, isolating $I=\{B,C,D\}$ from the rest of the network is possible.

\remove{
\textit{encrypted intra-pool or inter-pool connections:} An attacker
might not be able to cut connections between nodes that are using an encrypted private channel between them. These may belong to the same mining pool, or to different pools that have agreed to share such a private connection.
This is because 
such connections may simply be difficult to identify. 
 As such,
if a pool uses encrypted tunnels to connect between its multiple gateways or to another pool that is outside of $P$, this connection would act as a bridge, rendering the partition
infeasible.

}

\noindent\textit{\textbf{pool-to-pool:}} Finally, an attacker cannot
intercept (possibly encrypted) private connections, corresponding to peering agreements between mining pools. From
the attacker's point of view, these connections can be treated as intra-pool
connections and the corresponding pair of pools can be considered as one larger pool. Note that such connections are different than public initiatives to interconnect all pools, such as the Bitcoin relays~\cite{relays}. Unlike private peering agreements, relays \emph{cannot} act as bridges to the partition (see Appendix~\ref{ssec:faq}).

\remove{
Consider Fig.~\ref{fig:partition_possible}, and assume that the attacker, AS8, wants to
isolate $P=\{A,B,C\}$ in each case. The path the hijacked traffic follows is
depicted with red dashed lines, the original path with blue lines.
In the first case (Fig.~\ref{fig:visible}), all nodes
communicate using the Bitcoin protocol, and connections can easily be intercepted and
distinguished as we explain in \ref{ssec:disconnection}. Here, AS8 can partition $P$ from the rest of the network by simply
dropping the connection from node A to node D.
In the second case (Fig.~\ref{fig:intraAS}) the attacker fails to intercept the intra-AS
connection between $A$ and $X$, and thus fails to establish the partition (observe that here isolating $I=\{B,C\}$ is still possible). Finally, in Fig.~\ref{fig:intrapool} nodes A and F maintain an intra-pool encrypted connection that the attacker fails to identify (the attacker may still isolate $P'=\{B,C,D\}$ in this case). 
}

\subsection{Preparing for the attack}
\label{sssec:plan}
In light of these limitations the attacker can apply two techniques to avoid having stealth connections crossing the partition she creates. First, she can
include in $P$ either all or none of the nodes of an AS, to avoid intra-AS connections
crossing the partition. This can be easily done as the mapping from IPs to ASNs is
publicly available~\cite{caida_ip_to_asn}. Second, she can include in $P$
either all or none of the gateways of a pool, to avoid intra-pool connections crossing
the partition. Doing so requires the attacker to know all the gateways of the
mining pools she wants to include in $P$. Inferring the gateways is outside the
scope of this paper, yet the attacker could use techniques described in~\cite{miller2015discovering} and leverage her ability to inspect the traffic
of almost every node via hijacking (see Appendix \ref{ssec:gw}). Even with the above measures, $P$ may still contain leakage points that the attacker will need to identify and exclude (see below). Yet, these considerations increase the chances of establishing the desired partition as well as reducing the time required to achieve it.

\remove{In any case, due to the hardness of acquiring a perfect topological view and to
the dynamic nature of the Bitcoin network the attacker would need to adapt the partition
during the attack as described in the next Subsection. Even so, a good approximation
of a feasible partition would lead to a more effective attack as the $P'$ will
}

\begin{algorithm}[t]
	\small
	\SetKwInOut{Input}{Input}
	\KwIn{\textcolor{mygray}{-} $P$, a set of Bitcoin IP addresses to disconnect from the rest of the Bitcoin network; and \newline 
	\textcolor{mygray}{-} $S = [pkt_1, \cdots]$, an infinite packet stream of \newline diverted Bitcoin traffic resulting from the hijack of the prefixes pertaining to $P$.}
	\KwOut{False if there is no node $\in P$ that can be verifiably isolated;}
	\textbf{enforce\_partition($P,S$)}:
	
	\Begin{
			$U \gets \varnothing$\,\;
			$L \gets \varnothing$\,\;
			\While{$P \setminus (L \cup U) \neq \varnothing$}{
				\For{$pkt \in S$}{
					\If{$pkt.ip\_src \in P \wedge pkt.ip\_src \notin L $}{ \label{line:is_node_inside}
						$last\_seen[pkt.ip\_dst] = now()$\,\;
						$U \gets U \setminus \{ pkt.ip\_src \}$ \,\;\label{line:add_inactive}
						$detect\_leakage(U, pkt)$\,\;\label{line:detect_leakage}
						
					}
					\Else{
						$drop(pkt)$\,\; \label{line:drop-visible}
					}
					
					\For{$src \in P \wedge src \notin L$}{
						\If{$last\_seen[src] > now() - threshold$}{
							$U \gets U \cup \{ src \}$ \label{line:removal_inactive}
						}
					}
				}
			}
			\Return{false}\,\;
	}
\caption{Partitioning algorithm.}
\label{alg:attack_algorithm}
\end{algorithm}

\begin{algorithm}[t]
	\small
	\SetKwInOut{Input}{Input}
	\SetKwFunction{FRecurs}{FnRecursive}
	\KwIn{\textcolor{mygray}{-} $U$, a set of Bitcoin IP addresses the attacker cannot monitor; and \newline 
	\textcolor{mygray}{-} $pkt$, a (parsed) diverted Bitcoin packet.}
	\textbf{detect\_leakage($U, pkt$)}:
	
	\Begin{
		\If{$contains\_block(pkt) \vee contains\_inv(pkt)$}{
			\If{$hash(pkt) \in Blocks(\neg (P \setminus L))$}{
				$L \gets L \cup \{pkt.ip\_src \}$\,\; \label{line:leakage_detection}
				$drop(pkt)$\,\; \label{line:drop_block}
			}
		}

	}
\caption{Leakage detection algorithm.}
\label{alg:leakage_algorithm}
\end{algorithm}

\subsection{Performing the attack}
\label{ssec:disconnection}

We now describe how a network adversary can successfully perform a partitioning
attack. The attack is composed of two main phases: \emph{(i)} diverting relevant Bitcoin traffic; and \emph{(ii)} enforcing the partition. In the
former phase, the adversary diverts relevant Bitcoin traffic using BGP
hijacking. In the latter phase, the attacker cuts all vulnerable connections
that cross the partition and excludes from $P$ nodes which are identified as leakage points.
Leakage points are nodes that are connected to the rest of the network via stealth connections.

\myitem{Intercept Bitcoin traffic:} The attacker starts by hijacking all the
prefixes pertaining to the Bitcoin nodes she wants to isolate, i.e. all the
prefixes covering the IP addresses of nodes in $P$. As a result, she receives
all the traffic destined to these prefixes, which she splits into two packet
streams: relevant and irrelevant. Relevant traffic includes any Bitcoin
traffic destined to nodes in $P$. This traffic should be
further investigated. Irrelevant traffic corresponds to the remaining traffic
which should be forwarded back to its legitimate destination. 

To distinguish between relevant and irrelevant traffic, the attacker applies
a simple filter matching on the IP addresses, the transport protocol and ports
used, as well as certain bits of the TCP payload. Specifically, the attacker
first classifies as irrelevant all non-TCP traffic as well as all traffic with
destination IPs which are not included in $P$. In contrast, the attacker
classifies as relevant all traffic with destination or source TCP port the
default Bitcoin port (8333)\remove{and other publicly advertised port numbers (e.g.
via \textsf{ADDR} messages)}. Finally, she classifies as relevant all packets
which have a Bitcoin header in the TCP payload. Any remaining traffic is considered irrelevant.

\myitem{Partitioning algorithm:}
Next, the attacker processes the relevant traffic according to Algorithms~\ref{alg:attack_algorithm} and~\ref{alg:leakage_algorithm}. We start by presenting their goal before describing them in more details.

The high-level goal of the algorithms is to isolate as many nodes in $P$ as possible. To do so, the algorithms identify $L$, the nodes that are
leakage points, and disconnect them from the other nodes in $P$. Also,
the algorithms maintain a set of verifiably isolated nodes  $P'= P\setminus \{U\cup L\}$, 
where $U$ corresponds to the nodes that cannot be monitored (e.g., because they never
send packets). In particular,  Algorithm~\ref{alg:leakage_algorithm} is in charge of identifying $L$,
while Algorithm~\ref{alg:attack_algorithm} is in charge of identifying $U$ and
performing the isolation itself. 

\remove{
The algorithms
also identify $U$, the nodes that cannot be monitored (e.g., because they never
send packets). It then provably isolate the nodes in $P \setminus L \setminus
U$. Algorithm~\ref{alg:leakage_algorithm} is in charge of identifying $L$,
while Algorithm~\ref{alg:attack_algorithm} is in charge of identifying $U$ and
performing the isolation itself.
}

We now describe how the algorithms work. Algorithm~\ref{alg:attack_algorithm}
starts by initializing $L$ and $U$ to $\varnothing$. For every received packet,
the algorithm first decides whether the packet belongs to a connection internal
to $P \setminus L$ or to one between a node in $P \setminus L$ and an external
node based on the source IP address. If the source IP is in $P \setminus L$, the packet belongs to an
internal connection and it is given to Algorithm~\ref{alg:leakage_algorithm}
to investigate if the corresponding node acts as a leakage point  (Algorithm~\ref{alg:attack_algorithm},
Line~\ref{line:detect_leakage}). Otherwise, the packet belongs to a connection that crosses the
partition and is dropped (Algorithm~\ref{alg:attack_algorithm},
Line~\ref{line:drop-visible}).

Given a packet originated from $P \setminus L$,
Algorithm~\ref{alg:leakage_algorithm} checks whether the sender of the packet
is advertising information from outside of $P \setminus L$. Particularly, the
attacker checks whether the packet contains an \textsf{INV} message with the
hash of a block mined outside of $P \setminus L$ (or the block itself). If it
does so, the sender must have a path of stealth connections to a node outside of $P
\setminus L$ from which the block was transmitted. Thus the sender is a leakage
point and is added to $L$ (Algorithm~\ref{alg:leakage_algorithm},
Line~\ref{line:leakage_detection}). The actual packet is also dropped to
prevent this information from spreading.

To detect whether a node in $P \setminus L$ is a leakage point, an attacker
should be able to intercept at least one of that node's connections.
Specifically, the node should have a vulnerable connection to another node
within $P \setminus L$, so that the attacker can monitor the blocks it
advertises. To keep track of the nodes that the attacker cannot monitor,
Algorithm~\ref{alg:attack_algorithm} maintains a set $U$ which contains the
nodes she has not received any packets from for a predefined time threshold.
(Algorithm~\ref{alg:attack_algorithm}, Line~\ref{line:removal_inactive}).
Whenever one of these nodes manages to a establish a connection that the
attacker intercepts, it is removed from $U$ (Algorithm~\ref{alg:attack_algorithm}, Line~\ref{line:add_inactive}).

\myitem{Example:} We now show how the algorithms work on the example of Fig.~\ref{fig:intraAS} in which the attacker, AS8, aims to isolate $P=\{A,B,C\}$. By hijacking the prefixes corresponding to these nodes, the attacker intercepts the connections $(B,C)$ and $(A,C)$ and feeds the relevant packets to the algorithms. Recall that the partition is bridged by a stealth (intra-AS) connection between nodes $A$ and $X$ which cannot be intercepted by the attacker. When a block outside $P$ is mined, node $X$ will inform $A$ which then will advertise the block to $C$. The attacker will catch this advertisement and will conclude that node $A$ is a leakage point. After that, the attacker will drop the packet and will add $A$ to $L$. As such, all future packets from $A$ to other nodes within $P \setminus L =\{A,B\} $ will be dropped. Observe that the partition isolating $P \setminus L=\{B,C\}$ is indeed the maximum  feasible subset of $P$.

\remove{In Fig.~\ref{fig:intrapool}, the attacker, AS8, aims at isolating
$P=\{A,B,C,D\}$. The partition is bridged by a stealth (intra-pool) connection between nodes $A$ and $E$ which belong to the green pool. If the green pool learns about a block mined outside $P\setminus L$ by one of its gateways, namely node $F$, the pool will inform its other gateway, node $A$ about it. Node $A$ will then advertise this block to $D$. The attacker will catch this advertisement and will conclude that node $A$ is a leakage point. In this case, the attacker will again remove node $A$ from $P'$ and will be left with $P'=\{B,C,D\}$, which is again feasible.
}

\subsection{Correctness of the partitioning algorithm}
\label{ssec:proof}

We now prove the properties of Algorithm~\ref{alg:attack_algorithm}.

\begin{theorem}
Given $P$, a set of nodes to disconnect from the Bitcoin network, there exists a unique maximal subset $I\subseteq P$ that can be isolated. Given the assumption that Bitcoin nodes advertise blocks that they receive to all their peers,
Algorithm~\ref{alg:attack_algorithm} isolates all nodes in $I$, and maintains a set $P' = P\setminus \{U\cup L\}\subseteq I$ that contains all nodes in $I$ that have a monitored connection and are thus known to be isolated. 
\end{theorem}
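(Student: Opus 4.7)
The argument splits naturally into three parts: establishing existence and uniqueness of the maximal isolatable set $I$, showing the algorithm never wrongly excludes a node of $I$, and showing that the attacker's verified set $P'$ is contained in $I$.

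\emph{Existence and uniqueness of $I$.} First I would make precise the notion of isolatability. Call a subset $J\subseteq P$ \emph{isolatable} if no node of $J$ has a stealth connection to a node outside $J$ (equivalently, the stealth-connected component of every $v\in J$ is contained in $J$). The key lemma is that isolatability is closed under union: if $J_1,J_2$ are isolatable and $v\in J_1$, then all stealth neighbors of $v$ lie in $J_1\subseteq J_1\cup J_2$; symmetrically for $J_2$. Consequently $I:=\bigcup\{J\subseteq P:J\text{ isolatable}\}$ is itself isolatable and is, by construction, the unique maximal such set. A convenient reformulation is $I=\{v\in P : SC(v)\subseteq P\}$, where $SC(v)$ denotes the stealth-connected component of $v$ in the Bitcoin overlay.

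\emph{The algorithm isolates all of $I$.} I would maintain the invariant $L\cap I=\varnothing$ by induction on the sequence of packets processed. Base case is trivial since $L=\varnothing$ initially. For the inductive step, suppose some node $v\in I$ is about to be added to $L$: by Algorithm~\ref{alg:leakage_algorithm} this means $v$ has just advertised an \textsf{INV}/\textsf{BLOCK} pertaining to a block $b$ mined outside $P\setminus L$. I would then prove the stronger claim that no node in $I$ can ever possess knowledge of such a $b$. Consider the first node $w\in I$ that learns $b$ and the connection that delivered it. If that connection is stealth then its other endpoint is in $I$ (by isolatability), contradicting minimality of $w$. If it is vulnerable, the attacker intercepts the packet; by the inductive hypothesis $L\cap I=\varnothing$, so the sender is either outside $P\setminus L$ (dropped at Line~\ref{line:drop-visible}) or inside $P\setminus L$ (in which case the leakage check in Algorithm~\ref{alg:leakage_algorithm} fires on $b$ and the packet is dropped at Line~\ref{line:drop_block}). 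Either way $w$ never receives $b$, a contradiction. Hence $v\notin I$ whenever $v$ is added to $L$, preserving the invariant. Combined with the fact that vulnerable connections between $I$ and $P\setminus(I\cup L)$ carrying non-leaking traffic are simply forwarded among $P\setminus L$ (so $I$-nodes are not torn from each other), this shows every $v\in I$ remains connected only within $I$.

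\emph{$P'\subseteq I$ and the ``monitored'' guarantee.} For the reverse containment, take any $v\in P\setminus I$. Then $SC(v)\not\subseteq P$, so under the standing assumption that nodes relay blocks to all peers, some block $b$ mined outside $P$ propagates along $SC(v)$ and eventually reaches $v$ via stealth (which the attacker cannot block). If $v$ has any vulnerable connection to a peer $u\in P$, then $v$'s \textsf{INV}/\textsf{BLOCK} for $b$ is intercepted: since $v\in P\setminus L$ (before this event), the packet enters Algorithm~\ref{alg:leakage_algorithm} and $v$ is added to $L$. Otherwise all of $v$'s intra-$P$ connections are stealth and its vulnerable connections go outside $P$, so the attacker intercepts no packets whose source is $v$; the timeout in Line~\ref{line:removal_inactive} then places $v$ in $U$. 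Either way $v\in L\cup U$, so $P'=P\setminus(L\cup U)\subseteq I$. The last clause follows immediately: for $v\in I$ with at least one intercepted vulnerable connection, the packets it sends are seen by the attacker, $v\notin U$, and by the invariant $v\notin L$, hence $v\in P'$.

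\emph{Anticipated difficulty.} The subtle step is the induction in the second part, because several propagation paths (stealth, allowed vulnerable, dropped vulnerable, and partially-classified leakage sources in $P\setminus(I\cup L)$) coexist and interact. Phrasing the argument in terms of the ``first $I$-node to learn $b$'' and leveraging both the source-based filter (Line~\ref{line:drop-visible}) and the content-based filter (Line~\ref{line:drop_block}) simultaneously is what makes the invariant $L\cap I=\varnothing$ go through cleanly; everything else is bookkeeping on top of the structural lemma that isolatability is union-closed.
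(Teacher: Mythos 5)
Your proof is correct and takes essentially the same route as the paper's: the paper defines $I$ as the complement in $P$ of the set $\mathcal{S}$ of nodes having a path of stealth connections out of $P$ (equivalent to your $SC(v)\subseteq P$ characterization), argues that every node of $\mathcal{S}$ ends up in $L$ or $U$, and asserts that $I$ itself remains cut off because all its vulnerable cross-partition packets are dropped. Your write-up simply makes rigorous---via the union-closure lemma for uniqueness and the ``first node in $I$ to learn $b$'' induction establishing $L\cap I=\varnothing$---the steps the paper's short proof states in a single line each.
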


\begin{proof}
	Consider the set of nodes $\mathcal{S}\subseteq P$ that has a path of stealth connections to some nodes not in $P$. 
	Clearly, nodes in $\mathcal{S}$ cannot be isolated from the rest of the network by the attacker. 
	Let $I = P\setminus \mathcal{S}$. Notice that $I$ is the maximal set in $P$ that can be disconnected by an attacker. 
	Now, notice that every node in $S$ is placed in sets $L$ or $U$ by the algorithm: 
	if the node has a monitored connection and is caught advertising external blocks it is placed in $L$ (Algorithm~\ref{alg:leakage_algorithm} Line~\ref{line:leakage_detection}).
	If it is not monitored then it is placed in $U$ (Algorithm~\ref{alg:attack_algorithm}, Line~\ref{line:removal_inactive}).
	
	Notice also that the entire set $I$ is isolated from the network. If some node has no stealth connection outside, and was removed solely for the lack of monitoring, it is still having all its packets from outside of $P \setminus L$ dropped -- Algorithm~\ref{alg:attack_algorithm} Line~\ref{line:drop-visible}).  

\end{proof}

\remove{
\begin{proof}
The proof is by contradiction. Suppose that there is a node $n$ in $P'$ which
learns about at least one block $X$ mined outside of $P'$. Let $n'$ be the first node
in $P'$ which learned about $X$ with possibly, $n = n'$. For $n'$ to learn $X$, it means that there must exist an edge $(n', m)$ such that $m \not\in P'$. There two cases to consider.

\begin{case}
	\textup{$(n', m)$ is vulnerable by the attacker. This leads to a contradiction as the attacker would drop $(n', m)$, preventing $n'$ from learning $X$ (see Algorithm~\ref{alg:attack_algorithm}, Line~\ref{line:drop-visible}).}
\end{case}

\begin{case}
	\textup{$(n', m)$ is not intercepted by the attacker. As $n' \in P'$, there must be at least one intercepted edge $(n', a)$ where $a \in P'$, as otherwise $n'$ would be considered as inactive and removed (see Algorithm~\ref{alg:attack_algorithm}, Line~\ref{line:removal_inactive}). By assumption, $n'$ advertises $X$ to $a$ via $(n',a)$. Since $(n',a)$ is intercepted, it implies that $n'$ is detected as leakage point and removed from $P'$ (see Algorithm~\ref{alg:leakage_algorithm}, Line~\ref{line:leakage_detection}). Therefore, we have that $n' \not\in P'$ which also leads to a contradiction.}
\end{case}
\end{proof}

\begin{theorem}
Given $P$, a set of nodes to disconnect from the Bitcoin network, Algorithm~\ref{alg:attack_algorithm} will find the maximum subset $P$ of nodes that can be guaranteed to be isolated by a AS-level adversary.
\end{theorem}
\begin{proof}
To prove that $P'$ is the maximum subset of $P$ that the attacker can be sure that is isolated from the rest of the network, we will prove that for all the nodes that were removed from $P$ the attacker could either not prevent them from being informed about a block that was mined outside $P$ or could not detect if they were informed. For every node $a \in P \wedge a \not\in P'$, namely for every node a removed from $P$, there two cases to consider.

\begin{case2}
	\textup{node a has no connection the attacker can monitor} This means that node a either has  no Bitcoin connections or has only connections to nodes whithin the AS it is hosted in. As such the attacker cannot guarantee that node a has no information about blocks mined outside $P$. Thus, node a should not be considered isolated.
\end{case2}
\begin{case2}
	\textup{node a was caught adverting a block mined outside $P$} There might be a node $b$ which transmitted the block to node a. This transmission must have been made via a stealth connection $(a,b)$ otherwise the connection would have been cut (Algorithm~\ref{alg:leakage_algorithm}, Line \ref{line:drop_block}). We have again two cases here:
	\begin{case3}
\textup{ $b \not\in P$} The peer that transmitted the block to node a is not in P Node a has a stealth connection to a node outside $P$. The attacker cannot cut this connection, node a will always learn new mined blocks, node a should thus be removed.
\end{case3}

\begin{case3}
\textup{ $b \in P$} The peer that transmitted the block to node a is in P, node a has a connection to a node b that has a direct connection of a path to a node outside $P$. All nodes 
in the path included node a will always learn about blocks mined outside $P$ and should thus all be removed.
\end{case3}

\end{case2}

\end{proof}
}
\remove{
\subsection{Limitations}
\myitem P' ming not be the maximal feasible subset of P.
Due to its heuristic nature our algorithm does not guarantees that the P' is is the largest subset of P that is possible. This is only true if we assume that all peers of a node A advertised a block to each peers within small intervals in comparison with the time it takes for the block to be delievred to one of them. 
In more detail if node a has two 

\myitem{Attack time is in fact constrained}
Although the algorithm will indeed manage to find a feasible partition P' that is close to the initial P the time this procedure will take is indeed critical for the impact of the attack. Recall that hijacks are visible attacks, as such the adversary will only have a few hours to complete her attack. As such if she built a large partition completely with no intuition of the structure it quite likely that she will waste most of her attack time in creating the partition. 
}
\remove{

\myitem{Gathering information on the Bitcoin topology and mining pools.} 

To infer the IPs of all Bitcoin nodes such a list can be downloaded from public
sites such as~\cite{bitnodes}. \laurent{NATed nodes?}
\laurent{The following is a bit handwavy. Can we do anything more specific?}
With regards to the gateways, the attacker can use the techniques described in
\cite{coinscope} and leverage her ability to inspect the traffic of almost
every node via hijacking. First, the attacker may hijack the pool's stratum
servers to discover connections that they establish and thus reveal the
gateways they connect to. Connections formed between the pool's stratum server
and its gateways are via Bitcoind's RPC access, and can be easily
distinguished. Second, the attacker can hijack the relay network to discover
connections used by the large mining pools. The relay network has a public set
of six IPs that pools connect to. By hijacking these, one may learn very
critical points of the pool's infrastructure. Specifically, it will not be hard
to identify a pool by observing the blocks it publishes to the relay network.


\myitem{Effective vs ineftive} In order to be effective, a partition
must be complete meaning the attacker must be able to effectively cut all
connections crossing the partition.

Yet, sometimes, there can be nodes belonging to different components that are
connected via channels the attacker cannot intercept. In that case, the
partition is simply not possible.

We distinguish three types of connections the attacker cannot intercept: \emph{(i)} intra-AS;  \emph{(ii)} intra-pool; and \emph{(iii)} pool-to-pool. For each type we explain why a network attacker cannot intercept them and how he can plan the partition to prevent such connections from leaking information from one component to the other.

\myitem{intra-AS} The attacker cannot intercept connection within the same AS, because their traffic is local and will thus not leave the AS. Note that only 1.14\% on average of the possible connections a bitcoin client can do is intra AS (number calculated based on the topology we inferred in Section ~\ref{sec:topology}).

\myitem{intra-pool} The attacker cannot intercept connection among nodes that belong to the same pool. Since mining pools are multi-homed and distributed among different locations, their gateways might be hosted in different ASes, which can of course be included in different components. If that happens, the pool they belong to would get information from both components and probably disseminate this information from one component to the other, effectively acting as a bridge between them.

Intuitively, the attacker would need to include all gateways of a pool in the
same component to address this issue. Thus, if the attacker is aware of all the
gateways of a pool, he can easily avoid this indecent. If the attacker only
knows some of the gateways or none at all, he will have to sacrifice some time
of the attack iterating between the planning and in the monitoring phase to
learn the gateways and adapt the partition accordingly.
}

\remove{
For example in Figure \maria{FILLME} if AS1 is placed in one component and AS2 in the other then the green pool will get information from both components. The green pool then can instruct its gateways nodes E and D to propagate the learned information from E to D and from C to B. In this case, although there is no Bitcoin connection crossing the partition the two components will still be connected. 

\myitem{Pool to Pool} 
The attacker cannot intercept connection among pools that are directly connected to each other with protocols other than Bitcoin. Such a mesh connecting many large pools today is the Bitcoin relays. 
Since the IPs of the relays are publicly available and the BGPsec is not used the attacker can hijack the relays and easily disconnect them. Note that this is different from a DoS attack directed to the IPs of relays, which can be easily mitigated.
If pools have secret agreements to each other then they should be both placed in the same component. Although, that is difficult to know before the actual attack, the attacker can still perform the attack by iterating between the monitoring and planning phase to adapt the partition.
}

\remove{
The goal of execution phase is to enforce the partition that was decided during the planning phase. Particularly, the attacker cuts all connections from one component to the other. The execution phase is composed of two steps which we explain in this subsection: \emph{(i)}  attacker hijack specific prefixes; and \emph{(i)} he cuts the appropriate connections. We refer to nodes whose prefixes are hijacked as hijacked nodes, the component they belong to as hijacked component and the ASes these nodes are hosted in as hijacked ASes. 

\myitem{Hijack}
The attacker hijacks the prefixes pertaining to the nodes of one of the components. The two criteria based on which the attacker chooses which prefixes to hijack are visibility of the attack and diverted traffic load. Particularly, the attacker might try to decrease the prefixes he hijacks to make his attack less visible. For that reason he might choose to hijack the component that requires the least prefixes. On the other hand though, the attacker might choose to advertise all the /24 sub-prefixes that correspond to this component, instead of the initial prefixes which would be more likely fewer, in order to decrease the diverted load.  

 In any case, the attacker does not act as a black-hole but instead keeps a path to the hijacked destination to forward the none Bitcoin traffic as well as the Bitcoin traffic he does not want to drop. As explained in \cite{} this is always possible. 

\myitem{Disconnect}
 The attacker will receive the traffic that is destined to the nodes whose prefixes he hijacked. At this point, he distinguishes the Bitcoin traffic from the regular and investigate it further before sending it to the actual destination. From this traffic the attacker can discover all the connections the hijacked nodes have. He then classifies the connections each hijacked node has as one of three types: \emph{(i)} connections to peers hosted in hijacked prefixes; \emph{(ii)} connections to peers hosted in hijacked ASes but whose prefixes are not yet hijacked; \emph{(iii)}connections to peers hosted in hijacked ASes that are not hijacked. The attacker treats the connections  differently based on their type.
The first type of connections are entirely within the hijacked component and are thus allowed by the attacker. With regards to the second type, the existence of a peer hosted within the hijacked ASes, but whose prefix is not hijacked denotes that the attacker was unaware of this node. Since the attacker needs to include all nodes of an AS within the same component, the attacker would have to hijack the prefix of the previously unknown node as well. After that the attacker allows the connection as it belongs to the first type. 
The third type of connections are connections that cross the partition. The attacker would drop such connections. For each of those connections that the hijacked nodes had initialized, they will try to establish a new connection to replace the dropped one. The attacker will again classify and treat the new connection as described above.
}

\remove{
\subsection{Phase \#3: Monitoring}
In this phase the attacker's goal is to ensure that the formed cut is complete by detecting leakage, identifying its cause and adapt the partition accordingly.
 Leakage can occur due to the incomplete topology the attacker used during the planning phase or even due to the dynamic nature of the Bitcoin network. Thus, the attacker should continually monitor the diverted traffic to detect leakage and react to it. In the following, we describe how an attacker can detect leakage, prevent it from spreading across the component and adapting the partition accordingly.

\myitem{Detect Leakage} To detect leakage the attacker uses the following simple observation. If there is a node within the hijacked nodes that is connected to the other component, it will learn about blocks that are mined in the other component. To propagate these blocks to nodes within the hijacked component it will first send an inventory message with the block's hash to a node in the component. As the attacker sees all traffic to the hijacked nodes and the hijacked nodes are only connected to nodes within the hijacked component, he sees all exchanged traffic/information within that component. Once the attacker sees the hash being advertise by a hijacked node, he concludes that this node has a connections to the other component. To avoid leakage of information from this node to rest of the hijacked component the attacker cuts all its connections to the other hijacked nodes. If the node never tries to propagate the block, it will indeed never be detected, but it will also never cause leakage, in which case it is not a threat for the partition. There is also the chance that the node also has connections that the attacker cannot intercept within the hijacked component, cutting its Bitcoin connections to hijacked nodes might not be enough to stop the leakage. In this case the nodes it is connected to will be also detected as points of leakage and excluded from the hijacked component. 

\myitem{Adapt partition} If there is a node which was included to the hijack component but has a connection to the other component that the attacker cannot intercept, then the initial topology the attacker used while planning is faulty. The attacker should update his topology and revisit the planning phase to adapt the partition so that this node is not included in hijacked component.   
}

\remove{
In this section we detail the three phases involved in a partition attack,
namely the \emph{planning}, \emph{execution} and \emph{monitoring} phases
(Fig.~\ref{fig:overview_partition}).
}

\remove{
The attacker creates a partition plan which is just 3 sets the union of which is set of all nodes and their intersection is null. This represeltation of the attacker's goals together with his view of the Bitcoin topology are used as input for the attacker to decide weather his objectives can be fulfill 

accurately calculating the nodes that would be in . In the following phase, Planning, the decides whether his objectives can be fulfilled or not. 

The first two sets contain the nodes the attacker wishes to disconnect, while the last set contains d can be placed in any of the two components. Given these sets together with some information of the topology the attacker can calculate the actual a partition 
 The attack planer should then try to come up with a partitioning G1, G2 which satisfy attacker's objectives and is possible given the topology. In section  \ref{ssec:rules} we describe which characteristics constitute a partition infeasible for a network attacker. As intuition it is not possible to disconnect a pool by locating some of its gateways to more than one component. If a possible partition is found then the adversary executes the attack by hijacking the prefixes of one of the components and cuts all connections crossing the partition. can be satisfied given the topology. As an intuition, not all partitions are possible. 
}
\remove{
\subsection{Phase \#1: Planning}

In this offline phase, the attacker decides: \emph{(i)} which Bitcoin nodes
will be included in each of the two components of the partition along with the
set of IP prefixes she needs to hijacks to intercept (and kill) any Bitcoin
connection cross the partition. Formally, given an inferred (and not
necessarily perfect) Bitcoin graph $G = (V,E)$, the attacker defines a
partition of the nodes $(V_1, V_2$), such that $V_1 \cup V_2 = V$, $V_1 \cap
V_2 = \varnothing$. The attacker then hijacks all the prefixes pertaining to
nodes in $V_1$ or to the nodes in $V_2$. Given that hijacking prefixes is
visible, we assume that the attacker goes for the attacking the size of the
partition requiring the least amount of prefixes.

Observe that the attacker only needs to know about the Bitcoin nodes
themselves, not the connections they establish. This is because by hijacking
one side of the partition, the attacker automatically sees \emph{all}
connections destined to that side of the partition. Also, $V$ does not have to
be perfectly equivalent to the real (and unknown) Bitcoin topology for the
attack to work. While imperfect $V$ might result in incomplete (and therefore
ineffective) partition, an attacker can detect these leakages and dynamically
adapt its view of $G$. Yet, the closer $G$ is initially from the real topology,
the faster the attacker will be in cutting the graph.

In practice, the content of the partition $(G_1, G_2)$ will depend on the
attacker's objectives. For instance, to perform targeted double spend attacks
against a specific set of merchants, it is sufficient to spread the merchants
across the partitions and inject conflicting transaction. Alternatively, an
attacker might want try to isolate 50\% of the mining power in one component,
causing the two parts of the topology to create their own blockchain.
Obviously, performing larger partitions is more visible and require more
resources than performing small ones. Specifically, larger partitions require
to hijack more prefixes and therefore to process more diverted traffic.

While the cost of performing a partition changes with its size, the modus
operandi of the attack does not. As such, we focus in the rest of the section
on how a given $(G_1, G_2)$ can be realized and explore the notion of cost at
length in Section~\ref{}. We first explain how an attacker can infer the
required topological information. We then describe how the attacker can detect
``leakages'' in the partition and work around them.
}

\remove{
\myitem{Objectives.} 
We distinguish between two cases depending on whether the attacker wants to
target specific victims or the cryptocurrency in general. In both cases, there
might be many partitions that satisfy the objectives or none at all.

If the adversary wants to attack specific entities of the Bitcoin network
(nodes or pools), he knows which entities he wants to disconnect. For example,
if he wants to perform a double spend attack the adversary needs to make sure
that the two sellers receiving the same Bitcoins do not see both transactions.
This can only be achieved if each of them is in a different component of the
partition. Similarly, if the attacker wants to attack a pool he needs to make
sure that this pool is not informed about the latest blocks and thus wastes its
mining power on an obsolete chain. To do that the attacker needs to include
this pool within the component which contain less mining power.

Alternative, if the attacker wants to harm the Bitcoin system in general, he knows the property that should hold in the partition. For example, an attacker might try to split the mining power to half, causing the two parts of the topology to create their own blockchain. In this case the property that should hold that the mining power in each component should be close to 50\%.
 `
 Similarly, an attacker might try to perform a denial of service attack against as many nodes as possible. Deprived of their ability to trade with Bitcoin for a considerable amount of time (couple of hours), sellers and consumers might switch to other cryptographies. In this case the property that should hold that the mining power should be disconnected from regular clients.
}

\section{Delaying Block Propagation}
\label{sec:delay}

While partitioning attacks (Section~\ref{sec:partition}) are particularly effective and can be performed by
 any AS, they require full control over the victim's traffic and are also highly visible. In this section, we
explore \emph{delay attacks}, which can cause relatively severe delays in block propagation, even when an attacker intercepts only one of the victim's connections, and wishes the attack to remain relatively undetectable. 

In this attack, the adversary delays the delivery of a block by modifying the content of specific messages.
This is possible due to the lack of encryption and of secure integrity checks of Bitcoin messages. In addition to these, the attacker leverages the fact that nodes send block requests to the first peer that advertised each block and wait 20 minutes for its delivery, before requesting it from another peer. 

The first known attack leveraging this 20 minutes timeout~\cite{Gervais:2015:TDB:2810103.2813655} mandates the adversary to be connected
to the victim and to be the first to advertise a new block. After a successful
block delay, the connection is lost. In contrast, network-based delay attacks are
more effective for at least three reasons: \emph{(i)} an attacker can act on
existing connections, namely she does not need to connect to the victim which
is very often not possible (e.g, nodes behind a NAT); \emph{(ii)} an attacker
does not have to be informed about recently mined blocks by the
victim's peers to eclipse it; and \emph{(iii)} the connection that was used for
the attack is not necessarily lost, prolonging the attack.

Particularly, the effectiveness of the delay attack depends on the direction and
fraction of the victim's traffic the attacker intercepts. 
Intuitively, as Bitcoin clients request blocks from one peer at a time, the probability that the attacker will intercept such a connection increases proportionally with the fraction of the connections she intercepts. In addition, Bitcoin connections are bi-directional TCP connections, meaning the attacker may intercept one 
direction (e.g., if the victim is multi-homed), both, or none at all. Depending on the direction she intercepts, the attacker fiddles with different 
messages. In the following, we explain the mechanism that is used to perform the attack if the attacker intercepts traffic \emph{from} the victim  (Section~\ref{ssec:outgoing_attack}) or \emph{to} the victim node (Section~\ref{ssec:incoming_attack}). While in both cases the attacker does delay block propagation for 20 minutes, the former attack is more effective. \remove{The node-level attack can be easily extended to a network level attack, in which case the adversary delays all blocks propagated through the connections she intercepts.}

\begin{figure}[t]
 \centering
 \begin{subfigure}[t]{0.45\columnwidth}
 \includegraphics[width=\columnwidth]{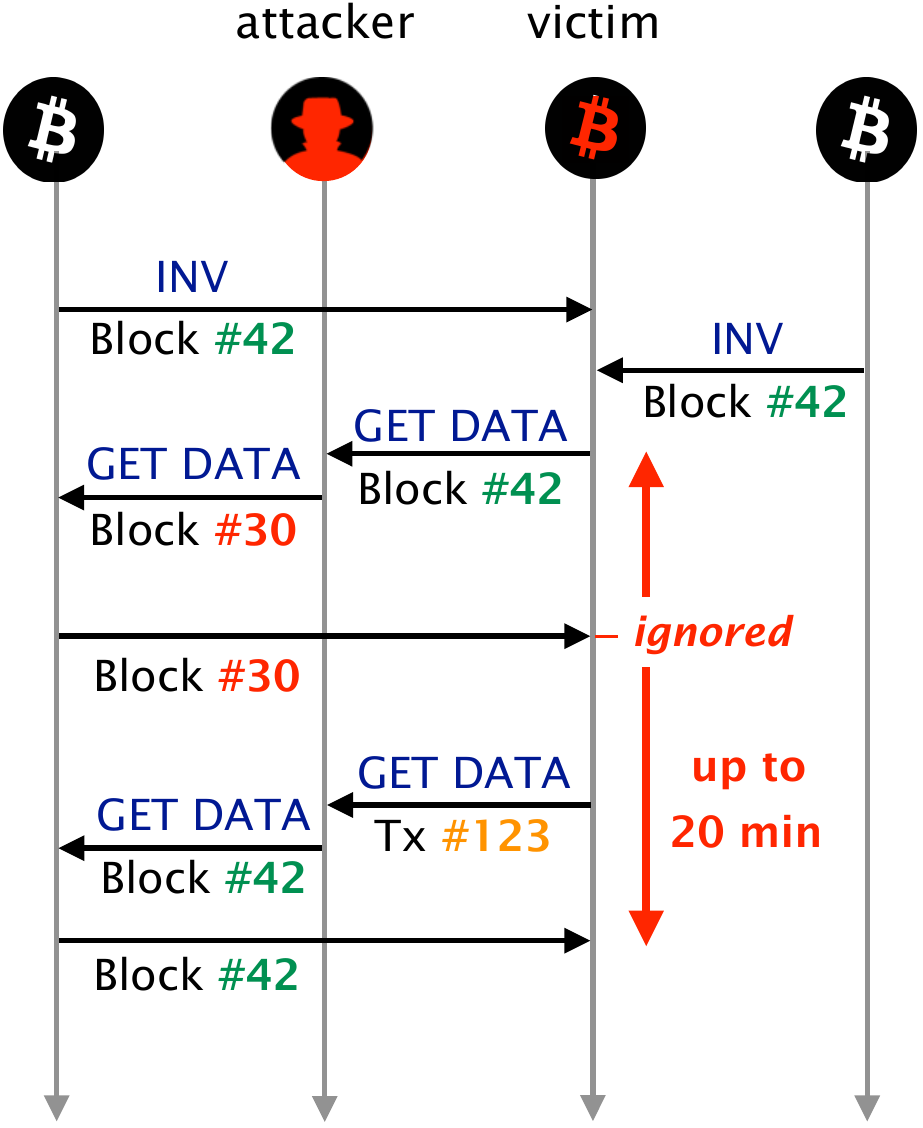}
 \caption{$\curvearrowleft$ Attacker $\curvearrowleft$ victim}
 \label{fig:delay_attack_left}
 \end{subfigure}
 \hfill
 \vrule
 \hfill
 \begin{subfigure}[t]{0.45\columnwidth}
 \includegraphics[width=\columnwidth]{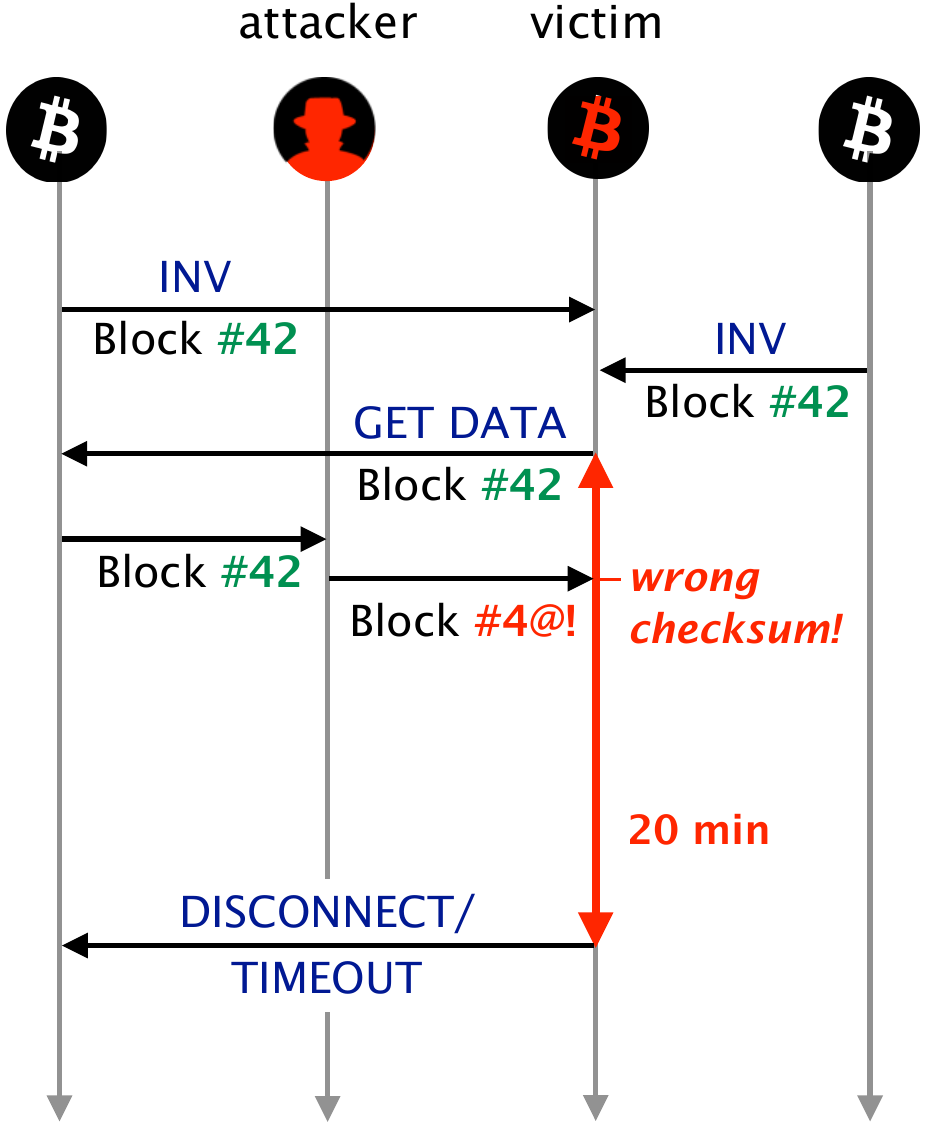}
 \caption{$\curvearrowright$ Attacker $\curvearrowright$ victim}
 \label{fig:delay_attack_right}
 \end{subfigure}
 \caption{An attacker can perform different delay attacks depending on the direction of traffic she intercepts. When on the path from the victim, it can modify the \textsf{GETDATA} message (a), while it can modify the \textsf{BLOCK} message when intercepting the opposite path (b).}
 \label{fig:delay_attack_figure}
\end{figure}

 \remove{

We now explain the mechanism of attacking a single node. This mechanism can be easily extended to a network level attack, in which case the adversary, delays all blocks propagated through the connections she intercepts.
Particularly, the effectiveness of the attack depends on the direction and
fraction of the victim's traffic the attacker intercepts. Bitcoin nodes
establish connections with many peers. To effectively delay the delivery of a block to a victim node the
attacker should intercept the connection between the victim and the first of its peers which advertised this particular block to it. 
This is due to the fact that a bitcoin client only requests a block from the first peer which advertised it.
Intuitively, the largest the fraction of connections the attakcer intercepts the higher her chance to delay the delivery of a blcok to it
All these connection are bi-directional TCP connections, meaning the attacker can intercept one direction (e.g., if the victim is multi-homed), both or none at all. We start by describing the most effective attack, which can be performed by an
adversary that intercepts at least a fraction of the traffic that originates
\emph{from} the victim (\S\ref{ssec:outgoing_attack}).
We then describe another attack that can be performed by a MITM observing at least a 
fraction of the traffic destined \emph{to} the victim (\S\ref{ssec:incoming_attack}). Both
attacks allow  an adversary to delay blocks by 20 minutes, but the
former is less visible.
}

\subsection{The attacker intercepts outgoing traffic}
\label{ssec:outgoing_attack}

Once a node receives a notification that a new
block is available via an \textsf{INV} message, it issues a download request
to its sender using a \textsf{GETDATA} message. As an illustration in Fig.~\ref{fig:delay_attack_left} the victim requests Block 42
from the first of its peers that advertised the block. 
Since the attacker intercepts the traffic from the 
victim to this peer, she can modify this \textsf{GETDATA} message to request an older block, instead of the original one.
In Fig.~\ref{fig:delay_attack_left} for example, the attacker replaces the hash corresponding to block 42 with that of block 30. The advantage of doing so, over just dropping the packet, is that the length of the message is left unchanged. Notice that if the packet was dropped the attacker would need to intercept both directions of the connection and update the TCP sequence numbers of all following packets to ensure that the connection is not dropped. If the block is not delivered, the victim node will disconnect after 20 minutes.
 To avoid a disconnection, the attacker can use another
\textsf{GETDATA}, sent within the 20 minute window, to perform the reverse
operation. Specifically, she modifies the hash back to the original one, requested by the victim. 
Since the \textsf{GETDATA} message for blocks and transactions have the same structure, the attacker is more likely
to use the latter as these are much more common. In Fig.~\ref{fig:delay_attack_left} for example, she changes the hash of
the transaction (Tx \#123) to the hash of block 42. Since the block is delivered within the timeout neither of the nodes disconnects or has any indication of the attack (e.g., an error in the log files).

\remove{Also, the large majority of Bitcoin nodes uses the
default TCP port (8333) to communicate, rendering the task of identifying
Bitcoin traffic trivial.

As an illustration in Fig.~\ref{fig:delay_attack_left} the victim node asks for the Block 42 from the first of its peers that advertised it, and sets a timer for 20 minutes. Once this period has elapsed,
it will request the block from another peer. The attacker changes the hash requested with the \textsf{GETDATA} message effectively causing the
peer to deliver another block which the victim ignores. Before the 20 minutes timeout the attacher trigers the deliver of the Block 42 by restoring the corresponding hash to a random \textsf{GETDATA} of a transactions as these are more common. Since the block is delivered within the time-out neither of the nodes disconnects or have an indication of the attack (e.g. an error in the log files).
}

\remove{This practice was adopted by Bitcoin core as a scalability measure. To avoid a disconnection, which would deprive 
her from the MITM position, the attacker can use another
\textsf{GETDATA}, sent within the 20 minutes window, to perform the reverse
operation. Specifically, she modifies the hash back to the original one, requested from the victim. In practice, another
\textsf{GETDATA} sent by the victim is almost always guaranteed to be found
within 20 minutes of the original, as \textsf{GETDATA} messages for transactions (Tx\#123 in Fig.~\ref{fig:delay_attack_left})), which are much more common in the Bitcoin network, can be also used.
}

\remove{
Three reasons make this attack extremely effective.
\emph{Firstly}, it works even if the adversary does not intercept all
of the victim's connections. \remove{ Even 50\% of all connections is enough to eclipse the victim for 63\% of its uptime. }    
\emph{Secondly}, the victim has
no clear indication that it is under attack, since there is no error in its log
files. 
\emph{Thirdly}, by making sure that the victim
receives the block within 20 minutes, the attacker avoids disconnection and maintains her influence over time.
This matters especially when she intercepts only a relatively small fraction of
the victim's connections. Without doing so, the attacker looses one connection
per delayed block, as the victim will likely pick connections the attacker does not intercept.
Observe that the attacker could increase its effectiveness even more using
hijacks or Eclipsing methods~\cite{heilman2015eclipse} so as to bias the selection of peers and intercept more of the victim's connections.
}

\subsection{The attacker intercepts incoming traffic}
\label{ssec:incoming_attack}

We now describe the mechanism an attacker would use if she intercepts traffic
towards the victim, \emph{i.e.} she can see messages received by the victim, but not the messages that it sends.
This attack is less effective compared to the attack working in the opposite direction, 
as it will 
eventually result in the connection being dropped 20 minutes after the first delayed block (similarly to~\cite{Gervais:2015:TDB:2810103.2813655}).
\remove{Yet, this attack still allows a MITM intercepting only this direction to effectively delay the delivery of a block to a node for 20 minutes.}
In this case, the attack focuses on the
\textsf{BLOCK} messages rather than on the \textsf{GETDATA}. A naive attack
would be for the attacker to simply drop any \textsf{BLOCK} message she sees. As Bitcoin
relies on TCP though, doing so would quickly kill the TCP connection. A
better, yet still simple approach is for the attacker to corrupt the contents of a
\textsf{BLOCK} message while preserving the length of the packet (see Fig.~\ref{fig:delay_attack_right}). This simple operation causes the \textsf{BLOCK} to be
discarded when it reaches the victim, because of a checksum mismatch. Surprisingly
though, we discovered (and verified) that the victim will \emph{not} request the block again, be
it from the same or any other peer. After the 20 minute timeout elapses,
the victim simply disconnects because its requested block did not arrive on time.

An alternative for the adversary is to replace the hash of the most recent
\textsf{Block} with a hash of an older one in all the \textsf{INV} messages the
victim receives. This attack however would fail if the attacker intercepts only
a fraction of the connections, as the victim will be informed via other connections. As such, this practice is only useful when the attacker
hijacks and thus intercepts all the traffic directed to the victim.

\remove{
In this Section we focus on the process of attacking a single node. The same procedure can be applied to attack a larger set of nodes or the network as a whole.

describe how a network adversary
can leverage some key features of the protocol to attack a single node,

 The key difference of this attack in comparison to a partition attack is that it does not require the attacker to control all connections of his victims. As such even adversaries with partial view of the victim's traffic can eclipse it for a considerable percentage of its up-time.

}

\remove{In section \ref{sec:delay_results}, we prove the practicality and effectiveness of our attack in practice against a single node and  by performing it in the live bitcoin network against our own nodes. We also show its 
Three factors make this attack possible: \emph{(i)} Bitcoin nodes request the block from one of their peers only and wait 20 minutes for it to be delivered; \emph{(ii)} Bitcoin traffic is unecrypeted 
\emph{(iii)} Bitcoin traffic incorporate no authenticity test.
}

\remove{ 
Delay attacks require fiddling with the Bitcoin messages. Thus, only specific AS-level adversaries can launch them. Particularly,\emph{(i)} any AS that is naturally traversed by some of the bitcoin traffic of a node can perform a delay attack against it; while \emph{(ii)} only very powerful network attackers that intercept a considerable percentage of the overal bitcoin traffic can attack the bitcoin network as a whole. 
}
\remove{ 
Such attacks are completely invisible, yet very effective in eclipsing specific nodes, as well as in disrupting the bitcoin consensus mechanism in general. This effectiveness is a consequence of the 20 minutes delay an attacker can cause to the delivery of a block, which also creates a cascading effect when many nodes are attacked in parallel. 20 minutes make up a large time interval for Bitcoin, given that the median
propagation delay of blocks is around 7 seconds and the block creation rate is
10 minutes ~\cite{neudecker2015simulation}. During this time, the delayed nodes mine on an obsolete branch, are uninformed of the latest authorized transactions and do not serve the P2P
network by propagating blocks.

We investigate the practicallity and effectiveness of delay attacks at the node and network-level.

With regards to node-level delay attacks, 
in Section \ref{ssec:network_single_node_attack}, we describe how a network adversary
can leverage some key features of the protocol to attack a single node, she
intercepts traffic from or towards. We evaluate the attack on actual Bitcoin
nodes that we control and are connected to the live network. We show that a
node can stay uninformed during 85\% of the time it is under attack without
detecting it. 

With regards to network-wide delay attacks, in Section \ref{ssec:network_delay_attack}, we
investigate the effectiveness of AS-level adversaries in the current Bitcoin topology. For this, we built a realistic Bitcoin simulator. Using it, we show that delay attacks against the current bitcoin network can be disruptive if performed by powerful attacker such as US and China. We further investigate the rubustness of the bitcoin network as a function of the degree of multi-homing pools adopt. Particularly we show that an adequately higher degree of multi-homing than that pools adopt today would render network-wide attacks almost imposible, even for the US. However, we also acknoledge that delay attacks would have been much more disruptive if mining pools had been single-homed.
}

\section{How vulnerable is Bitcoin to routing attacks? A comprehensive measurement analysis}
\label{sec:topology}

Evaluating the impact of routing attacks requires a good understanding of the
routing characteristics of the Bitcoin network. In this section, we explain the
datasets and the techniques used to infer a combined Internet and Bitcoin
topology (Section~\ref{ssec:topo_datasets}). We then discuss our key findings and their impact on the effectiveness of the two routing attacks we consider (Section~\ref{ssec:topo_findings}). 

\subsection{Methodology and datasets}
\label{ssec:topo_datasets}

Our study is based on three key datasets: \emph{(i)} the IP addresses
used by Bitcoin nodes and gateways of pools; \emph{(ii)} the portion of mining power each pool possesses; \emph{(iii)} the forwarding path  taken between any two IPs.
While we collected these datasets over a period of 6 months, starting from
October 2015 through March 2016, we focus on the results from a 10
day period starting from November 5th 2015, as the results of our analysis do not
change much through time.

\remove{
\begin{figure}[t]
 \centering
 \includegraphics[width=.65\columnwidth]{figures/method_topo}
 \caption{Summary of our methodology for inferring the network characteristics of the Bitcoin topology based on network and Bitcoin data.}
 \label{fig:method}
\end{figure}
}


\myitem{Bitcoin IPs} We started by collecting the IPs of regular nodes (which
host no mining power) along with the IPs of the gateways the pools use to connect to the network. We gathered this dataset
by combining information collected by two Bitcoin supernodes with publicly available
data regarding mining pools. One supernode was connected to {\raise.17ex\hbox{$\scriptstyle\sim$}}2000 Bitcoin nodes per day, collecting  block propagation information, while the other was crawling the Bitcoin network, collecting the {\raise.17ex\hbox{$\scriptstyle\sim$}}6{,}000 IPs of active nodes each day.

We inferred which of these IPs act as the gateway of a pool in two steps.
\emph{First}, we used block propagation information (gathered by the first supernode), considering that the gateways of a pool are most likely the
first to propagate the blocks this pool mines. Particularly, we
assigned IPs to pools based on the timing of the \textsf{INV} messages received
by the supernode. We considered a given IP to belong to a gateway
of a pool if: \emph{(i)} it relayed blocks of that pool more than once during
the 10 day period; and \emph{(ii)} it frequently was the first to relay a block of
that pool (at least half as many times as the most frequent node for
that pool). \emph{Second}, we also considered as extra gateways the IPs of the
stratum servers used by each mining pool. Indeed, previous
studies~\cite{miller2015discovering} noted that stratum servers tend to be
co-located in the same prefix as the pool's gateway. Since the URLs of the
stratum servers are public (Section~\ref{sec:background}), we simply resolved
the DNS name (found on the pools websites or by directly connecting to them)
and add the corresponding IPs to our IP-to-pool dataset.

\myitem{Mining power} To infer the mining power attached to pools, we
tracked how many blocks each pool mined during the 10 days
interval~\cite{blockchain} and simply assigned them a proportional share of the
total mining power.

\myitem{AS-level topology and forwarding paths} We used the AS-level topologies
provided by CAIDA~\cite{caida_as_level} to infer the forwarding paths taken
between any two ASes. An AS-level topology is a directed graph in which a
node corresponds to an AS and a link represents an inter-domain connection
between two neighboring ASes. Links are labeled with the business relationship
linking the two ASes (customer, peer or provider). We computed the actual
forwarding paths following the routing tree algorithm described
in~\cite{goldberg_how_secure_are_interdomain_routing_protocols} which takes
into account the business relationships between ASes.

\remove{
We build a model of the Internet topology by processing more than \emph{2.5 million} BGP routes (covering all Internet prefixes) along with \emph{4 billion} BGP updates advertised on 182 BGP sessions maintained by 3 RIPE BGP collectors~\cite{ripe:ris} (rrc00, rrc01 and rrc03). We collected these routes and updates daily, for 4 months, between Oct 2015 and Jan 2016.
}


\myitem{Mapping Bitcoin nodes to ASes} We finally inferred the most-specific
prefix and the AS hosting each Bitcoin node by processing more than \emph{2.5
million} BGP routes (covering all Internet prefixes) advertised on 182 BGP
sessions maintained by 3 RIPE BGP collectors~\cite{ripe:ris} (rrc00, rrc01 and
rrc03). The mapping is done by associating each prefix to the origin AS
advertising it and by validating the stability of that origin AS over time
(to avoid having the mapping polluted by hijacks).


\remove{
\begin{table}[t]
\centering
\small
\def\arraystretch{1}
\begin{tabular}{@{}lclcl@{}}
\toprule
\emph{Observation} && \emph{for partition}  && \emph{for delay}   \\
\midrule
>90\% of nodes in short prefixes && Good && Good  \\
Concentration in AS hosting && Bad && Good \\
Concentration in AS paths && - && Good  \\
Destributed/Muti-homed pools && Bad && Bad  \\

\bottomrule
\end{tabular}
\caption{We summarize here the relevant network characteristics of the Bitcoin network as well as their consequences with regards to our attacks. Bad means that it makes the attack more difficult and good easier. }
\label{tab:insides}
\end{table}
}

\subsection{Findings}
\label{ssec:topo_findings}

We now discuss the key characteristics of the Bitcoin network from the Internet routing perspective. We explain which of them constitute enablers or hindrances for an AS-level attacker. 

\remove{
\begin{table}[t]
\centering
\small
\def\arraystretch{1.1}
\begin{tabular}{@{}lccccc@{}}
\toprule
\textbf{Degree of multi-homing} && {\emph{1}} & \emph{5} & {\emph{10}} & \emph{15} \\
\color{gray}{\# ASes used to access the Bitcoin network} && & & & \\
\midrule
\textbf{F2Pool} && 1 & 10 &  16 & 22  \\
\textbf{AntPool} && 1  & 8 &  17 & 23 \\
\textbf{Bitfurry} && 1  & 5 &  13 & 19  \\
\textbf{BTCC}  && 1 & 6 &  11 & 17  \\
\bottomrule
\end{tabular}
\caption{Number of gateways for each of the four biggest pools as a function of the degree of multi-homing they adopt.}
\label{tab:ips}
\end{table}

\begin{figure}[t]
 \centering
 \includegraphics[width=.65\columnwidth]{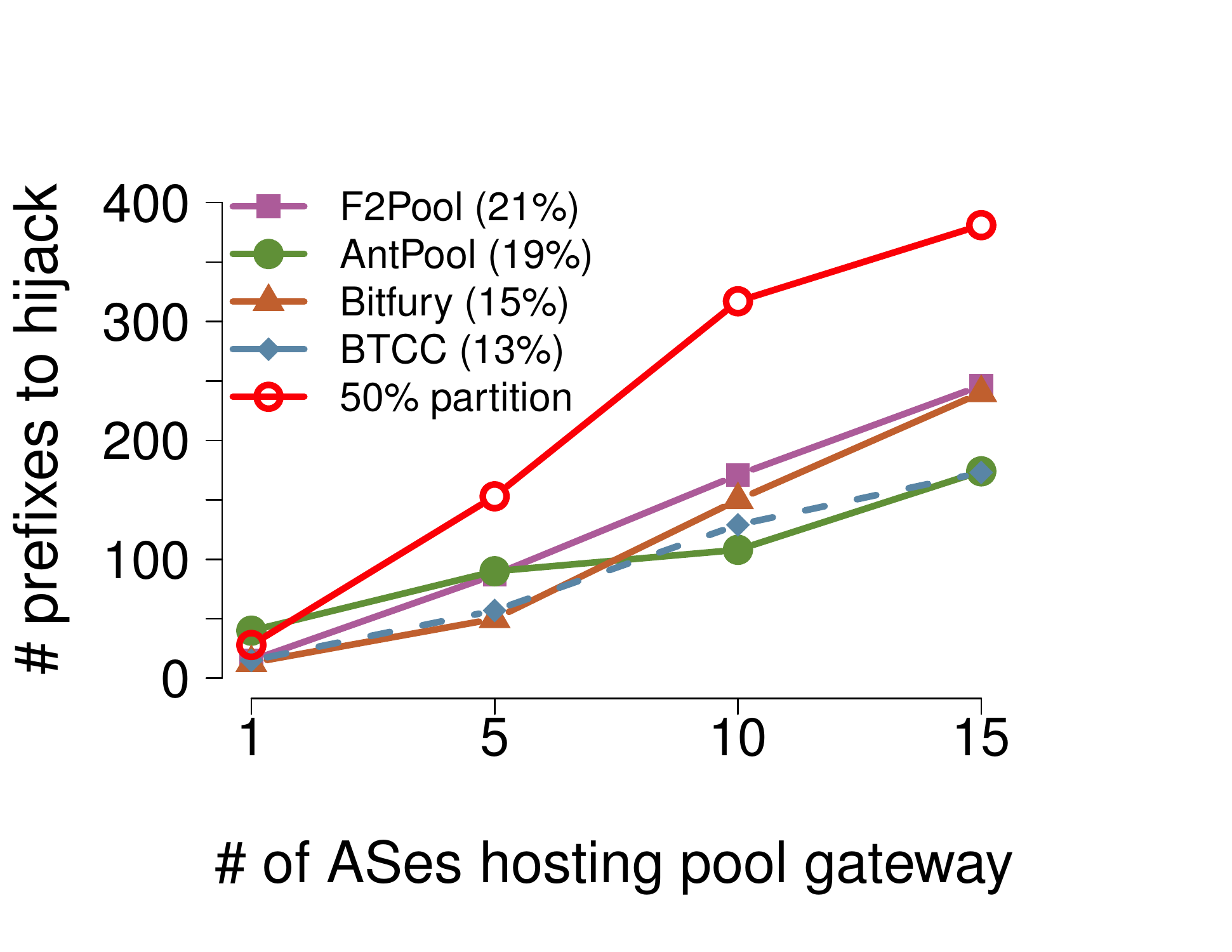}
 \caption{The number of prefixes to hijack needed to isolate a mining pool increases as they get better connected. Yet, hijacking only 246 prefixes is enough to isolate F2Pool, the largest pool with 21\% of the mining power. Overall, hijacking less than 400 prefixes is needed to isolate 50\% of the mining power assuming pools are multi-homed in 15 ASes. This is \emph{well} within what is currently seen in the Internet (see Fig.~\ref{fig:hijack_number}).}
 \label{fig:

 interception_pools}
\end{figure}
}
\remove{
Here we key-characteristics of the Bitcoin network that enable passive and active attackers to intercept a large amount of Bitcoin traffic ~\ref{ssec:bitcoin_topology}. As passive adversaries have no control of which connections they will intercept, we elaborate only on how active adversaries can pick a suitable set of prefixes to hijack to increase their effectiveness in ~\ref{ssec:active_interception_attacks}.

\subsection{Measurement analysis}
\label{ssec:bitcoin_topology}

}

\myitem{A few ASes host most of the Bitcoin nodes} Fig.~\ref{fig:cdf_bitcoin_per_ases} depicts the
cumulative fraction of Bitcoin nodes as a function of the number of hosting
ASes. We see that only 13 (resp. 50) ASes host 30\% (resp. 50\%) of the entire
Bitcoin network. 
These ASes pertain to broadband providers such as Comcast (US), Verizon (US) or Chinanet (CN) as well as to cloud providers such as Hetzner (DE), OVH (FR) and Amazon (US). We observe the same kind of concentration when considering the distribution of Bitcoin nodes per IP prefix: only 63 prefixes
(0.012\% of the Internet) host 20\% of the network.

Regarding delay attacks, this high concentration makes Bitcoin traffic more
easy to intercept and therefore more vulnerable. With few ASes hosting many
nodes, any AS on-path (including the host ASes) is likely to intercept many
connections at once, making delay attacks more disruptive. Regarding partition
attacks, the effect of the concentration is a bit more nuanced. Indeed, high concentration reduces the total number of feasible partitions because of intra-AS connections that cannot be intercepted (Section~\ref{ssec:hidden}). At the same time, tough, the remaining feasible partitions are much easier to achieve since they require fewer hijacked prefixes (Section~\ref{sec:partition}).


\begin{figure}[t]
 \centering
 \begin{subfigure}[t]{0.47\columnwidth}
\includegraphics[width=.9\columnwidth]{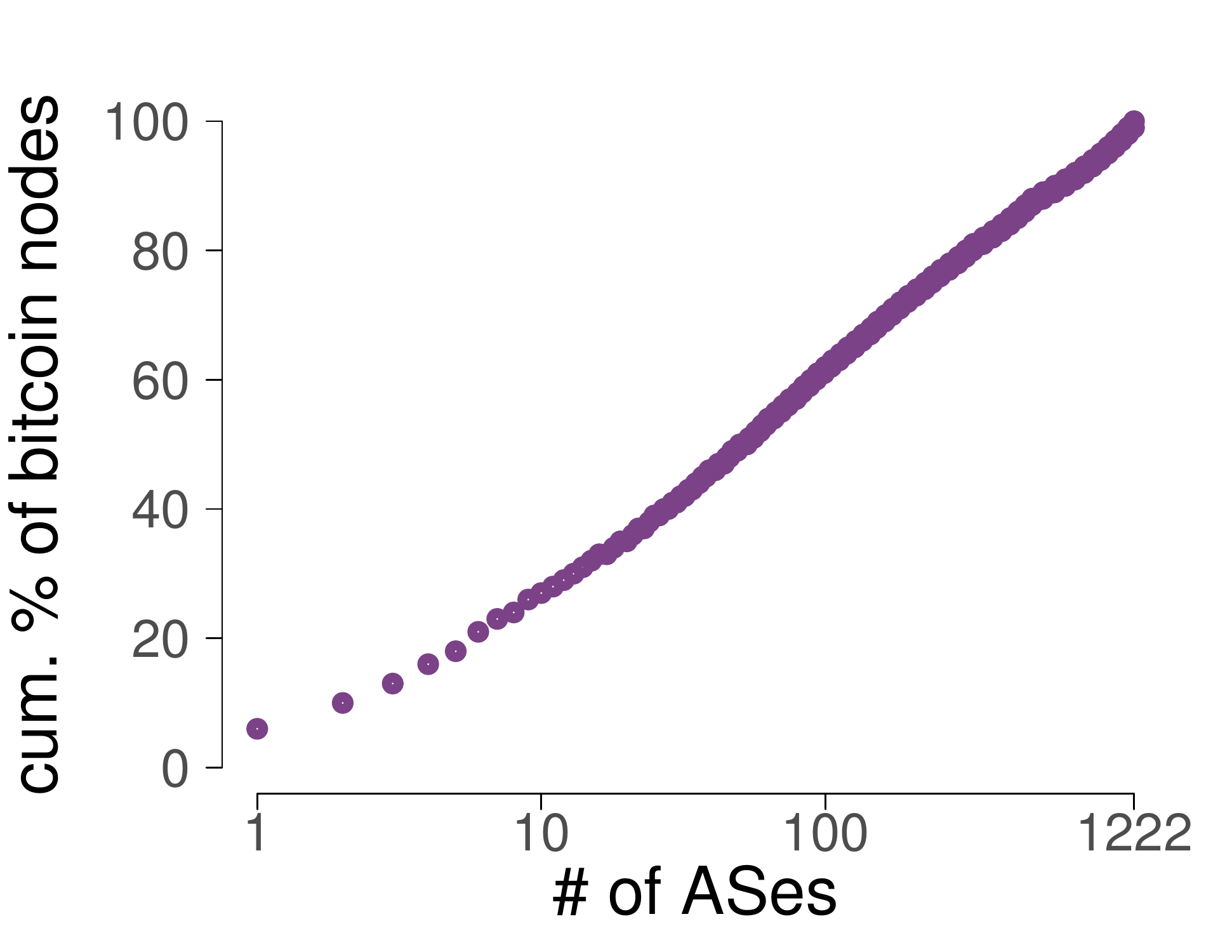}
	 \caption{Only
13 ASes host 30\% of the entire network, while 50 ASes host 50\% of the Bitcoin
network.}
	 \label{fig:cdf_bitcoin_per_ases}
 \end{subfigure}
 \hfill
 \begin{subfigure}[t]{0.47\columnwidth} 	\includegraphics[width=.9\columnwidth]{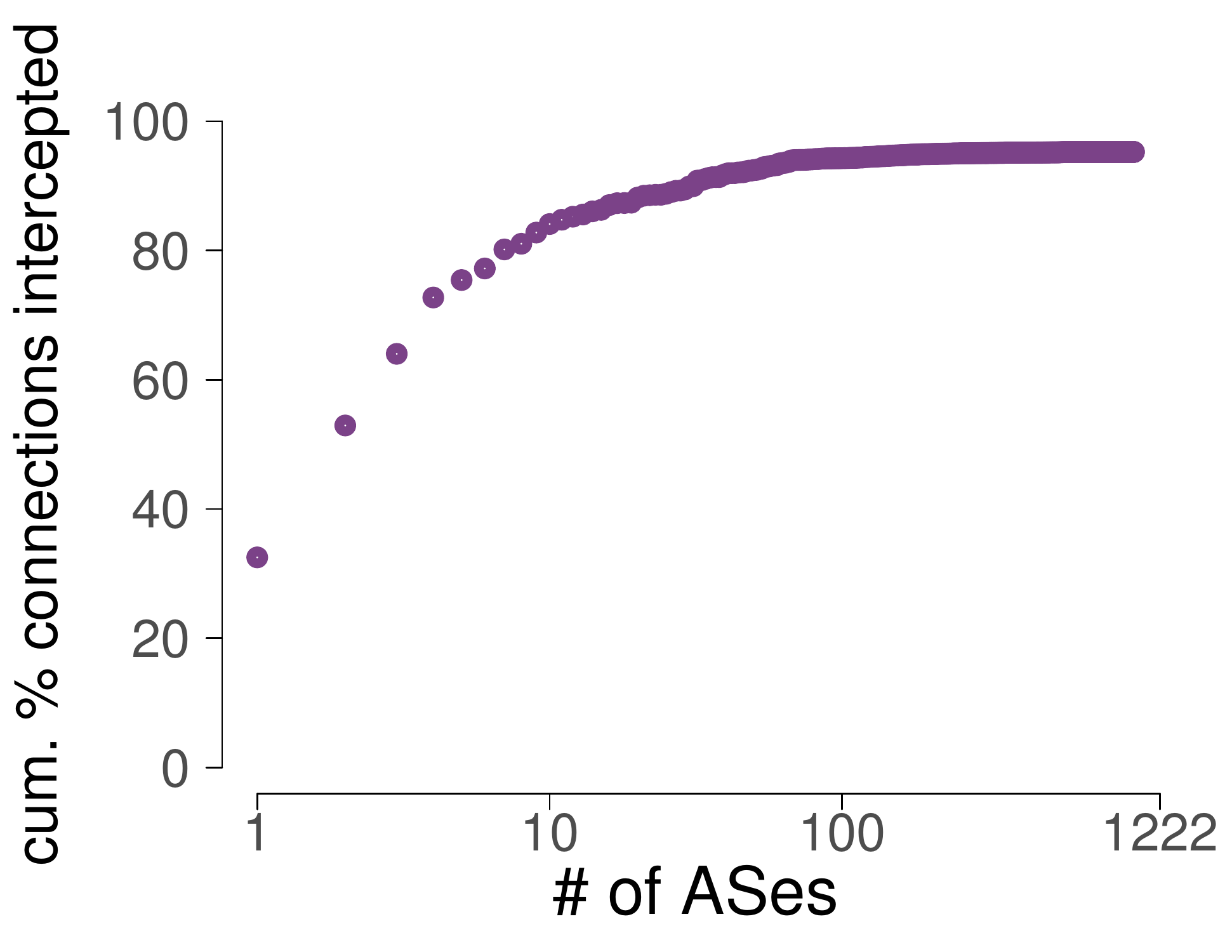}
	 \caption{Few ASes intercept large percentages of Bitcoin traffic: 3 of them intercept 60\% of all possible Bitcoin connections.}
	 \label{fig:cdf_connection_per_ases}
 \end{subfigure}
 \caption{Bitcoin is heavily centralized from a routing viewpoint.}
 \label{fig:centralization}
\end{figure}

\myitem{A few ASes naturally intercept the majority of the Bitcoin traffic}
Large transit providers (\emph{i.e.}, Tier-1s) tend to be traversed by a large fraction of all the Bitcoin connections.
Fig.~\ref{fig:cdf_connection_per_ases} depicts the cumulative percentage of
connections that can be intercepted by an increasing number of ASes
(\emph{e.g.}, by colluding with each other). We see that only \emph{three}
ASes, namely Hurricane Electric, Level3, and Telianet, can together intercept
\emph{more than 60\% of all possible Bitcoin connections}, with Hurricane alone
being on path for 32\% of all connections. 

Regarding delay attacks, these few ASes could act as powerful delay attackers.
Regarding partition attacks, this observation does not have any direct
implication as partitioning requires a \emph{full} cut to be effective
(Section~\ref{sec:partition}).


\myitem{$>$90\% of Bitcoin nodes are vulnerable to BGP hijacks} 93\% of all
prefixes hosting Bitcoin nodes are shorter than /24, making them vulnerable to
a global IP hijack using more-specific announcements. Indeed, prefixes strictly
longer than /24 (i.e., /25 or longer) are filtered by default by many ISPs. Observe that the remaining 7\% hosted in /24s are \emph{not}
necessarily safe. These can still be hijacked by another AS performing a
shortest-path attack, i.e., the attacker, who will advertise a /24 just like
the victim's provider will attract traffic from all ASes that are closer to her in terms of number of hops.

While this finding does not have a direct impact on delay attacks, it clearly helps partition attackers as they can divert almost all Bitcoin traffic to their infrastructure (modulo stealth connections, see Section~\ref{sec:partition}).

\myitem{Mining pools tend to be distributed and multi-homed}
Mining pools have a complex infrastructure compared to regular nodes.
We found that \emph{all} pools use at least two ASes to connect to the
Bitcoin network, while larger pools such as Antpool, F2Pools,
GHash.IO, Kano use up to 5 ASes. 


Pool multi-homing makes both network attacks more challenging and is one of the
main precaution measures node owners can use against routing attacks. While harder, routing attacks are still possible in the presence of multi-homing as
we illustrate in Section~\ref{sec:delay_results}.

\myitem{Bitcoin routing properties are stable over time} While numerous nodes
continuously join and leave the Bitcoin network, the routing properties
highlighted in this section are stable. As validation, we ran our analysis
daily over a 4 month period. We found that the same IPs were present on average
for 15.2 consecutive days (excluding IPs that were seen only once). Moreover, 50 ASes hosted each day 49.5\% of Bitcoin clients (standard deviation: 1.2\%)
while 24.7\% of Bitcoin nodes are found daily in just 100 prefixes (standard
deviation: 1.77\%).

\remove{
\myit{Actively Interception attacks using centralization}
\label{ssec:active_interception_attacks}
In this section we show how an active or hybrid adversary can overcome the challenges of multi-homing, NATed nodes, and relays as well as reduce the cost of the attack in diverted bandwidth and number hijacked of prefixes. 

We particularly focus on a 50\% partition, intuitively any smaller partition is easier to create.

Active attackers are able to completely isolate
any set containing regular nodes or/and multi-homed pools, effectively creating a 
partition of nodes or mining power respectively even as large as 50\%. 
For isolating any set of regular nodes an AS adversary would need to 
intercept all their traffic. By hijacking all the prefixes of these nodes the attacker 
intercept all traffic destined to them. As such the attacker can simply prevent them 
from connecting to any node outside the set.
For isolating a pool an attacker needs to make sure  all  
gateways of a pool are hijacked. For that an AS adversary would need to know all gateways belonging to the pools he wants to include in the isolated set.

After ensuring a complete cut between the two parts The attacker's goal is to lower its footprint,
\emph{i.e.} the amount of prefixes hijacked. In the following we explain the methodology we used to estimate the smallest amount of prefixes needed to create partitions of diffrent sizes.
We always consider
the
extreme case, in which the AS-level adversary: \emph{i)} does not see \emph{any} traffic
naturally; and \emph{ii)} does not host any Bitcoin nodes.
Especially for the partition of mining power we further investigate the influence of multi-homing degree per pool in the cost of such a partition.

 }

\remove{
\myitem {Estimating the cost of isolating mining power.}
For each mapping of IPs to a pool we calculate the cost of isolating it in number of prefixes 
that need to be hijacked. The exact results for the four biggest pools under five 
assumptions regarding the degree of multi-homing are 
presented in Table~\ref{tab:hijacks50}  Due to the concentration of Bitcoin from the 
routing perspective we found 
significant overlap among the required hijacked prefixes to isolate distinct pools. As such 
the cost of a largest partition is considerably limited in comparison to the sum of prefixes 
required per pool. We present the results for partitions containing from 49\% to 50\% of mining power. Intuitively there are smaller partitions that require  require less prefixes. We calculated these numbers using exhaustive search over all the subsets of pools partitions of 49\% to 51\% and picked the one with the least required prefixes. We found that 
only 381 prefixes needed to be hijacked even in the extreme case that the four biggest 
pools accounting for almost 70\% of mining power are hosted in 15 ASes each. When considering lower degree of multi-homing such as 5 ASes per pool  the number is reduced to half.

\myitem {Estimating the cost of isolating regular 
nodes.} For each AS with Bitcoin nodes we assign a score that equals to the ratio of the number of nodes nodes it
contains over the number of prefixes with Bitcoin nodes it advertises. The
attacker then hijacks the ASes with the higher scores first. We found that \emph{870 prefixes} need to be hijacked to isolate 50\% of the public Bitcoin nodes while 119 hijacked [refix would isolate 20\% of the public Bitcoin nodes.

\myitem{Dealing with hijacked bandwidth}
While deciding the right prefixes to hijack an attacker should also take bandwidth demands associated with the hijacked prefixes into consideration. In light of that the attacker might prefer to increase the hijacked prefixes to decrease the traffic that would be diverted to his infrastructure even if such a practice requires hijacking more prefixes. 
Going back to our example in figure ~\ref{fig:overview} if AS8 can indeed hijack one prefix namely 1.0.0.1/16 to intercept all traffic destine to nodes A and B. In this case the attacker would have to deal with all the traffic destined to a /16 prefix. However if A and B belong to a common more specific prefix then an attacker would for sure prefer to hijack this one as the diverted load would be less. The attacker can reduce the load even further by advertising only /24 prefixes even if that means he needs to advertise more prefixes. 

 \laurent{Rephrase the above in lights of the hijack study}.


\myitem{Hybrid attackers} 
In practice, nothing prevents AS-level adversaries to combine active and passive
attacks to reduce their footprint even
further. To give an idea of such approach we present what the cost of creating the same partitioning would be if performed by the AS that was harder to hijack  (require more prefixes). For isolating 50\% of the nodes , X
prefixes are enough for a hybrid attacker. For isolating 50\% of mining power a hybrid attacker would need 13, 123,253 and 337 prefixes under the assumption that mining pools are hosted at most at 1,5,10 and 15 ASes respectively.

\begin{table}[t]
	\centering
	\small
	\def\arraystretch{1.1}
	\begin{tabular}{@{}lclcccc@{}}
		\toprule
		\cmidrule(lr){5-6}
		\textbf{Country} && \textbf{Connections} (\%) && \emph{Partition regular nodes} \\			
		\midrule
		US && 89.8 && 37.0 \\
		DE && 32.8 && 8.8  \\
		GB && 31.4 && 5.4  \\
		SE && 26.7 && 2.0  \\
		CN && 12.2 && 5.4  \\
		\bottomrule
	\end{tabular}
	\caption{Cumulated power ASes in the same country have over Bitcoin measured in \% of connections they intercept and the most disruptive partition of regular nodes they can achieve. \laurent{Update results} 
	}
	\label{tab:country_partition}
\end{table}

}
\remove{
\remove{ with
{\raise.17ex\hbox{$\scriptstyle\sim$}}70\% of the mining power held by four
mining pools: F2Pool (21\%), AntPool (19\%), Bitfurry (15\%) and BTCC
(13\%).
}

\myitem{Bitcoin INV Messages} To infer the pool gateways we analyzed the INV messages advertising blocks which received by a super nodes. This node was connected to approximately 2{,}000 Bitcoin nodes per day (with a mean of 2174 and a median of 1875). It saw a total of 7{,}477 distinct IPs during the 10 days and was relayed a total of 1{,}440 blocks. We assigned to a pool based on how frequent this node was the first to advertise a block mined by this pool.

By combining this dataset with the BGP one, we then created  node-to-AS mapping
each Bitcoin IP address to the origin AS advertising the most-specific prefix
covering it, giving us a .
}

\remove{
Nexious :    #ases 1    #ips 1
GHash.IO :    #ases 4    #ips 9   1.7
NiceHash :    #ases 1    #ips 1
21 :    #ases 2    #ips 2 3.5
A-XBT :    #ases 2    #ips 2
BitMinter :    #ases 2    #ips 2
Telco :    #ases 2    #ips 2
unknown :    #ases 1    #ips 1
Slush :    #ases 2    #ips 2  6
Kano :    #ases 5    #ips 6
BW.COM :    #ases 2    #ips 2   5
AntPool :    #ases 4    #ips 7 20
EclipseMC :    #ases 2    #ips 2  1.7
Bitcoin :    #ases 2    #ips 2
BitFury :    #ases 2    #ips 2  15
Eligius :    #ases 3    #ips 4
BTCC :    #ases 2    #ips 2
KnCMiner :    #ases 1    #ips 1
F2Pool :    #ases 3    #ips 6 21
BitClub :    #ases 2    #ips 2
}

\section{Partitioning Bitcoin: Evaluation}
\label{sec:eval_partition}

In this section, we evaluate the practicality and effectiveness of partitioning
attacks by considering four different aspects of the attack. \emph{First}, we show that
diverting Bitcoin traffic using BGP hijacks works in practice by performing
an actual hijack targeting our own Bitcoin nodes (Section~\ref{ssec:hijackExp}). \emph{Second}, we show that
hijacking fewer than 100 prefixes is enough to isolate a large
amount of the mining power due to Bitcoin's centralization (Section~\ref{ssec:nb_prefixes_hijack}). \emph{Third}, we show
that much larger hijacks already happen in the Internet today, some already
diverting Bitcoin traffic (Section~\ref{ssec:hijack_real}). \emph{Fourth}, we show that Bitcoin quickly recovers
from a partition attack once it has stopped (Section~\ref{ssec:methodology_experiment}).

\subsection{How long does it take to divert traffic with a hijack?}
\label{ssec:hijackExp}

We hijacked and isolated our own Bitcoin nodes which were connected to the
live network via our own public IP prefixes. In the following, we
describe our methodology as well as our findings with regard to the time it
takes for a partition to be established.

\myitem{Methodology} We built our own virtual AS with full BGP connectivity
using Transit Portal (TP)~\cite{peering}. TP provides virtual ASes with BGP
connectivity to the rest of the Internet by proxying their BGP announcements
via multiple worldwide deployments, essentially acting as a multi-homed provider to 
the virtual AS. In our experiment, we used the \texttt{Amsterdam} TP deployment as
provider and advertised 184.164.232.0/22 to it. Our virtual AS hosted six bitcoin 
nodes (v/Satoshi:0.13.0/). Each node had a public IP in 184.164.232.0/22 (.1 to .6 addresses)
 and could therefore accept connections from any other Bitcoin node in the Internet.

\remove{
traffic destined to IP addresses within that range will be routed first to the
Amsterdam01 TP, and then sent to our machines via the corresponding tunnel.
}

We performed a partition attack against our 6 nodes using BGP hijacking. For
this, we used \texttt{Cornell}, another TP deployment, as the malicious AS.
Specifically, we advertised the prefix 184.164.235.0/24 via the
\texttt{Cornell} TP. This advertisement is a more-specific prefix with respect
to the announcement coming from the \texttt{Amsterdam} TP and covers
all the IPs of our nodes. Thus, after the new prefix announcement is propagated
throughout the Internet, Bitcoin traffic directed to any of our nodes will
transit via \texttt{Cornell} instead of \texttt{Amsterdam}. To mimic an
interception attack (Section~\ref{sec:background}), we configured the
\texttt{Cornell} TP to forward traffic back to our AS. As such,
connections to our nodes stayed up during the hijack even though they
experienced a higher delay.

We performed the attacks 30 times and measured the time elapsed
from announcement of the most specific prefix until all traffic towards our
nodes was sent via the \texttt{Cornell} TP.

\begin{figure}[t]
 \centering
 \includegraphics[width=.6\columnwidth]{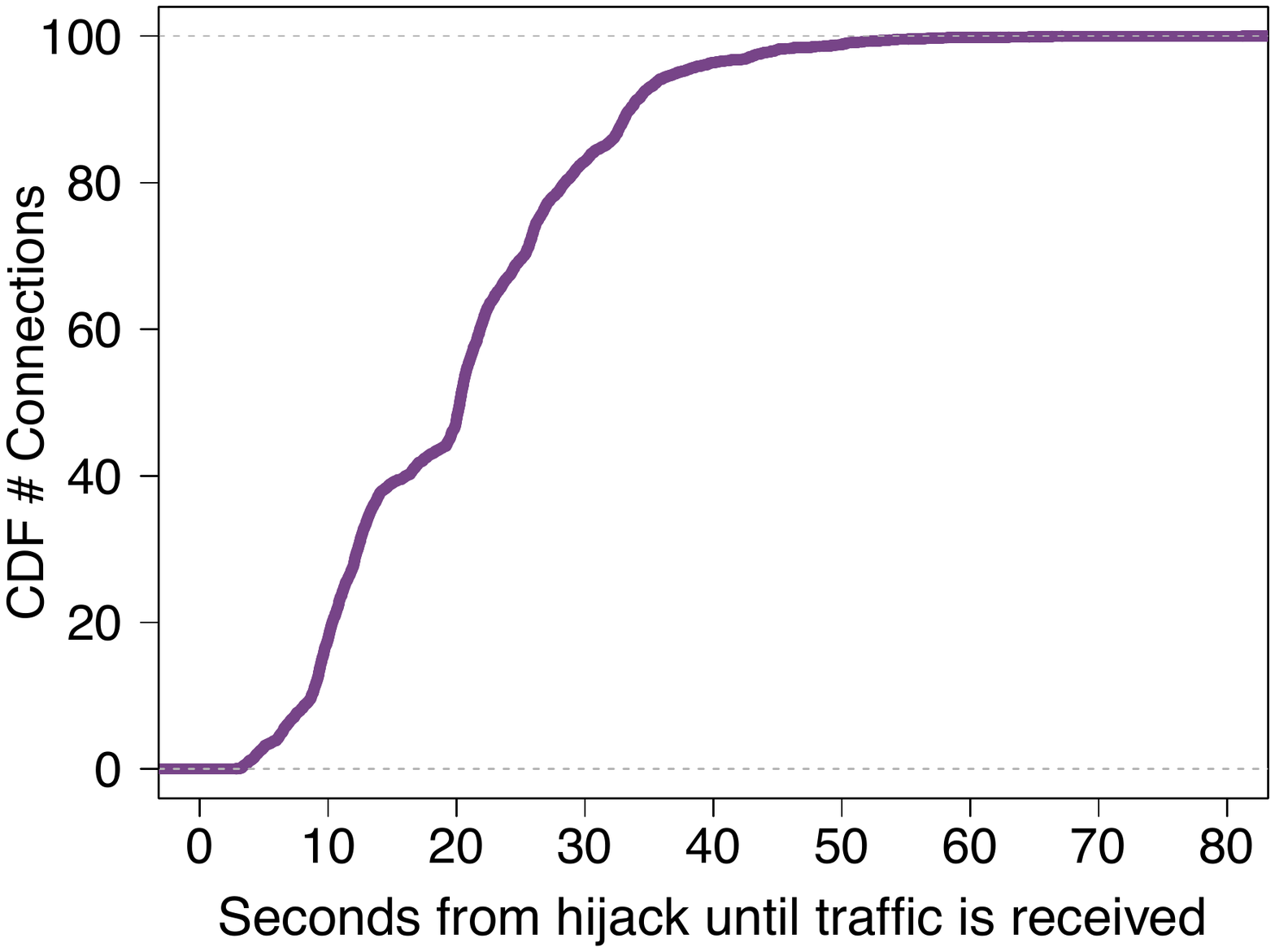}
 \caption{Intercepting Bitcoin traffic using BGP hijack is fast and effective: all the traffic was flowing through the attacker within 90 seconds. Results computed while performing an \emph{actual} BGP hijack against our own Bitcoin nodes.}
 \label{fig:hijack}
\end{figure}

\myitem{Diverting Bitcoin traffic via BGP is fast (takes <2 minutes)} The
results of our experiment are shown in Fig.~\ref{fig:hijack}. The main insight
is that the attacker received the hijacked traffic very quickly. After only 20 seconds, more than 50\% of the connections are diverted. Within 1.5 minutes, all traffic was transiting via the malicious AS. Thus, attacked nodes are effectively isolated almost as soon as the hijack starts.

We took great care to ensure that our experiments did not negatively impact the actual Bitcoin
network. We discuss the ethical considerations behind our experiments in
Appendix~\ref{sssec:ethic}.

\remove{More importantly this number should be compared to the time it will take to stop the attacker by disconnecting the attacker from the Internet which will take at least a couple of hours.}  

\remove{
\myitem{Practicality} 
Such an attack can be performed against the vast majority of the Bitcoin nodes in the current network. Particularly, to perform a partition attack the adversary needs to intercept all traffic from hijacked set to the rest of the network.  nodes are those which are hosted in AS with gateways as well as in ASes with nodes which use /24. That is because if we assume that a potential victim connects to a node in the same AS but which use /24 or is a gateway then the attacker will not intercept this traffic. Even if we consider these nodes safe 61\% of nodes are still vulnerable. Particularly, 92\% use a /24 prefix, 82.6\% have no /24 in their AS and 76\% have no gateway in their AS. It is very important to note here that the rest 40\% is also vulnerable to be affected by a partition attack. The difference is that for these nodes can be placed in any of the two components.
}


\subsection{How many prefixes must be hijacked to isolate mining power?}
\label{ssec:nb_prefixes_hijack}

Having shown that hijacking prefixes is an efficient way to divert Bitcoin
traffic, we now study the practicality of isolating a specific set of nodes $P$.
We focus on isolating sets holding
mining power because they are: \emph{(i)} more challenging to perform (as
mining pools tend to be multi-homed); and \emph{(ii)} more disruptive as
successfully partitioning mining power can lead to the creation of parallel
branches in the blockchain.

To that end, we first
estimate the number of prefixes the attacker needs to hijack to isolate a specific set of nodes as a function of the mining power they hold. In the following subsection, we evaluate how practical
a hijack of that many prefixes is with respect to the hijacks that frequently take place in the Internet. 

\remove{we evaluate how practical
is a hijack of that many prefixes with respect to the hijacks that frequently take place in the Internet. }

\myitem{Methodology} As described in Section~\ref{sec:partition}, not all sets
of nodes can be isolated as some connections cannot be intercepted. We
therefore only determine the number of prefixes required to isolate sets $P$ that are
feasible in the topology we inferred in Section~\ref{sec:topology}. In
particular, we only consider the sets of nodes that contain: \emph{(i)} either
all nodes of an AS or none of them; and \emph{(ii)} either the entire mining pool, namely all of its 
gateways or none of them. With these restrictions, we essentially avoid the possibility of having any leakage point within $P$, that is caused by an intra-AS or intra-pool stealth connection. 
However, we cannot account for secret peering agreements that may or may not exist between pools. Such agreements are inherently kept private and their existence is difficult to ascertain.

\remove{
6\%\-94\% && 2 && 86.0 && 20\\
7\%\-93\% && 7 && 72.0 && 23\\
8\%\-92\% && 32 && 69.5 && 14\\
30\%\-70\% && 83 && 83.0 && 1\\
39\%\-61\% && 32 && 51.0 && 11\\
40\%\-60\% && 37 && 80.0 && 8\\
41\%\-59\% && 44 && 55.0 && 3\\
45\%\-55\% && 34 && 41.0 && 5\\
46\%\-54\% && 78 && 78.0 && 1\\
47\%\-53\% && 39 && 39.0 && 1\\
}

\begin{table}[t]
\centering
\small
\def\arraystretch{1.1}
\setlength{\tabcolsep}{7.5pt}
\begin{tabular}{l l l l}
\toprule
\begin{tabular}{@{}l@{}} \emph{Isolated} \\ \emph{mining power}\end{tabular} & \begin{tabular}{@{}l@{}} \emph{min. \# pfxes} \\ \emph{to hijack} \end{tabular} & \begin{tabular}{@{}l@{}} \emph{median \# pfxes} \\ \emph{to hijack}\end{tabular} & \begin{tabular}{@{}l@{}} \emph{\# feasible} \\ \emph{partitions}\end{tabular}   \\

\midrule
8\%\ & 32 & 70 & 14\\
30\%\ & 83 & 83 & 1\\
40\%\ & 37 & 80 & 8\\
47\%\ & 39 & 39 & 1\\
\bottomrule
\end{tabular}
\caption{Hijacking <100 prefixes is enough to feasibly partition {\raise.17ex\hbox{$\scriptstyle\sim$}}50\% of the mining power. Complete table in Appendix~\ref{ssec:partitions}.}
\label{tab:partition_costs}
\end{table}

\remove{ 
\myitem{Hijacking <100 prefixes is enough to isolate {\raise.17ex\hbox{$\scriptstyle\sim$}}50\% of Bitcoin mining power}
As described in Section~\ref{sec:partition}, not all sets of nodes can be
isolated as some connections cannot be intercepted. We therefore evaluate the
number of prefixes required to isolate sets $P$ that are feasible in the
topology we inferred in Section~\ref{sec:topology}. In particular, we only
consider the sets of nodes that contain: \emph{(i)} either all nodes of an AS
or none of them; and \emph{(ii)} either all gateways of a mining pool or none
of them. These restrictions restricts the values of the isolated mining power
to few discrete values (since not all partitions can be realized). Moreover,
some pairs of pools cannot be disconnected as they have gateways within the
same AS.

Particularly, an attacker can isolate a set of nodes $P$, if there are no stealth connections from nodes inside $P$ to nodes outside it. We explained in Section~\ref{sec:partition} how the attacker can prevent such connections from bridging the isolated set to the rest of the network by monitoring the connections she hijacks.
For the purpose of off-line estimating the number of required prefix hijacks though,\maria{with off-line I mean for estimating-inferring the number that would be needed as opposed to performing the attack and decide on the fly}
}

\myitem{Hijacking <100 prefixes is enough to isolate {\raise.17ex\hbox{$\scriptstyle\sim$}}50\% of Bitcoin mining power}
In Table~\ref{tab:partition_costs} we show the number of different feasible sets of nodes we found containing the same amount of mining power (4th and 1st column, respectively). We also include the minimum and median number of the prefixes the attacker would need to hijack to isolate each portion of mining power (2nd and 3rd column, respectively).
\remove{ Our results are summarized in Table I. The leftmost column describes the portion of mining power contained in the isolated set of nodes. We indicate the number of feasible sets containing the same amount of mining power along with the minimum and median amount of prefixes to hijack to create the partitions in practice.
}

As predicted by the centralization of the Bitcoin network
(Section~\ref{sec:topology}), the number of prefixes an attacker needs to hijack to create a
feasible partition is small: hijacking less than 100 prefixes is enough to
isolate up to 47\% of the mining power. As we will describe next, hijack events
involving similar numbers of prefixes happen regularly in the Internet. Notice
that the number of prefixes is not proportional to the isolated mining power.
For example, there is a set of nodes representing 47\% of mining power that can
be isolated by hijacking 39 prefixes, while isolating 30\% of the mining power belonging to different pools would require 83 prefixes. Indeed, an attacker can isolate additional mining power
with the same number of hijacked prefixes when several pools
in the isolated set are hosted in the same ASes.

\remove{
While our results are inherently dependent on the accuracy of our inferred
topology (Section~\ref{sec:topology}), we stress that attackers may slowly gain
better information, e.g., through repeated stealthy hijacks of the target. \maria{I don't understand that}
}

\begin{figure}[t]
	\centering
	\begin{subfigure}[t]{0.47\columnwidth}
		\includegraphics[width=\columnwidth]{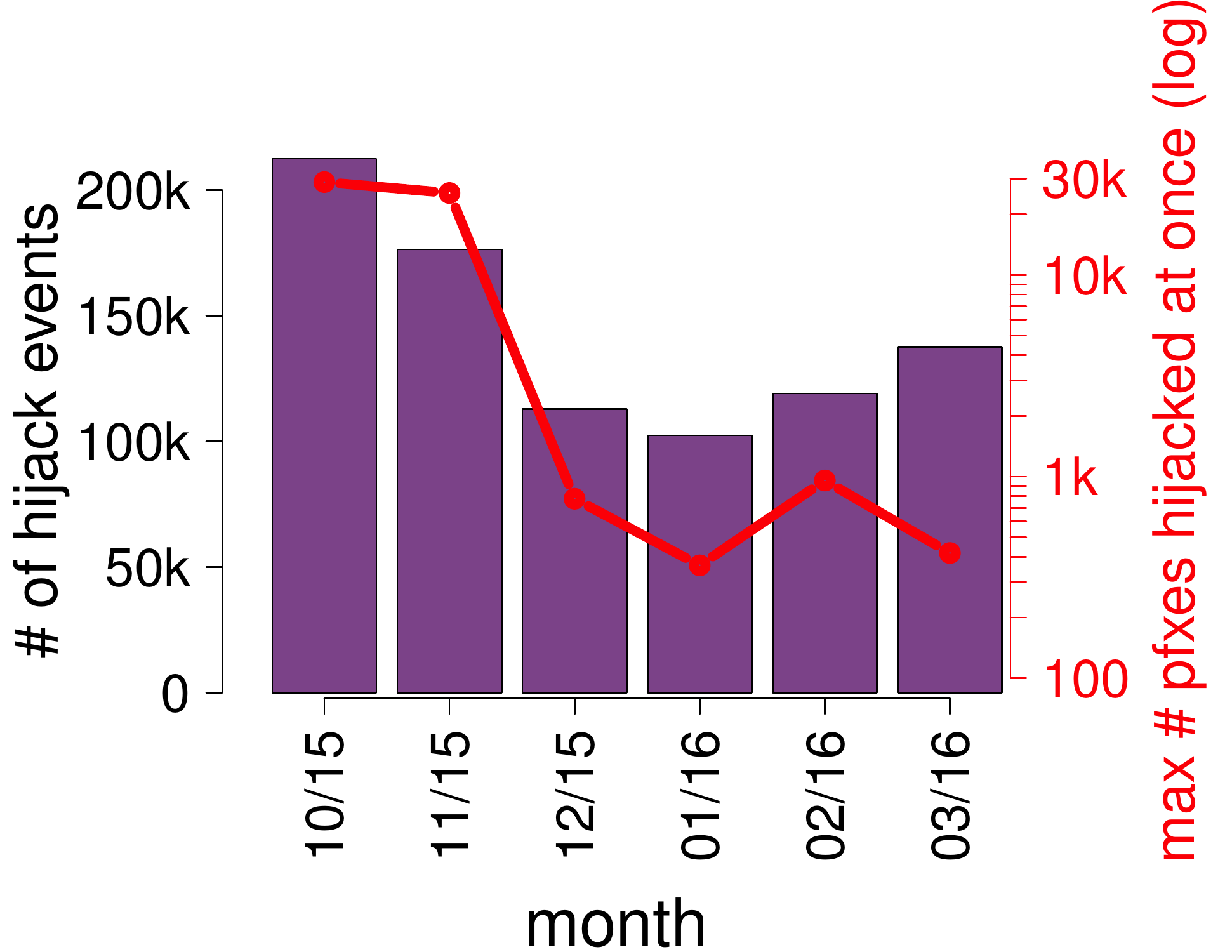}
		\caption{Each month sees \emph{at least} 100{,}000 hijacks, some of which involve \emph{thousands} of prefixes.}
		\label{fig:hijack_number}
	\end{subfigure}
	\hfill
	\begin{subfigure}[t]{0.47\columnwidth} 	\includegraphics[width=\columnwidth]{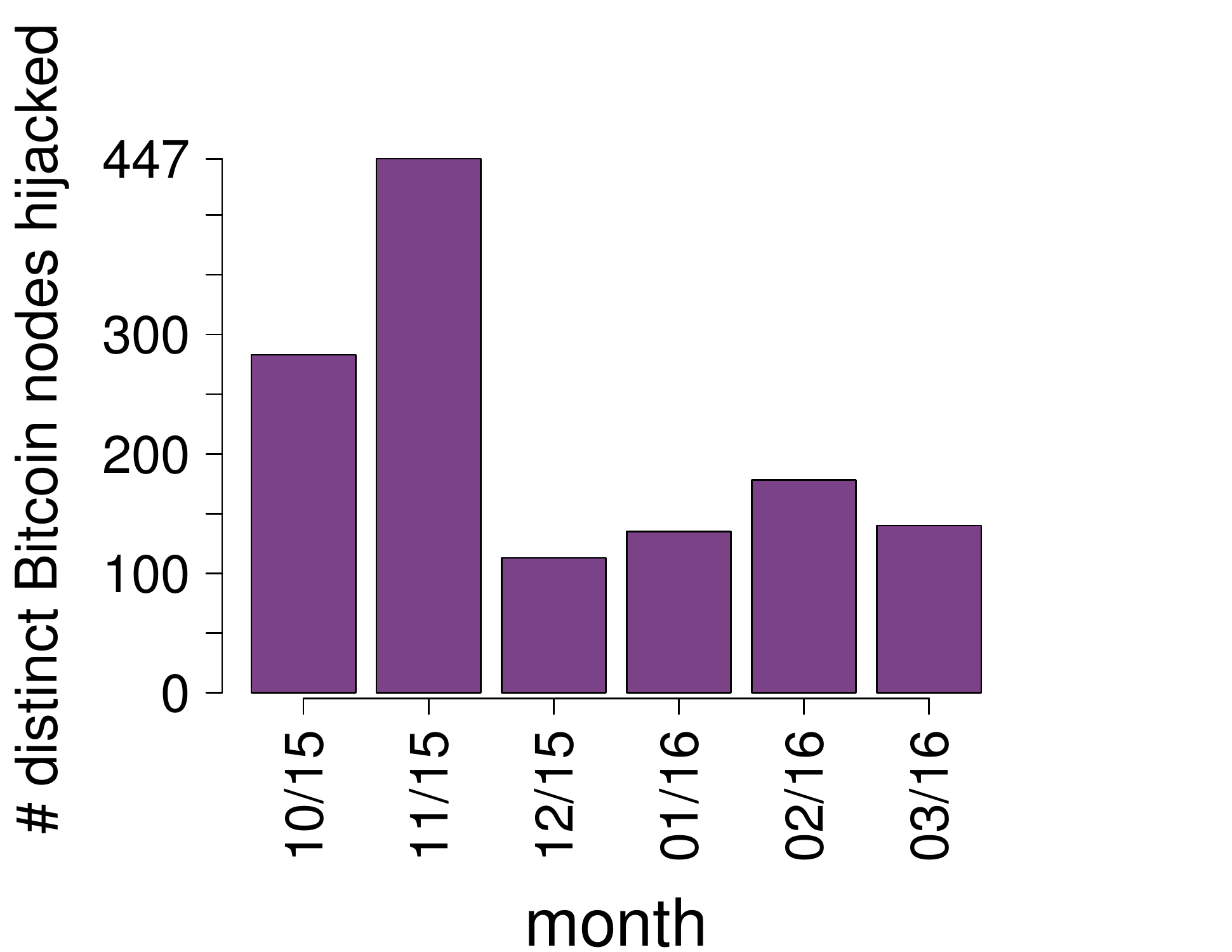}
		\caption{Each month, traffic for at least 100 \emph{distinct} Bitcoin nodes end up diverted by hijacks.}
		\label{fig:hijack_size}
	\end{subfigure}
	\caption{Routing manipulation (BGP hijacks) are prevalent today and do impact Bitcoin traffic.}
	\label{fig:hijack_analysis}
\end{figure}

\vfill
\subsection{How many hijacks happen today? Do they impact Bitcoin?}
\label{ssec:hijack_real}

Having an estimate of the number of prefixes that need to be hijacked to partition the entire network, we now look at how common such hijacks
are over a 6-months window, from October 2015 to March 2016. We show that BGP
hijacks are not only prevalent, but also end up diverting Bitcoin traffic.

\newpage
\noindent\textbf{Methodology} We detected BGP hijacks by processing \emph{4 billion} BGP
updates advertised during this period on 182 BGP sessions maintained by 3 RIPE
BGP collectors~\cite{ripe:ris} (rrc00, rrc01 and rrc03). We consider an update
for a prefix $p$ as a hijack if the origin AS differs from the origin AS
seen during the previous month. To avoid false positives, we do not consider
prefixes which have seen multiple origin ASes during the previous month. We
count only a single hijack per prefix and origin pair each day: if AS $X$
hijacks the prefix $p$ twice in one day, we consider both as part of a single
hijack.

\myitem{Large BGP hijacks are frequent in today's Internet, and already end up
diverting Bitcoin traffic}
Fig.~\ref{fig:hijack_analysis} summarizes our results. We see that there are \emph{hundreds of thousands} of hijack events \emph{each
month} (Fig.~\ref{fig:hijack_number}). While most of these hijacks involve a single IP prefix, large hijacks
involving between 300 and 30,000 prefixes are also seen every month (right axis).
Fig~\ref{fig:hijack_size} depicts the number of Bitcoin nodes for which traffic was diverted in these hijacks. Each month, at least 100 Bitcoin
nodes are victim of hijacks\footnote{The actual hijack attempt may have been aimed at other services in the same IP range, still, these nodes were affected and their traffic was re-routed.}. As an illustration, 447 distinct nodes
({\raise.17ex\hbox{$\scriptstyle\sim$}}7.8\% of the Bitcoin network)
ended up hijacked at least once in November 2015. 

\remove{
While frequent, BGP hijacks are hard to diagnose and slow to resolve. As BGP
securities extensions (BGPSec~\cite{bgpsec}, RPKI~\cite{rfc6480}) are seldom
deployed~\cite{lychev_bgpsec_deployment}, operators rely on simple monitoring
systems that dynamically analyze and report rogue BGP announcements
(e.g.,~\cite{yan2009bgpmon,chi2008cyclops}). 

In Section~\ref{sec:bitcoin_interception}, we showed that any single AS can
intercept 50\% of the connections (and mining power) by hijacking few prefixes
({\raise.17ex\hbox{$\scriptstyle\sim$}}0.15\% of all prefixes). We now
investigate the influence an attacker might have by simply dropping all
messages on the intercepted connections, effectively partitioning the Bitcoin
network.

To measure the effect of a partition in a large-scale virtualized environment
composed of 1050 actual Bitcoin nodes
(\S\ref{ssec:methodology_experiment}). We describe two key findings. First, the network
quickly goes down after connections are disrupted. Second, while connections
crossing the partition slowly return as the partition heals (\S\ref{sec:partition_results}), the fork rate is quickly restored to normal. The latter result implies that
imperfect partitions are not useful in disrupting the network, which makes
partition attacks last only as long as the attacker successfully isolates the
different parts of the network.
}

\begin{figure}[t]
 \centering
 \includegraphics[width=.65\columnwidth]{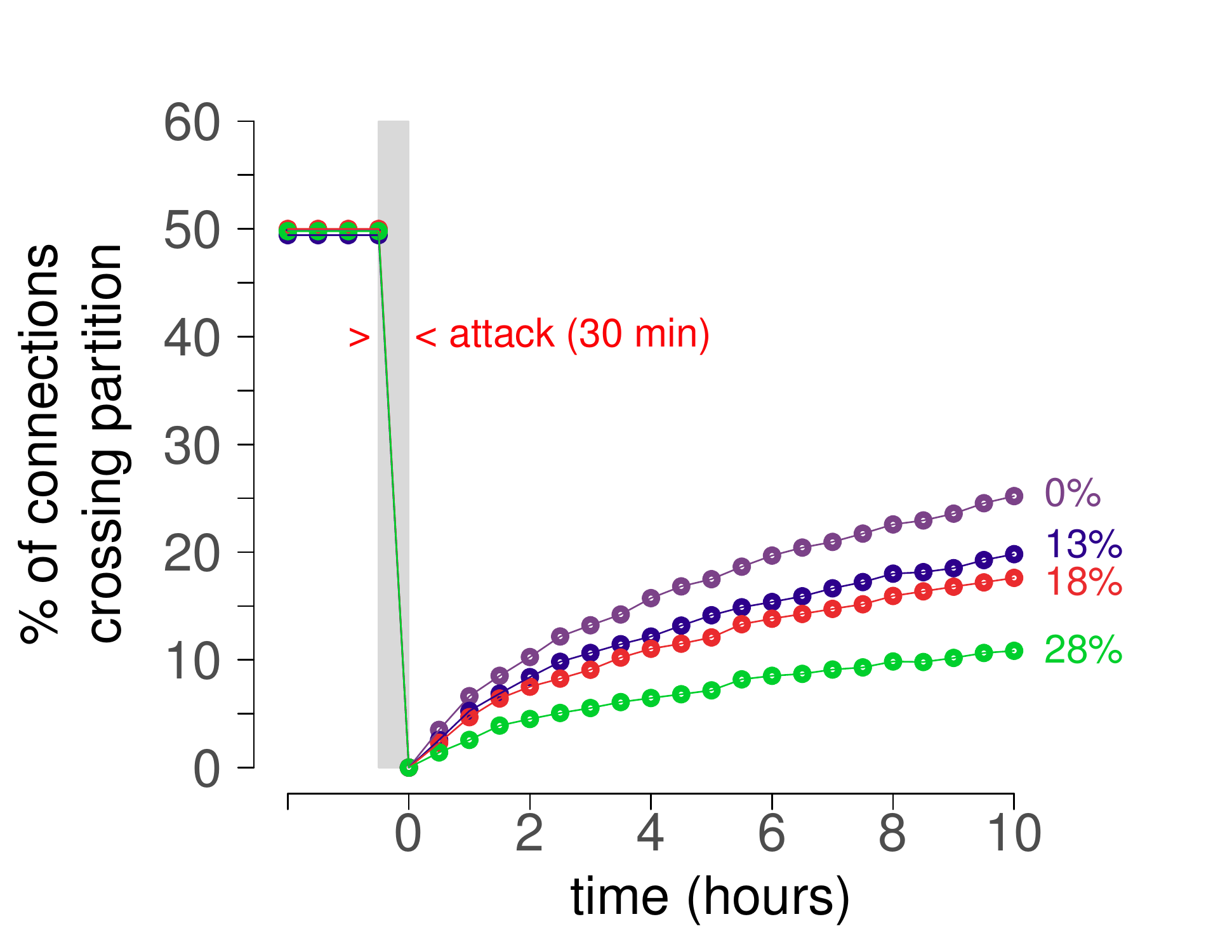}
 \caption{Bitcoin heals slowly after large partition attacks. After 10h, only half as many connections cross the partition. Healing is even slower if the attacker is naturally on-path for 13\%, 18\%, 28\% of the connections.}
 \label{fig:partition_healing}
 \vspace{-.5cm}
\end{figure}

\subsection{How long do the effects of a partition attack last?}
\label{ssec:methodology_experiment}
Having investigated the methodology and the relative cost of creating a
partition, we now explore how quickly the Bitcoin network recovers from a partition attack. We found out that while the two components quickly reconnect, they stay sparsely connected for a very long time. We first describe the
experimental set-up. Next, we explain why partitions are not persistent in
practice and briefly hint on how an attacker could prolong their lifetime.

\myitem{Methodology} We build a testbed composed of 1050 Bitcoin clients
running the default \verb+bitcoind+ core (v0.12.1) in testnet mode. Each node
runs in a virtual machine connected to a virtual switch and is configured with
a different random IP address. Nodes automatically connect to other nodes in the testbed. We enforced a 50\%--50\% partition, by installing
drop rules on the switch which discard any packet belonging to a connection
crossing the partition. Observe that a 50\%--50\% split is the easiest partition to recover from, as
after the attack the chance that a new connection would bridge the two halves is maximal. We measure the partition recovery time by recording
the percentage of connections going from one side to the other in 30 minute intervals.

Bitcoin TCP connections are kept alive for extended periods. As
such, new connections are mostly formed when nodes reconnect or leave the
network (churn). To simulate churn realistically, we collected the lists of all
reachable Bitcoin nodes~\cite{bitnodes}, every 5 minutes, from February 10 to
March 10 2016. For every node $i$ connected in the network on the first day, we
measured $t_i$ as the elapsed time until its first disappearance. To determine
the probability of a node to reboot, we randomly associated every node in our
testbed with a type $t_i$ and assumed this node reboots after a period of time
determined by an exponential distribution with parameter
$\lambda=\frac{1}{t_i}$. The time for next reboot is again drawn according to
the same distribution. This method produces churn with statistics matching
the empirical ones. We repeat each measurement at least 5 times and report the median value found.

\myitem{Bitcoin quickly recovers from a partition attack, yet it takes hours for the network to be densely connected again}
We measured how long it takes for the partition to heal by measuring how
many connections cross it, before, during and after its formation (Fig.~\ref{fig:partition_healing}). We consider
two different attack scenarios: \emph{(i)} the adversary does not intercept any
bitcoin traffic before or after the attack; and \emph{(ii)} the adversary intercepts some connections naturally.

\emph{Case 1: The adversary intercepts no traffic after the attack.} It takes
2 hours until one fifth of the initial number of connections crossing the
partition are established, while after 10 hours only half of the connections
have been re-established. The slow recovery is due
to the fact that nodes on both sides do not actively change their connections
unless they or their neighbors disconnect.

\emph{Case 2: The adversary intercepts some traffic after the attack.} If an
AS-level adversary is naturally on-path for some of the connections, she can
significantly prolong the partition's lifetime. To do so, the attacker would
just continue to drop packets on connections she naturally intercepts. We measured
the effect of such attacks for attackers that are on-path for 14\%, 18\%, and
28\% of the connections, respectively (Fig.~\ref{fig:partition_healing}). We
see that an AS-adversary who is initially on-path for 28\% of the connections
can prolong the already slow recovery of the partition by 58\%. Many other ways to increase the persistence of a partition exist. Due to space constraints, we discuss some of them in Appendix~\ref{ssec:partition_life}.

Despite the long healing time, the orphan rate of the network returned to
normal even with 1\% of all connections crossing the partition. This fact shows
that partitions need to be perfect in order to affect the network significantly.

\remove{
\laurent{I reverted back to the previous structure. I didn't like the useful
conclusions too much which didn't really belong to the evaluation, rather to a
discussion.}
}
\remove{
\myitem{Useful conclusions}
These observations reveal concrete ways for the attacker to make the partition last for longer. The attacker needs to suppress the effect of churn in order to keep the victim nodes isolated. Two observations may help in this regard: 

\emph{(i) A smaller set of nodes will be easier to isolate for extended periods.} 
The smaller the set of nodes is, the more time will pass before any isolated node restarts (which will most likely cause it to attempt to establish links and break the partition). Similarly, if the set of nodes is small,  external nodes will connect to the isolated set with lower probability. 

\emph{(ii) All incoming connection slots can be occupied by connection from attacker nodes.}
Bitcoin nodes typically limit the number of incoming connections they accept. The attacker may use several nodes to aggressively connect to the isolated nodes, and maintain these connections even after BGP routing is restored. Connections initiated by external nodes would then be rejected by the isolated nodes (as all slots of connections remain occupied). This is naturally made easier for the attacker if the set of isolated nodes is small.

\emph{(iii) Outgoing connections can be biased via an eclipse attack~\cite{heilman2015eclipse}.}
Once an isolated node reboots, it will try to establish connections to nodes in its peer lists. These can be biased to contain attacker nodes, other isolated nodes, and IP addresses that do not actually lead to Bitcoin nodes. This is established by aggressively advertising these IPs to the victim nodes. The connections they form upon rebooting will then be biased towards those that maintain the partition.
}

\remove{
With regards to the Bitcoin clients, they should allow the natural churn of the network refresh their connections as that can protect them against partition attacks. A node with disabled incoming connections or even a node that is behind an NAT or a firewall will never receive a random incoming connection from the rest of the network. As such if the node is hijacked for a few minutes and isolated from a part of the network, it will only reconnect to this part when it reboots or when one of its outgoing connections fails. 

}

\remove{

Our min observation is that the connectivity between the two components is quickly re-established but the skewness in the connections remains for long. The Bitcoin protocol itself is responsible for the slow recovery, while the few connections that are establish are due to random events in the network such as new node joining and reboots which are commonly refer to as churn. As such, the healing time of an isolated set of nodes depends on their internal churn as well as their attitude towards the natural churn of the network. For example if a node never reboots and is behind a NAT or a firewall or does not accept incoming connections, the bias until its one of its outgoing connections reboot. On the other hand accepting all incoming connections might be even worse as it allows the attacker to saturate the node’s connections.

}

\remove{The fundamental question we answer in this section is how fast the network recovers from an abrupt change in its connectivity as well as which are the factors that influence this procedure.  Our main finding is that although the protocol it self is extremely slow to recover, the dynamic nature of the peer to peer network causes few connections to be establish 
}
\remove{

We start by giving a high-level view of our experimental set-up and the incorporation of churn. 
Our testbed is composed of 1050 Bitcoin clients running the default \verb+bitcoind+ core (v0.12.1) in testnet mode. Each node runs on a virtual machine connected to a virtual
switch and is configured with a different random IP address. To enforce a given
partition, we simply install drop rules on the switch which discard any packets
belonging to a connection crossing the partition. We measure the partition
recovery time by collecting the percentage of connection going from one side to
the other every 30 minutes.

\myitem{Churn} By design, Bitcoin TCP connections are kept alive for extended periods. As such, new connections are mostly formed when nodes reconnect of leave the network (churn). To simulate churn realistically, we collected the lists of
all reachable Bitcoin nodes~\cite{bitnodes}, every 5 minutes, from February 10 to March 10 2016. For every node $i$ connected in the network on the first day, we measured $t_i$ as the elapsed
time until its first disappearance. To determine the
probability of a node to reboot, we randomly associated every node in our
testbed with a type $t_i$ and assumed this node reboots after a period of time
determined by an exponential distribution with parameter
$\lambda=\frac{1}{t_i}$. The time for next reboot is again drawn 
according to the same distribution. This method produces 
churn with statistics identical to the empirical ones. We
repeat each measurement at least 5 times with
different sequences of reboots and report the median value found.

\myitem{Validation} We ensure the reliability of our results using three
methods. Firstly, we avoid potential interference between consecutive
experiment by rebooting each node before every experiment. Address books for the nodes (peers.dat) were generated through a random process of churn and are restored to their initial state in between different runs. 
All our nodes thus end up with
{\raise.17ex\hbox{$\scriptstyle\sim$}}800 addresses in their database. \remove{This
number corresponds to almost all the nodes running in our testbed, which allows them to form a well connected network before they are attacked. }
Secondly, before each experiment, we count and report the number of connection that cross the
partition to ensure that there is no bias in the initialization process in favor of the partition. 

Thirdly, we made sure that the server running our testbed was moderately used during the entire experiment so that our results are not influenced by CPU spikes.

\subsection{Results}
\label{sec:partition_results}
We now measure how long it takes for a 50 partition to heal by measuring how
many connections cross it, before, during and after its formation. We consider
two different attack scenarios: \emph{i)} the adversary does not see any
Bitcoin traffic before or after the attack; and \emph{ii)} the adversary sees
some connections naturally. Fig.~\ref{fig:partition_healing} summarizes our
findings. We stress that our experimental results serve as a generic study of
partition recovery as they are independent of how the partition was formed.

In spite of the slow recovery time, the orphan rate of the network returned to normal even with 1\% of all connections crossing the partition. This fact shows that partitions need to be perfect in order to affect the network significantly. 

\myitem{Case 1: no connection crosses the attacker(s) naturally} As seen in Fig.~\ref{fig:partition_healing}, it takes 2 hours until one fifth of the initial number of connection crossing the partition are established, while after 10 hours only half of the connections has been recovered. The slow recovery is due to the fact that nodes on both sides do not actively change their connections unless they or their neighbors disconnect.

\begin{figure}[htbp]
 \centering
 \includegraphics[width=.80\columnwidth]{figures/res_partition_health}
 \caption{Bitcoin heals slowly after a 50 partition attack. After 10h, only 50\% of the connections which crossed the partition initially cross it again. The healing is even slower if the attacker is on-path for some connections (13\%, 18\%, 28\%) naturally.}
 \label{fig:partition_healing}
\end{figure}

\myitem{Case 2: some connections cross the attacker(s) naturally} If an
AS-level adversary is naturally on-path of some connection, she can
significantly prolong the partitions healing. To do so, the attacker would
just continue dropping packets on connections she is on-path for. We measured the effect of such attacks for attackers that are on-path for 14\%, 18\%, and 28\% of the connections, respectively. The results appear in Fig.~\ref{fig:partition_healing}. We see that an AS-adversary who is initially
on-path for 28\% of the connections can prolong the already slow recovery of the partition by 58\%.

\myitem{To delay the recovery even further} an adversary can pro-actively bias
the connections the nodes will accept and initialize after the attack. Firstly,
she can modify exchanged messages across the partition to cause nodes to ban
each other, prohibiting the re-establishment of the same connections for 24
hours. Secondly, the attacker can flood the nodes in each partition with
addresses of nodes belonging to the same partition~\cite{heilman2015eclipse} or modify the exchanged \textsf{ADDR} messages.
This essentially increase the probability that future connections will be
within the partition.

}

\section{Delaying Block Propagation: Evaluation}
\label{sec:delay_results}

In this section, we evaluate the impact and practicality of delay attacks
through a set of experiments both at the node-level and at the network-wide
level. We start by demonstrating that delay attacks against a single node work
in practice by implementing a working prototype of an interception software that we then use to delay our own Bitcoin nodes
(Section~\ref{ssec:network_single_node_attack}). We then evaluate the impact of
network-wide delay attacks by implementing a scalable event-driven Bitcoin simulator. In constrast to partitioning attacks, and to targeted delay attacks, we show that Bitcoin is well-protected against network-wide delay attacks, even
when considering large coalitions of ASes as attackers (Section~\ref{ssec:network_delay_attack}).

\subsection{How severely can an attacker delay a single node?}
\label{ssec:network_single_node_attack}


\myitem{Methodology} 
We implemented a prototype of our interception software on top of Scapy~\cite{scapy}, a
Python-based packet manipulation library. Our prototype follows the methodology of Section~\ref{sec:delay} and is efficient both
in terms of state maintained and processing time. Our
implementation indeed maintains only 32B of memory (hash size) for each peer of the
victim node. Regarding processing time, our implementation leverages pre-recompiled
regular expressions to efficiently identify critical packets (such as those
with \textsf{BLOCK} messages) which are then
processed in parallel by dedicated threads. Observe that the primitives required for the interception software are also supported by recent programmable data-planes~\cite{bosshart2014p4} opening up the possibility of performing the attack entirely on network devices.



We used our prototype to attack one of our own Bitcoin nodes
(v/Satoshi:0.12.0/, running on Ubuntu 14.04). The prototype ran on a machine acting as a gateway to the
victim node. Using this setup, we measured the effectiveness of an attacker in
delaying the delivery of  blocks, by varying the percentage of connections she intercepted. To that end, we measured the fraction of time during which the victim was uninformed of the most recently mined block.
We considered our victim node to be uninformed when its
view of the main chain is shorter than that of a reference node. The reference
node was another Bitcoin client running the same software and the same number of peers as the victim, but without any attacker.


\newpage
\noindent\textbf{Delay attackers intercepting 50\% of a node's connection can waste 63\% of its mining power} 
Table~\ref{tab:node_results} illustrates the percentage of the victim's uptime, during which it was uniformed of the last mined block, considering that the attacker intercepts 100\%, 80\%, and 50\% of its connections.
Each value is the average over an attack period of {\raise.17ex\hbox{$\scriptstyle\sim$}}\emph{200 hours}. To further evaluate the practicality of the attack, the table also depicts the fraction of Bitcoin nodes for which there is an AS, \emph{in addition to their direct provider}, that intercepts 100\%, 80\%, and 50\% of its connections.

Our results reflect the major strength of the attack, which is its
effectiveness even when the adversary intercepts only a fraction of the
victim's connections. Particularly, we see that an attacker can waste 63\% of a
node's mining power by intercepting half of its connections. Observe that, even
when the attacker is intercepting all of the victim's connections, the victim eventually
gets each block after a delay of 20 minutes.

Regarding the amount of vulnerable nodes to this attack in the Bitcoin
topology, we found that, for 67.9\% of the nodes, there is at least one AS
\emph{other than their provider} that intercepts more than 50\% of their
connections. For 21.7\% of the nodes there is in fact an AS (other than their provider) that intercepts all their connections to other nodes. In short, 21.7\% of the nodes can be isolated by an AS that is not even
their provider.

\begin{table}[t]

\centering
\small
\def\arraystretch{1.1}
\setlength{\tabcolsep}{3pt}
\begin{tabular}{@{}lclclcl@{}}

\toprule
\centering

\% intercepted connections  && 50\% && 80\%  && 100\% \\
\midrule
\% time victim node is uniformed	&& 63.21\% 	&& 81.38\%  && 85.45\% 	\\
\% total vulnerable Bitcoin nodes && 67.9\%  &&  38.9\%  &&  21.7\%  \\
\bottomrule
\end{tabular}
\caption{67.9\% of Bitcoin nodes are vulnerable to an interception of 50\% of their connections by an AS \emph{other than their direct provider}.
Such interception can cause the node to lag behind a reference node 63.21\% of the time.}
\label{tab:node_results}
\end{table}

\subsection{Can powerful attackers delay the entire Bitcoin network?}
\label{ssec:network_delay_attack}

Having shown that delay attacks against a single node are practical, we now quantify the network-wide effects of delaying block propagation. 

Unlike partitioning attacks, we show that network-wide delay attacks (that do not utilize active hijacking) are unlikely to happen in practice. Indeed, only extremely powerful attackers such
as a coalition grouping all ASes based in the US could significantly delay the
Bitcoin network as a whole, increasing the orphan rate to more than 30\% from
the normal 1\%. We also investigate how this effect changes as a function of the degree of multi-homing that pools adopt.

\myitem{Methodology} Unlike partition attacks, the impact of delay attacks on the network is difficult to ascertain. One would need to actually slow down the network
to fully evaluate the cascading effect of delaying blocks. We therefore built a
realistic event-driven simulator following the principles
in~\cite{neudecker2015simulation} and used it to evaluate such effects.

\newcommand{\specialcell}[2][c]{%
	\begin{tabular}[#1]{@{}c@{}}#2\end{tabular}}

\remove{
\begin{table*}
	\centering
	\begin{tabular}{|c|c|c|c|c|}
		\hline
		\specialcell{Monitored\\connection} & The attack & \specialcell{Preserve\\TCP} & \specialcell{Preserve\\BTC} & Outcome \\
		\hline
		any & drop tcp packets & no & no & connection lost after tcp timeout\\
		\hline
		from node & \specialcell{change block hash in getdata\\ message} & yes & no & bitcoind drops connection after 20 min\\
		\hline
		from node & \specialcell{change block hash in getdata\\ message, inject again within 20 min} & yes & yes & \specialcell{block delayed 20 min \\ (connetion kept)}\\
		\hline
		to node & make block header invalid & yes & no & bitcoind drops\left(  connection after 20 min\\
		\hline
	\end{tabular}
	\caption{The effect of different attacks}
\end{table*}
}

Our simulator models the entire Bitcoin network and the impact of a delay
attack considering the worst-case scenario for the attacker. Specifically, it
assumes that the communication between gateways of the same pool cannot be
intercepted. Also, pools act as relay networks in that they propagate all blocks that
they receive via all their gateways. Moreover, the simulator assumes that the
delay attacker is only effective if she intercepts the traffic \emph{from} a node that receives a block (essentially if she is able to perform the
attack depicted in Fig.~\ref{fig:delay_attack_left}). 
We provide further details on our simulator and how we evaluated it in Appendix~\ref{ssec:simulation}.

We ran our simulator on realistic topologies as well as on synthetic ones with
higher or lower degrees of multi-homing. The realistic topology was inferred as
described in Section~\ref{sec:topology}. The synthetic ones were created by
adding more gateways to the pools in the realistic topology until all pools
reached the predefined degree of multi-homing.


\begin{table}[t]
\centering
\small
\def\arraystretch{1.1}
\setlength{\tabcolsep}{6.5pt}
\begin{tabular}{cccccc}

\toprule
\multirow{2}{*}{\textit{Coalition}} & \multirow{2}{*}{\textit{Realistic topology}} & \multicolumn{4}{c}{\textit{Multihoming degree of pools}} \\
\cmidrule{3-6}
& (Section~\ref{sec:topology})	& \textit{1}		& \textit{3}		& \textit{5} & \textit{7} \\
\midrule
US & 23.78 & 38.46 & 18.18 & 6.29 & 4.20 \\
DE & 4.20 & 18.88 & 2.10 & 1.40 & 1.40 \\
CN & 4.90 & 34.27 & 1.40 & 0.70 & 0.70 \\
\bottomrule
\end{tabular}
\caption{Orphan rate (\%) achieved by different network-wide level delay attacks performed by coalitions of \emph{all} the ASes in a country, and considering either the topology inferred in Section~\ref{sec:topology} or synthetic topologies with various degrees of pool multi-homing. The normal orphan rate is \raise.17ex\hbox{$\scriptstyle\sim$}1\%.}
\label{tab:sim_results}
\vspace{-.3cm}
\end{table}

\myitem{Due to pools multi-homing, Bitcoin (as a whole) is not vulnerable to delay attackers, even powerful ones} Our results are summarized in Table~\ref{tab:sim_results}. We see that multi-homed pools considerably increase the robustness of the Bitcoin network against delay attacks. In essence, multi-homed pools act as protected relays for the whole network. Indeed, multi-homed pools have better chances of learning about blocks via at least one gateway and can also more efficiently propagate them to the rest of the network via the same gateways.

If we consider the current level of multi-homing, only powerful
attackers such as a coalition containing \emph{all} US-based ASes could effectively disrupt the network by increasing the fork rate to 23\%
(as comparison, the baseline fork rate is 1\%). In contrast, other powerful
attackers such as all China-based or all Germany-based ASes can only increase
the fork rate to 5\%. As such coalitions are unlikely to form
in practice, we conclude that \emph{network-wide} delay attacks do not constitute a threat for Bitcoin.

\myitem{Even a small degree of multi-homing is enough to protect Bitcoin from
powerful attackers.} If all mining pools were single-homed, large coalitions
could substantially harm the currency. The US for instance, could increase the
fork rate to 38\% while China and Germany could increase it to 34\% and 18\%
respectively. Yet, the fork rate drops dramatically as the average multi-homing
degree increases. This is a good news for mining pools as it shows that even a
small increase in their connectivity helps tremendously in protecting them against delay attacks.

\section{Countermeasures}
\label{sec:countermeasures}


In this section, we present a set of countermeasures against routing attacks.
We start by presenting measures that do not require any protocol change and can
be partially deployed in such a way that early adopters can benefit from higher
protection (Section~\ref{ssec:short_term}). We then describe longer-term
suggestions for both detecting and preventing routing attacks
(Section~\ref{ssec:long_term}).

\subsection{Short-term measures}
\label{ssec:short_term}



\myitem{Increase the diversity of node connections} The more connected an AS
is, the harder it is to attack it. We therefore encourage Bitcoin node owners to ensure they are multi-homed. 
Observe that even
single-homed Bitcoin nodes could benefit from extra connectivity by using one or
more VPN services through encrypted tunnels so that Bitcoin
traffic to and from the node go through multiple and distinct ASes.
Attackers that wish to deny connectivity through the tunnel would need to
either know both associated IP addresses or, alternatively, disrupt
all encrypted traffic to and from nodes---making the attack highly noticeable.

\myitem{Select Bitcoin peers while taking routing into account} Bitcoin
nodes randomly establish 8 outgoing connections. While randomness is important
to avoid biased decisions, Bitcoin nodes should establish a
few \emph{extra} connections taking routing into consideration. For this,
nodes could either issue a \verb+traceroute+ to each of their peers and
analyze how often the same AS appears in the path or, alternatively, tap into the BGP feed of their network and select their peers based on the AS-PATH. In both cases, if the same AS appears in all paths, extra random connections should be established.

\myitem{Monitor round-trip time (RTT)} The RTT towards hijacked destinations
increases during the attack. By monitoring the RTT towards its peers, a node could detect sudden changes and establish extra random connections as a protection mechanism.

\myitem{Monitor additional statistics} Nodes should deploy anomaly detection mechanisms to recognize sudden changes in: the distribution of connections, the time elapsed between a request and the corresponding answer, the simultaneous disconnections of peers, and other lower-level connection anomalies. Again, anomalies should spur the establishment of extra random connections.

\myitem{Embrace churn} Nodes should allow the natural churn of the network to
refresh their connections. A node with disabled incoming connections or even one that is behind a NAT or a firewall will never receive a random incoming
connection from the rest of the network. If the node is hijacked for a few
minutes and isolated from a part of the network, it will only reconnect to the other part upon reboot or when one of its outgoing connections fails.

\newpage
\noindent\textbf{Use gateways in different ASes} While inferring the topology we noticed that many pools were using gateways in the same AS. Hosting these gateways in different ASes would make them even more robust to routing attacks.

\myitem{Prefer peers hosted in the same AS and in /24 prefixes} As the traffic
of nodes hosted in /24 prefixes can only partially be diverted (Section~\ref{sec:background}). Hosting all
nodes in such prefixes would prevent partition attacks at the cost of
({\raise.17ex\hbox{$\scriptstyle\sim$}}1\%) increase of the total number of
Internet prefixes. Alternatively, nodes could connect to a peer hosted in a /24
prefix which belongs to their provider. By doing so they maintain a stealth
(intra-AS) connection with a node that is at least partially protected against
hijack.

\subsection{Longer-term measures}
\label{ssec:long_term}

\myitem{Encrypt Bitcoin Communication and/or adopt MAC} While encrypting Bitcoin connections
would not prevent adversaries from dropping packets, it would
prevent them from eavesdropping connections and modifying key messages. Alternatively, using a Message Authentication Code (MAC) to validate that the content of each message has not been changed would make delay attacks much more difficult. 


\myitem{Use distinct control and data channels} A key problem of Bitcoin is
that its traffic is easily identifiable by filtering on the default port
(8333). Assuming encrypted connections, the two ends could negotiate a set of random TCP ports upon connecting to each other using
the well-known port and use them to establish the actual TCP connection, on
which they will exchange Bitcoin data. This would force the AS-level adversary
to look at \emph{all the traffic}, which would be too costly. 

A simpler (but poorer) solution would be for Bitcoin clients to use randomized TCP port (encoded in clear with the \textsf{ADDR} message) as it
will force the AS-level adversary to maintain state to keep track of these ports.
Although a node can already run on a non-default port,
such a node will receive fewer incoming connections (if any) as the default client strongly prefers peers that run on the default port.

\myitem{Use UDP heartbeats} A TCP connection between two Bitcoin nodes may take
different forward and backward paths due to asymmetric routing, making AS-level
adversaries more powerful (as they only need to see one direction, see Section~\ref{sec:delay}). In addition to TCP connections, Bitcoin clients could
periodically send UDP messages with corroborating data (\emph{e.g.}, with
several recent block headers). These UDP messages can be used as a heartbeat
that will allow nodes to discover that their connection was partially
intercepted. As UDP messages do not rely on return traffic, this would
enable node to realize that they are out-of-sync and establish new connections.

\myitem{Request a block on multiple connections} Bitcoin clients could ask
multiple peers for pieces of the block. This measure would prevent misbehaving
nodes from deliberately delaying the delivery of a block, simply because in
such a case the client will only miss one fraction of the block, which it can
request from one of its other peers.


\section{Related Work}
\label{sec:related_work}


\myitem{AS-level adversaries} The concept of AS-level adversaries has been
studied before in the context of Tor~\cite{feamster:wpes2004, murdoch:pet2007,
Edman:2009:ATP:1653662.1653708,raptor2015,vanbever2014anonymity}. These works also illustrated the problems caused by centralization and
routing attacks on a distributed system running atop the Internet. Yet, Tor and
Bitcoin differ vastly in their behavior with one routing messages in an
Onion-like fashion, while the other uses \emph{random} connections to flood messages throughout the entire network.
Although random graphs are usually robust to attacks, this paper shows that
it is \emph{not} the case when the network is centralized at the routing-level.

\myitem{Bitcoin attacks} The security of Bitcoin from network-based attacks has
been relatively less explored compared to other attack scenarios. While eclipsing attacks~\cite{heilman2015eclipse,gervais2014bitcoin} have a similar impact than delay attacks when performed against a single node, they disrupt the victim's connections and assume that the attacker is directly connected to the victim (Section~\ref{sec:delay}).
For more information about Bitcoin attacks, we refer the reader to a recent comprehensive survey on the Bitcoin protocol~\cite{bonneau2015sok}.

\myitem{BGP security issues} Measuring and detecting routing attacks has seen
extensive research, both considering BGP hijacks~\cite{shi:imc12,zhang:conext07,zhang:ton10} and
interception attacks~\cite{ballani:sigcomm07}. Similarly, there has been much
work on proposing secure routing protocols that can prevent the above
attacks~\cite{boldyreva:ccs12,gill:sigcomm11,hu:sigcomm04,oorschot:tissec07}. In
contrast, our work is the first one to show the
\emph{consequences} of these attacks on cryptocurrencies. 
\section{Conclusions}
\label{sec:conclusion}

This paper presented the first analysis of the vulnerabilities of the Bitcoin
system from a networking viewpoint. Based on real-world data, we showed that
Bitcoin is heavily centralized. Leveraging this fact, we then demonstrated and
quantified the disruptive impact that AS-level adversaries can have on the
currency. We showed that attackers can partition the Bitcoin network
by hijacking less than 100 prefixes. We also showed how AS-level adversaries
can significantly delay the propagation of blocks while remaining undetected.
Although these attacks are worrying, we also explained how to counter them with both short-term and long-term measures, some of which are easily deployable today.
\vfill
\section*{Acknowledgments}

We would like to thank Christian Decker for sharing Bitcoin data with us as
well as for his valuable comments in the beginning of the project. We would
also like to thank David Gugelmann for his support in one of our experiments.
Finally, we are grateful to Tobias B\"uhler, Edgar Costa Molero, and Thomas Holterbach from ETH
Z\"urich as well as Eleftherios Kokoris Kogias from EPFL for their helpful
feedback on early drafts of this paper. Aviv Zohar is supported by the Israel
Science Foundation (grant 616/13) and by a grant from the Hebrew University
Cybersecurity Center.
\clearpage
{
\bibliographystyle{IEEEtranS}
\bibliography{paper}

\begin{thebibliography}{10}
\providecommand{\url}[1]{#1}
\csname url@samestyle\endcsname
\providecommand{\newblock}{\relax}
\providecommand{\bibinfo}[2]{#2}
\providecommand{\BIBentrySTDinterwordspacing}{\spaceskip=0pt\relax}
\providecommand{\BIBentryALTinterwordstretchfactor}{4}
\providecommand{\BIBentryALTinterwordspacing}{\spaceskip=\fontdimen2\font plus
\BIBentryALTinterwordstretchfactor\fontdimen3\font minus
  \fontdimen4\font\relax}
\providecommand{\BIBforeignlanguage}[2]{{%
\expandafter\ifx\csname l@#1\endcsname\relax
\typeout{** WARNING: IEEEtranS.bst: No hyphenation pattern has been}%
\typeout{** loaded for the language `#1'. Using the pattern for}%
\typeout{** the default language instead.}%
\else
\language=\csname l@#1\endcsname
\fi
#2}}
\providecommand{\BIBdecl}{\relax}
\BIBdecl

\bibitem{ethereum}
``{A Next-Generation Smart Contract and Decentralized Application Platform },''
  \url{https://github.com/ethereum/wiki/wiki/White-Paper}.

\bibitem{blockchain}
``{Bitcoin Blockchain Statistics},'' \url{https://blockchain.info/}.

\bibitem{bitnodes}
``{bitnodes},'' \url{https://bitnodes.21.co/}.

\bibitem{bitnodes21}
``{Bitnodes. Estimating the size of Bitcoin network},''
  \url{https://bitnodes.21.co/}.

\bibitem{caida_as_level}
``{CAIDA Macroscopic Internet Topology Data Kit.}''
  \url{https://www.caida.org/data/internet-topology-data-kit/}.

\bibitem{dyn_pakistani_bgp}
``{Dyn Research. Pakistan hijacks YouTube.}''
  \url{http://research.dyn.com/2008/02/pakistan-hijacks-youtube-1/}.

\bibitem{falcon}
``{FALCON},'' \url{http://www.falcon-net.org/}.

\bibitem{fibre}
``{FIBRE},'' \url{http://bitcoinfibre.org/}.

\bibitem{litcoin}
``{Litecoin },'' \url{https://litecoin.org}.

\bibitem{ripe:ris}
``{RIPE RIS Raw Data},''
  \url{https://www.ripe.net/data-tools/stats/ris/ris-raw-data}.

\bibitem{caida_ip_to_asn}
``{Routeviews Prefix to AS mappings Dataset (pfx2as) for IPv4 and IPv6.}''
  \url{https://www.caida.org/data/routing/routeviews-prefix2as.xml}.

\bibitem{scapy}
``{Scapy.}'' \url{http://www.secdev.org/projects/scapy/}.

\bibitem{relays}
``{The Relay Network},'' \url{http://bitcoinrelaynetwork.org/}.

\bibitem{zcash}
``{ZCash},'' \url{https://z.cash/}.

\bibitem{bitcoin:book}
A.~M. Antonopoulos, ``The bitcoin network,'' in \emph{Mastering Bitcoin}.\hskip
  1em plus 0.5em minus 0.4em\relax O’Reilly Media, Inc., 2013, ch.~6.

\bibitem{ballani:sigcomm07}
H.~Ballani, P.~Francis, and X.~Zhang, ``{A Study of Prefix Hijacking and
  Interception in the {I}nternet},'' ser. SIGCOMM '07.\hskip 1em plus 0.5em
  minus 0.4em\relax New York, NY, USA: ACM, 2007, pp. 265--276.

\bibitem{boldyreva:ccs12}
A.~Boldyreva and R.~Lychev, ``{Provable Security of S-BGP and Other Path Vector
  Protocols: Model, Analysis and Extensions},'' ser. CCS '12.\hskip 1em plus
  0.5em minus 0.4em\relax New York, NY, USA: ACM, 2012, pp. 541--552.

\bibitem{bonneau2015sok}
J.~Bonneau, A.~Miller, J.~Clark, A.~Narayanan, J.~A. Kroll, and E.~W. Felten,
  ``Sok: Research perspectives and challenges for bitcoin and
  cryptocurrencies,'' in \emph{Security and Privacy (SP), 2015 IEEE Symposium
  on}.\hskip 1em plus 0.5em minus 0.4em\relax IEEE, 2015, pp. 104--121.

\bibitem{bosshart2014p4}
P.~Bosshart, D.~Daly, G.~Gibb, M.~Izzard, N.~McKeown, J.~Rexford,
  C.~Schlesinger, D.~Talayco, A.~Vahdat, G.~Varghese \emph{et~al.}, ``{P4:
  Programming protocol-independent packet processors},'' \emph{ACM SIGCOMM
  Computer Communication Review}, vol.~44, no.~3, pp. 87--95, 2014.

\bibitem{decker2013information}
C.~Decker and R.~Wattenhofer, ``Information propagation in the bitcoin
  network,'' in \emph{Peer-to-Peer Computing (P2P), 2013 IEEE Thirteenth
  International Conference on}.\hskip 1em plus 0.5em minus 0.4em\relax IEEE,
  2013, pp. 1--10.

\bibitem{decker2014bitcoin}
\BIBentryALTinterwordspacing
------, \emph{Bitcoin Transaction Malleability and MtGox}.\hskip 1em plus 0.5em
  minus 0.4em\relax Cham: Springer International Publishing, 2014, pp.
  313--326. [Online]. Available:
  \url{http://dx.doi.org/10.1007/978-3-319-11212-1_18}
\BIBentrySTDinterwordspacing

\bibitem{Edman:2009:ATP:1653662.1653708}
M.~Edman and P.~Syverson, ``As-awareness in tor path selection,'' in
  \emph{Proceedings of the 16th ACM Conference on Computer and Communications
  Security}, ser. CCS '09, 2009.

\bibitem{eyal2015miner}
I.~Eyal, ``The miner's dilemma,'' in \emph{2015 IEEE Symposium on Security and
  Privacy}.\hskip 1em plus 0.5em minus 0.4em\relax IEEE, 2015, pp. 89--103.

\bibitem{eyal2014majority}
I.~Eyal and E.~G. Sirer, ``Majority is not enough: Bitcoin mining is
  vulnerable,'' in \emph{Financial Cryptography and Data Security}.\hskip 1em
  plus 0.5em minus 0.4em\relax Springer, 2014, pp. 436--454.

\bibitem{feamster:wpes2004}
N.~Feamster and R.~Dingledine, ``Location diversity in anonymity networks,'' in
  \emph{{WPES}}, Washington, DC, USA, October 2004.

\bibitem{garay2015bitcoin}
J.~Garay, A.~Kiayias, and N.~Leonardos, ``The bitcoin backbone protocol:
  Analysis and applications,'' in \emph{Advances in Cryptology-EUROCRYPT
  2015}.\hskip 1em plus 0.5em minus 0.4em\relax Springer, 2015, pp. 281--310.

\bibitem{gervais2014bitcoin}
A.~Gervais, G.~O. Karama, V.~Capkun, and S.~Capkun, ``Is bitcoin a
  decentralized currency?'' \emph{IEEE security \& privacy}, vol.~12, no.~3,
  pp. 54--60, 2014.

\bibitem{Gervais:2015:TDB:2810103.2813655}
A.~Gervais, H.~Ritzdorf, G.~O. Karame, and S.~Capkun, ``Tampering with the
  delivery of blocks and transactions in bitcoin,'' in \emph{Proceedings of the
  22Nd ACM SIGSAC Conference on Computer and Communications Security}, ser. CCS
  '15.\hskip 1em plus 0.5em minus 0.4em\relax New York, NY, USA: ACM, 2015, pp.
  692--705.

\bibitem{gill:sigcomm11}
P.~Gill, M.~Schapira, and S.~Goldberg, ``{Let the Market Drive Deployment: A
  Strategy for Transitioning to BGP Security},'' ser. SIGCOMM '11.\hskip 1em
  plus 0.5em minus 0.4em\relax New York, NY, USA: ACM, 2011, pp. 14--25.

\bibitem{goldberg_how_secure_are_interdomain_routing_protocols}
S.~Goldberg, M.~Schapira, P.~Hummon, and J.~Rexford, ``{How Secure Are Secure
  Interdomain Routing Protocols},'' in \emph{SIGCOMM}, 2010.

\bibitem{heilman2015eclipse}
E.~Heilman, A.~Kendler, A.~Zohar, and S.~Goldberg, ``Eclipse attacks on
  bitcoin’s peer-to-peer network,'' in \emph{24th USENIX Security Symposium
  (USENIX Security 15)}, 2015, pp. 129--144.

\bibitem{hu:sigcomm04}
Y.-C. Hu, A.~Perrig, and M.~Sirbu, ``{SPV: Secure Path Vector Routing for
  Securing BGP},'' ser. SIGCOMM '04.\hskip 1em plus 0.5em minus 0.4em\relax New
  York, NY, USA: ACM, 2004, pp. 179--192.

\bibitem{Karlin:2006:PGB:1317535.1318378}
J.~Karlin, S.~Forrest, and J.~Rexford, ``{Pretty Good BGP: Improving BGP by
  Cautiously Adopting Routes},'' in \emph{Proceedings of the Proceedings of the
  2006 IEEE International Conference on Network Protocols}, ser. ICNP
  '06.\hskip 1em plus 0.5em minus 0.4em\relax Washington, DC, USA: IEEE
  Computer Society, 2006, pp. 290--299.

\bibitem{lef}
E.~K. Kogias, P.~Jovanovic, N.~Gailly, I.~Khoffi, L.~Gasser, and B.~Ford,
  ``Enhancing bitcoin security and performance with strong consistency via
  collective signing,'' in \emph{25th USENIX Security Symposium (USENIX
  Security 16)}.\hskip 1em plus 0.5em minus 0.4em\relax Austin, TX: USENIX
  Association, 2016, pp. 279--296.

\bibitem{kroll2013economics}
J.~A. Kroll, I.~C. Davey, and E.~W. Felten, ``The economics of bitcoin mining,
  or bitcoin in the presence of adversaries.''\hskip 1em plus 0.5em minus
  0.4em\relax Citeseer.

\bibitem{miller2015discovering}
A.~Miller, J.~Litton, A.~Pachulski, N.~Gupta, D.~Levin, N.~Spring, and
  B.~Bhattacharjee, ``Discovering bitcoin’s public topology and influential
  nodes.''

\bibitem{murdoch:pet2007}
S.~J. Murdoch and P.~Zieli{\'n}ski, ``Sampled traffic analysis by
  {I}nternet-exchange-level adversaries,'' in \emph{Privacy Enhancing
  Technologies: 7th International Symposium, {PET} 2007}, N.~Borisov and
  P.~Golle, Eds.\hskip 1em plus 0.5em minus 0.4em\relax Springer-Verlag, LNCS
  4776, 2007, pp. 167--183.

\bibitem{DBLP:journals/iacr/NayakKMS15}
K.~Nayak, S.~Kumar, A.~Miller, and E.~Shi, ``Stubborn mining: Generalizing
  selfish mining and combining with an eclipse attack,'' \emph{{IACR}
  Cryptology ePrint Archive}, vol. 2015, p. 796, 2015.

\bibitem{neudecker2015simulation}
T.~Neudecker, P.~Andelfinger, and H.~Hartenstein, ``A simulation model for
  analysis of attacks on the bitcoin peer-to-peer network,'' in \emph{IFIP/IEEE
  International Symposium on Internet Management}.\hskip 1em plus 0.5em minus
  0.4em\relax IEEE, 2015, pp. 1327--1332.

\bibitem{oorschot:tissec07}
P.~v. Oorschot, T.~Wan, and E.~Kranakis, ``On interdomain routing security and
  pretty secure bgp (psbgp),'' \emph{ACM Trans. Inf. Syst. Secur.}, vol.~10,
  no.~3, Jul. 2007.

\bibitem{defcon_mitm}
A.~Pilosov and T.~Kapela, ``{Stealing The Internet. An Internet-Scale Man In
  The Middle Attack.}'' DEFCON 16.

\bibitem{rfc1771}
Y.~Rekhter and T.~Li, \emph{{{A Border Gateway Protocol 4 (BGP-4)}}}, IETF,
  Mar. 1995, rFC 1771.

\bibitem{rosenfeld2014analysis}
M.~Rosenfeld, ``Analysis of hashrate-based double spending,'' \emph{arXiv
  preprint arXiv:1402.2009}, 2014.

\bibitem{DBLP:journals/corr/SapirshteinSZ15}
A.~Sapirshtein, Y.~Sompolinsky, and A.~Zohar, ``Optimal selfish mining
  strategies in bitcoin,'' \emph{CoRR}, vol. abs/1507.06183, 2015.

\bibitem{sasson2014zerocash}
E.~B. Sasson, A.~Chiesa, C.~Garman, M.~Green, I.~Miers, E.~Tromer, and
  M.~Virza, ``Zerocash: Decentralized anonymous payments from bitcoin,'' in
  \emph{2014 IEEE Symposium on Security and Privacy}.\hskip 1em plus 0.5em
  minus 0.4em\relax IEEE, 2014, pp. 459--474.

\bibitem{peering}
B.~Schlinker, K.~Zarifis, I.~Cunha, N.~Feamster, and E.~Katz-Bassett,
  ``Peering: An as for us,'' in \emph{Proceedings of the 13th ACM Workshop on
  Hot Topics in Networks}, ser. HotNets-XIII.\hskip 1em plus 0.5em minus
  0.4em\relax New York, NY, USA: ACM, 2014, pp. 18:1--18:7.

\bibitem{bitcoin_bip_151}
J.~Schnelli, ``{BIP 151: Peer-to-Peer Communication Encryption},'' Mar. 2016,
  \url{https://github.com/bitcoin/bips/blob/master/bip-0151.mediawiki}.

\bibitem{shi:imc12}
X.~Shi, Y.~Xiang, Z.~Wang, X.~Yin, and J.~Wu, ``Detecting prefix hijackings in
  the {I}nternet with {Argus},'' ser. IMC '12.\hskip 1em plus 0.5em minus
  0.4em\relax New York, NY, USA: ACM, 2012, pp. 15--28.

\bibitem{sompolinsky2015secure}
Y.~Sompolinsky and A.~Zohar, ``Secure high-rate transaction processing in
  bitcoin,'' in \emph{Financial Cryptography and Data Security}.\hskip 1em plus
  0.5em minus 0.4em\relax Springer, 2015, pp. 507--527.

\bibitem{raptor2015}
Y.~Sun, A.~Edmundson, L.~Vanbever, O.~Li, J.~Rexford, M.~Chiang, and P.~Mittal,
  ``{RAPTOR}: Routing attacks on privacy in {TOR}.'' in \emph{USENIX Security},
  2015.

\bibitem{hijack_november_15}
A.~Tonk, ``{Large scale BGP hijack out of India},'' 2015,
  \url{http://www.bgpmon.net/large-scale-bgp-hijack-out-of-india/}.

\bibitem{hijack_june_15}
------, ``{Massive route leak causes Internet slowdown},'' 2015,
  \url{http://www.bgpmon.net/massive-route-leak-cause-internet-slowdown/}.

\bibitem{vanbever2014anonymity}
L.~Vanbever, O.~Li, J.~Rexford, and P.~Mittal, ``Anonymity on quicksand: Using
  {BGP} to compromise {TOR},'' in \emph{ACM HotNets}, 2014.

\bibitem{zhang:conext07}
Z.~Zhang, Y.~Zhang, Y.~C. Hu, and Z.~M. Mao, ``Practical defenses against {BGP}
  prefix hijacking,'' ser. CoNEXT '07.\hskip 1em plus 0.5em minus 0.4em\relax
  New York, NY, USA: ACM, 2007.

\bibitem{zhang:ton10}
Z.~Zhang, Y.~Zhang, Y.~C. Hu, Z.~M. Mao, and R.~Bush, ``{iSPY}: Detecting {IP}
  prefix hijacking on my own,'' \emph{IEEE/ACM Trans. Netw.}, vol.~18, no.~6,
  pp. 1815--1828, Dec. 2010.

\end{thebibliography}
}

\clearpage
\appendix

\section{}
\subsection{Bitcoin event-driven simulation}
\label{ssec:simulation}
In this section, we provide more details on the simulator we used in Section~\ref{ssec:network_delay_attack}.

\myitem{Inputs} The simulator takes as input a realistic topology and some synthetic topologies with higher or lower degree of multi-homing. 
Each of the topologies includes the list of IP addresses running Bitcoin nodes, the IPs of the gateways and the hash rate
 associated with each mining pool, along with the forwarding paths among all pairs of IPs.
The realistic topology was inferred as described in Section~\ref{sec:topology}.
The synthetic ones were created by adding more gateways to the pools in the realistic topology 
 until they all reached the predefined degree of multi-homing.

\myitem{Model} The simulator models each Bitcoin node as an independent thread which reacts
to events according to the Bitcoin protocol. Whenever a node communicates with
another, the simulator adds a delay which is proportional to the number of ASes
present on the forwarding path between the two nodes. Each nodes initializes 8
connections following the default client implementation. Blocks are generated
at intervals drawn from an exponential distribution, with an expected rate of
one block every 10 minutes. The probability that a specific pool succeeds in
mining a block directly depends on its mining power. We consider that the
gateways of a pool form a clique and communicate with each other with links of
zero delay which an attacker cannot intercept. Concretely, this means that
whenever a block is produced by a pool, it is simultaneously propagated from
all the gateways of this pool. Although this choice makes the attacker less effective,
we assume that this is the default behavior of a benign pool.

\myitem{Attack} In the simulation, an AS can effectively delay the delivery of a block between
two nodes if and only if she intercepts the traffic from the potential
recipient to the sender, essentially if she is able to perform the attack
depicted in Fig.~\ref{fig:delay_attack_left}. An adversary that intercepts the
opposite direction is considered unable to delay blocks, during the simulation.
Although this choice makes the attacker less effective, it avoids a dependency
between the orphan rate and time. Indeed, these attackers lose their
power through time, as the connections they intercept are dropped after the
first effective block delay, and possibly replaced with connections that are
not intercepted by the attacker.

\myitem{Validation} We verified our simulator by comparing the orphan rate as well as
the median propagation delay computed with those of the real network. We found
that both of them are within the limits of the actual Bitcoin
network~\cite{decker2013information}.

\myitem{Methodology} We evaluated 
the impact of delay attacks considering the three most powerful
country-based coalition (US, China and Germany) as measured by the percentage
of traffic all their ASes\footnote{We map an AS to a country based on the country it is registered in.} see. 

\remove{We acknowledge that such 
coalitions are extremely unlikely to arise in practice and clearly constitute a
worst-case attacker for Bitcoin. Thus, we conclude that Bitcoin is not vulnerable to
network-wide delay attacks.
}
We run the simulation 20 times for each set of parameters and consider a new
random Bitcoin topology during each run. In each run, 144 blocks are created,
which is equivalent to a day's worth of block production (assuming an average
creation rate of one block per 10 minutes). We evaluated the impact of delay
attacks by measuring the orphan rate, different adversaries can cause.

\subsection{Possible Partitions}
Table~\ref{tab:partition_cost_appendix} shows a complete list of all the possible sets of nodes $P$ with the mining power that can be isolated. A summary of it was included and discussed in Section~\ref{sec:eval_partition}.
\label{ssec:partitions}

\begin{table}[t]
\centering
\small
\def\arraystretch{1.1}
\setlength{\tabcolsep}{11pt}
\begin{tabular}{c c c c}
\toprule
\begin{tabular}{@{}c@{}} \emph{Isolated} \\ \emph{mining power}\end{tabular}&\begin{tabular}{@{}c@{}} \emph{Minimum} \\ \emph{\# prefixes}\end{tabular}&\begin{tabular}{@{}c@{}} \emph{Median} \\ \emph{\# prefixes}\end{tabular}&\begin{tabular}{@{}c@{}} \emph{\# Feasible} \\ \emph{Partitions}\end{tabular}   \\

\midrule
6\%\ & 2 & 86 & 20\\
7\%\ & 7 & 72 & 23\\
8\%\ & 32 & 69 & 14\\
30\%\ & 83 & 83 & 1\\
39\%\ & 32 & 51 & 11\\
40\%\ & 37 & 80 & 8\\
41\%\ & 44 & 55 & 3\\
45\%\ & 34 & 41 & 5\\
46\%\ & 78 & 78 & 1\\
47\%\ & 39 & 39 & 1\\

\bottomrule
\end{tabular}
\caption{This table lists all partitions that can be created based on our inferred topology. The leftmost column indicates the portion of mining power contained within the isolated set and the rightmost the number of different combinations of pools that could form it.}
\label{tab:partition_cost_appendix}
\end{table}

\subsection{Inferring pool gateways}

\label{ssec:gw}
\remove{

The attacker begins with a list of IPs of suspected gateways used by the pool.
He first hijacks these IPs. Then, he releases some transaction $tx$ (with a
high fee) to the rest of the network. He additionally filters any \textsf{INV}
messages announcing this transactions to the pool's gateways. This can be done
by replacing the hash of the transaction with another when it is included in
the \textsf{INV} messages the gateways receives \maria{explain it somewhere else or
put it as node?} This implies that the pool cannot in fact learn about $tx$,
unless it has some other external connection that the attacker the existence of
which the attacker ignores. Once the pool creates its next block, the attacker
can check if $tx$ is included in this block. If it is, the set of known
gateways is not sufficient to isolate the pool.

Notice that a mining pool operator that does not monitor BGP messages will not
be aware that this probe is taking place, as it is none intrusive and does not
disrupt the regular operation of the pool. Once the pool constructed a block
without the transaction, the attacker can stop the hijack until he is ready to
perform his attack.

We note that if the list of gateways is incomplete, it is possible to further
act to identify unknown gateway nodes. Several techniques may reveal the
gateway.
}
We summarize here two ways hijacking can be used to reveal the
gateways of pools.

\begin{enumerate}
	\item The attacker may hijack the pool's stratum servers to discover connections that they establish and thus reveal the gateways they connect to. Connections between the pool's stratum server and its gateways are done via bitcoind's RPC access, and can be easily distinguished.
	\item The attacker may hijack the relay network to discover the connections used by large mining pools. The relay network has indeed a public set of six IPs that pools connect to. By hijacking these, one may learn the IPs of various gateways. Specifically, it will not be hard to identify a pool by observing the blocks each publishes to the relay network. 
\end{enumerate}

\subsection{Increase partition persistence}
\label{ssec:partition_life}
Our observations in Section~\ref{ssec:methodology_experiment} reveal concrete ways for the attacker to make her partition lasts longer. Intuitively, the attacker needs to suppress the effect of churn in order to keep the victim nodes isolated. The following three observations help in this regard: 

\newpage
 \emph{Observation 1: A smaller set of nodes will be easier to isolate for extended periods.} 
 The smaller the set of nodes is, the more time will pass before any isolated node restarts. Similarly, if the set of nodes is small,  external nodes will connect to the isolated set with lower probability. 
 
 \emph{Observation 2: All incoming connection slots can be occupied by connections from attacker nodes.}
 Bitcoin nodes typically limit the number of incoming connections they accept. The attacker may use several nodes to aggressively connect to the isolated nodes, and maintain these connections even after BGP routing is restored. Connections initiated by external nodes would then be rejected by the isolated nodes (as all slots of connections remain occupied).
 
 \emph{Observation 3: Outgoing connections can be biased via a traditional eclipse attack~\cite{heilman2015eclipse}.}
 Once an isolated node reboots, it will try to establish connections to nodes in its peer lists. These can be biased to contain attacker nodes, other isolated nodes, and IP addresses that do not actually lead to Bitcoin nodes. This is done by aggressively advertising these IPs to the victim nodes. The connections they form upon rebooting will then be biased towards those that maintain the partition.

\subsection{Frequently Asked Questions}
\label{ssec:faq}

\myitem{Can bitcoin relays bridge the partition?}
\label{sssec:relays}
While Bitcoin relays~\cite{relays} improve Bitcoin performance, they do not improve Bitcoin security against routing attacks.
Indeed, as the IPs of the relays are publicly available, relays are also vulnerable to both hijacking and passive attacks. As such, they can neither bridge the partition, nor act as a protected relay when an actual network attack takes place. Note that routing attacks are completely different from  DDoS attacks, which relays are most likely protected against. Routing attacks are much more complicated to deal with as they constitute a human-driven process depending highly on the provider of the victim rather than the victim itself. As an illustration, the mitigation of a hijack could include the provider of the attacker disconnecting her or her announcements being filtered globally, after the hijack has been detected by the provider of the victim.

\myitem{Can NATed nodes bridge a partition?}
 All nodes behind a NAT act as a simple node that doesn't accept connections. Indeed, by hijacking the corresponding public IP, the attacker receives traffic destined to all of them. In Section~\ref{sec:partition}, we explained why simple nodes cannot bridge the partition, the same applies for NATed nodes.

\myitem{Can Bitcoin routing alternatives solve the attacks presented in this paper?}  
Similar to the Bitcoin relays, existing proposals could indeed speed up the transmission of blocks by tackling with natural limitation of the current Internet such as delay and packet loss.
For instance, FIBRE~\cite{fibre} uses Forward Error Correction (FEC) and compression to speed up the block transmission. In parallel, Falcon~\cite{falcon}, uses a technique called cut-through routing to achieve fast block transmission in the presence of increased network delays. However, neither of the aforementioned approaches address routing attacks in the form of BGP hijacks or traffic eavesdropping.

\subsection{Ethical considerations}
\label{sssec:ethic}

Although we performed both of the attacks we describe against nodes that were connected to the Bitcoin network we did not disturb their normal operations.
Regarding the hijack experiment, we advertised and hijacked a prefix that was assigned to us by the Transit Portal (TP) project~\cite{peering}. No other traffic was influenced by our announcements. 
Also, the isolated nodes were experimental nodes we ran ourselves. Actual Bitcoin nodes that happened to be connected to these were not affected, they just had one of their connections occupied (as if they were connected to a supernode). 
Regarding the delay experiment, the victims were again our own Bitcoin clients. While they were indeed connected to nodes in the real network, those were not affected as we only modified packets towards our victim nodes.


\end{document}